\newtheorem{theorem}{Theorem}[section]
\newtheorem{note}[theorem]{\textbf{Note.}}
\newtheorem{definition}[theorem]{\textbf{Def.}}
\newtheorem{proposition}[theorem]{\textbf{Prop.}}
\tikzstyle{every node}=[font=\footnotesize]
\tikzstyle{etichetta}=[]
\tikzstyle{square}=[draw,outer sep=0pt,inner sep=-1.3em,regular polygon,regular polygon sides=4,,minimum size=13.5mm]
\tikzstyle{square2}=[square,dotted,black!50]
\tikzstyle{square3}=[draw,outer sep=0pt,inner sep=-1.3em,regular polygon,regular polygon sides=4,minimum size=8mm]
\tikzstyle{square4}=[square3,dotted,black!50]
\pgfmathsetmacro{\ra}{0.95}
\begin{document}
\title{MementoHash: A Stateful, Minimal Memory, Best Performing Consistent Hash Algorithm}

\author{
    \IEEEauthorblockN{
        Massimo Coluzzi,
        Amos Brocco,
        Alessandro Antonucci,
        Tiziano Leidi}\\
    \IEEEauthorblockA{
        Department of Innovative Technologies,\\
        University of Applied Sciences and Arts of Southern Switzerland,
        Lugano, Switzerland\\
    Email: \{massimo.coluzzi,amos.brocco,alessandro.antonucci,tiziano.leidi\}@supsi.ch}
}

% make the title area
\maketitle
\begin{abstract} Consistent hashing is used in distributed systems and networking applications to spread data evenly and efficiently across a cluster of nodes. In this paper, we present \emph{MementoHash}, a novel consistent hashing algorithm that eliminates known limitations of state-of-the-art algorithms while keeping optimal performance and minimal memory usage. We describe the algorithm in detail, provide a pseudo-code implementation, and formally establish its solid theoretical guarantees. To measure the efficacy of \emph{MementoHash}, we compare its performance, in terms of memory usage and lookup time, to that of state-of-the-art algorithms, namely, \emph{AnchorHash}, \emph{DxHash}, and \emph{JumpHash}. Unlike \emph{JumpHash}, \emph{MementoHash} can handle random failures. Moreover, \emph{MementoHash} does not require fixing the overall capacity of the cluster (as \emph{AnchorHash} and \emph{DxHash} do), allowing it to scale indefinitely. The number of removed nodes affects the performance of all the considered algorithms. Therefore, we conduct experiments considering three different scenarios: stable (no removed nodes), one-shot removals (90\% of the nodes removed at once), and incremental removals. We report experimental results that averaged a varying number of nodes from ten to one million. Results indicate that our algorithm shows optimal lookup performance and minimal memory usage in its best-case scenario. It behaves better than \emph{AnchorHash} and \emph{DxHash} in its average-case scenario and at least as well as those two algorithms in its worst-case scenario. 
\end{abstract}

\begin{IEEEkeywords}
Consistent hashing, load balancing, scalability.
\end{IEEEkeywords}

\section{Introduction}
A distributed system consists of multiple nodes that manage different kind of data, such as files for distributed storage, records for distributed databases, or requests for load balancers. An even distribution of these data units among the nodes is necessary to prevent specific nodes from being overloaded. Consistent hashing is widely known for its ability to evenly allocate the load across the system and minimize the number of data units that need to be remapped during cluster scaling \cite{saxenaanalysis}.
After the introduction of cloud infrastructures, \emph{elasticity}, namely the ability to scale quickly and efficiently, has become a concept of paramount importance. Elasticity demands efficient and high-performing consistent hashing algorithms because data units must be redistributed to maintain balance as nodes are added or removed.
Each data unit is uniquely identified by a key, and each node is mapped to a sequential integer called a \textit{bucket}. In the considered scenario, each node in a cluster of size $n$ is mapped to a unique value in the range $[0,n\!-\!1]$. The lookup operation maps each key to a bucket in a deterministic way, therefore invoking the lookup operation on the same key must return the same bucket as long as such a bucket is available.
The challenge is to efficiently map keys to buckets, knowing that each key represents a data unit and each bucket represents a node in a distributed system.
The scientific literature proposes several algorithms that strive to obtain the best balance between desirable properties, algorithmic complexity, and real-world performance. 
Our contribution, named \emph{MementoHash}, represents a novel consistent hashing algorithm that eliminates known limitations of state-of-the-art algorithms while keeping optimal performance and minimal memory usage.

\section{Related work}
\textit{Consistent hashing} algorithms are not a novel concept: the first example in the literature can be dated back to $1996$, when Thaler and Ravishankar proposed \textit{Rendezvous} \cite{thaler1996rendezvous}\cite{thaler1998using}, while the term was first used by Karger and co-authors in $1997$ \cite{karger1997consistent}. Other consistent hashing algorithms for non peer-to-peer environments followed. Overall, the most prominent ones, published between $1996$ and $2021$, are:

\begin{enumerate}
    \item[-] \textit{Rendezvous}: published by Thaler and Ravishankar in 1996 \cite{thaler1996rendezvous}\cite{thaler1998using}. 
    \item[-] \textit{Consistent Hashing Ring}: published by Karger et al. in 1997 \cite{karger1997consistent}\cite{karger1999consistent}.   
    \item[-] \textit{JumpHash}: published by Lamping and Veach in 2014 \cite{lamping2014fast}.
    \item[-] \textit{Multi-probe}: published by Appleton and O’Reilly in 2015 \cite{appleton2015multi}.
    \item[-] \textit{Maglev}: published by Eisenbud in 2016 \cite{eisenbud2016maglev}.
    \item[-] \textit{AnchorHash}: published by Mendelson et al. in 2020 \cite{mendelson2020anchorhash}.
    \item[-] \textit{DxHash}: published by Dong and Wang in 2021 \cite{dong2021dxhash}.
\end{enumerate}

For the sake of conciseness, in the following we omit the \textit{hash} suffix from the names of the aforementioned algorithms (e.g., \textit{Memento} instead of \textit{MementoHash}).

In a previous work \cite{posterIspass2023}\cite{sebd2023}, we implemented all the above algorithms in Java together with a benchmark tool specifically designed for consistent hashing algorithms \cite{isinGitHub}.
We compared them against the following metrics:
    \begin{itemize}
        \item[-] \textbf{memory usage}: the amount of memory used to store the internal data structure.
        \item[-] \textbf{initialization time}: the time needed to initialize the internal data structure.
        \item[-] \textbf{lookup time}: the time needed to map a given key to its related bucket.
        \item[-] \textbf{resize time}: the time to change the internal data structure when adding or removing buckets.
        \item[-] \textbf{balance}: the ability to spread the keys evenly among the buckets.
        \item[-] \textbf{resize balance}: the ability to spread the keys evenly among the buckets after resizing the cluster.
        \item[-] \textbf{monotonicity}: the ability to move only the keys involved in the resizing.
    \end{itemize}

We found that \textit{Anchor}, \textit{Dx}, and \textit{Jump} outmatch the other algorithms in all the considered metrics.
In particular, \textit{Jump} was shown to be the best-performing algorithm since it does not use any internal data structure. It uses minimal memory and is the fastest in any time metric because it does not access memory and runs at CPU speed. 
\textit{Jump} allows only the last inserted bucket to be removed, meaning that it is not able to handle the failure of a random node in the cluster. In practical settings, any computational resource, whether hardware or software-based, has the potential to experience failures. Even in failure-transparent architectures, where every node is guaranteed to be highly available, network failures can make one or more nodes unreachable. Therefore, despite its excellent performance, \textit{Jump} is an impractical solution for production environments.
\textit{Anchor} and \textit{Dx} address this limitation by using an internal data structure to keep track of all the cluster nodes (both working and not working) which causes them to use much more memory than \textit{Jump} and to be slower in all the time metrics. Moreover, those two algorithms require the overall capacity of the cluster to be defined during initialization. The capacity cannot change during execution, forcing an upper bound to the scalability of the cluster.
We argue that it is unnecessary to maintain a record of all potential nodes, even those that are not operational. While conducting our comparative analysis, we pondered whether it would be feasible to harness the outstanding capabilities of \emph{Jump} by incorporating a minimal data structure that only remembers the nodes that have failed. This line of thinking led us to the creation of \emph{MementoHash}, which overcomes the limitations of \emph{Jump} while maintaining nearly identical performance.

\section{Preliminaries}\label{preliminaries}
As previously mentioned, we consider the problem of distributing data units, which are uniquely identified by keys, to buckets. In this context we commonly turn our attention to hashing algorithms, which are deterministic functions that take an arbitrary amount of data as input and produce a fixed length output called \textit{hash value} or \textit{digest}. The digest, usually a number, can be mapped to a value in the interval $[0,n\!-\!1]$ using modular algebra. Hashing algorithms serve the purpose of evenly distributing keys among buckets, ensuring a balanced load distribution across all buckets. Additionally, they facilitate efficient determination of the mapping between keys and buckets. However, when the number of buckets is altered, a simple approach based on hash functions and modular algebra will result in a significant remapping of keys to different buckets. Consequently, a distributed system experiences the relocation of nearly all its data units across nodes each time a node joins or departs from the cluster. To address this problem, consistent hashing solutions have been devised. It is a class of distributed hashing algorithms that provide the following properties:

\begin{itemize}
    \item[] \textbf{balance}: keys are evenly distributed among buckets. Given $k$ keys and $n$ buckets, ideally, $\frac{k}{n}$ keys are mapped to each bucket.
    \item[] \textbf{minimal disruption}: the same key is always mapped to the same bucket as long as such a bucket is available. When a bucket leaves the cluster, only the keys mapped to such a bucket will move to other buckets, while keys previously mapped to the other buckets will not move.
    \item[] \textbf{monotonicity}: when a new bucket is added, keys only move from an existing bucket to the new one, but not from an existing one to another. Given $k$ keys and $n$ buckets, ideally, only $\frac{k}{n+1}$ keys should move to the new bucket.
\end{itemize}

Since many consistent hashing algorithms rely on non-consistent hashing functions we deem important to introduce the following assumption.

\begin{note}\label{note:uniformHashFunctions}
\textbf{(Uniform hash functions)} We assume the hash functions used inside the consistent hashing algorithms to produce a uniform distribution of the keys.
\end{note}

\subsection{Notation}
As stated in previous sections, we can map each node of a distributed system to an integer in the range $[0,n\!-\!1]$ called a \emph{bucket}. Therefore, we can represent a cluster as an array of buckets.

\begin{definition}[\textbf{b-array}]
With the term \emph{b-array} we refer to the array of buckets representing a cluster of nodes in a distributed system. A \emph{b-array} $\mathcal{N}$ of size $n$ is an array of integers where every position contains a value in the range $[0,n\!-\!1]$
\end{definition}

\begin{definition}[\textbf{working bucket}]
With the term \textit{working bucket} we identify a bucket related to a working node.
\end{definition}

Moving forward, this paper will adhere to the following notation:
\begin{definition}[\textbf{notation}]
We will use $\subset$ to denote a strict inclusion and $\subseteq$ to denote a non-strict inclusion.\\
Given a b-array:
\begin{itemize}
    \item[-] $\mathcal{N}:=[0,\dots,n\!-\!1]$ is the content of the b-array and $n := |\mathcal{N}|$ is its size.
    \item[-] $\mathcal{W}\subseteq\mathcal{N}$ is the set of working buckets.
    \item[-] $\mathcal{W}_b:=\mathcal{W}\backslash\{b\}$ is the set of working buckets after removing bucket $b$. 
    \item[-] $\langle b\rightarrow c,p \rangle$ represents the replacement of the failing bucket $b$ with the bucket $c$ (more details in Sec.~\ref{sec:MementoHash}).
    \item[-] $\mathcal{R} := \{ \langle b \rightarrow c, p \rangle | b,c\!\in\!\mathcal{N}, p\!\leq\!n\}$ is the set of replacements and $r := |\mathcal{R}|$ is the number of replacements.
\end{itemize}
\end{definition}

\section{State of the art}
\label{sec: state-of-the-art}
In this section, we will briefly introduce the \textit{Jump}, \textit{Anchor}, and \textit{Dx} algorithms. We will discuss their strengths and weaknesses to allow a better understanding of the improvements introduced by \textit{Memento}.
 
\subsection{The Jump algorithm}
\label{sec:jump}

By representing a cluster as a b-array, \textit{Jump} \cite{lamping2014fast} assumes every bucket to be working and the b-array to be sorted.

\begin{figure}[H]
\centering
\begin{tikzpicture}[]
\node [square] (0) at (1,2) {$0$};
\node [square] (1) at (1+\ra,2) {$1$};
\node [square] (2) at (1+2*\ra,2) {$2$};
\node [square] (3) at (1+3*\ra,2) {$3$};
\node [square] (4) at (1+4*\ra,2) {$\ldots$};
\node [square] (5) at (1+5*\ra,2) {$n\!-\!2$};
\node [square] (6) at (1+6*\ra,2) {$n\!-\!1$};
\end{tikzpicture}
\caption{\textit{Jump}'s representation of a cluster}
\label{fig: init b-array}
\end{figure}
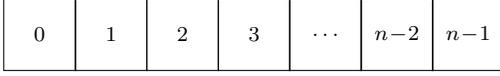

\textit{Jump} assumes buckets are sorted from $0$ to $n\!-\!1$ and cannot change their position. Therefore, it only needs to store the size of the b-array (see, e.g., Fig.~\ref{fig: init b-array}). When the cluster scales up, \textit{Jump} increases the buckets count, assuming new buckets to be added to the tail of the b-array with values $n$, $n\!+\!1$, $\ldots$ (see, e.g., Fig.~\ref{fig: init b-array adding buckets}).
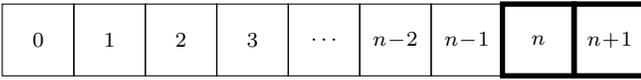
\begin{figure}[H]
\centering
\begin{tikzpicture}[]
\node [square] (0) at (1,2) {$0$};
\node [square] (1) at (1+\ra,2) {$1$};
\node [square] (2) at (1+2*\ra,2) {$2$};
\node [square] (3) at (1+3*\ra,2) {$3$};
\node [square] (4) at (1+4*\ra,2) {$\ldots$};
\node [square] (5) at (1+5*\ra,2) {$n\!-\!2$};
\node [square] (6) at (1+6*\ra,2) {$n\!-\!1$};
\node [square,line width=2]  (7) at (1+7*\ra,2) {$n$};
\node [square,line width=2]  (8) at (1+8*\ra,2) {$n\!+\!1$};
\end{tikzpicture}
\caption{Jump Hash: Adding two buckets}
\label{fig: init b-array adding buckets}
\end{figure}

When the cluster scales down, \textit{Jump} reduces the buckets count, assuming buckets to be removed from the tail, first $n\!-\!1$, then $n\!-\!2$, ... (see, e.g., Fig.~\ref{fig: init b-array removing buckets})

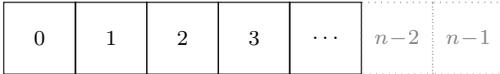
\begin{figure}[H]
\centering
\begin{tikzpicture}[]
\node [square] (0) at (1,2) {$0$};
\node [square] (1) at (1+\ra,2) {$1$};
\node [square] (2) at (1+2*\ra,2) {$2$};
\node [square] (3) at (1+3*\ra,2) {$3$};
\node [square] (4) at (1+4*\ra,2) {$\ldots$};
\node [square2] (5) at (1+5*\ra,2) {\color{black!50}{$n\!-\!2$}};
\node [square2] (6) at (1+6*\ra,2) {\color{black!50}{$n\!-\!1$}};
\end{tikzpicture}
\caption{Jump Hash: Removing two buckets}
\label{fig: init b-array removing buckets}
\end{figure}

The algorithm takes a key and the size of the b-array as input parameters and returns the bucket such a key belongs (e.g., $jump(key,n) \rightarrow b\in\mathcal{N}$).
Its output depends only on the provided parameters. \textit{Jump} holds the properties of \textit{balance}, \textit{monotonicity}, and \textit{minimal disruption}. As an example, let us assume $jump(key,10) \rightarrow 5$, then also $jump(key,9) \rightarrow 5$, $jump(key,8) \rightarrow 5$, $\ldots$, $jump(key,6) \rightarrow 5$.
The bucket returned by \textit{Jump} will change only when $5$ is no more a working bucket (e.g., $jump(key,5) \rightarrow 2$ ).
Each key is always mapped to the same bucket until such a bucket is removed. This behavior describes the property of \textit{minimal disruption}.
\textit{Jump} operates without an internal state and does not perform any memory access, allowing it to run at CPU speed. In spite of its slower theoretical growth rate, this characteristic makes \textit{Jump} the fastest algorithm in practical terms \cite{posterIspass2023}\cite{sebd2023}.
In an ideal scenario where failures are absent, it would be possible to scale our cluster by adding and removing buckets in a \emph{Last-In-First-Out} (LIFO) order. In such a case, \textit{Jump} would be the most effective consistent-hashing algorithm available. However, in real-world environments, any node can fail at any time. Furthermore, we could need to shut down for maintenance the node mapped to the bucket $0$. \textit{Jump} does not allow you to do that unless you shut down the entire cluster. This limitation makes \textit{Jump} impractical for real-world environments.

\subsection{The Anchor algorithm}
To address the limitations of \textit{Jump}, \textit{Anchor} \cite{mendelson2020anchorhash} represents all the possible buckets up to the maximum size reachable by the cluster. This choice allows \textit{Anchor} to mark which buckets are working and which are not, allowing it to handle random failures (see, e.g., Fig.~\ref{fig: anchor b-array}). Let us call $W_b$, the set of working buckets remaining after removing $b$. When a bucket fails, the algorithm maps $\mathcal{W}_b$ to $b$. Every removed bucket $b$ is mapped to the set of buckets that were working after the removal of $b$.
The basic idea behind \textit{Anchor} is to use a hash function to map a key to one of the buckets $b$ in the cluster. If $b$ is working, the algorithm is done. Otherwise, the key is mapped to a bucket $c$ in the subset of buckets $\mathcal{W}_b$ that were working when $b$ was removed. The algorithm iterates the same logic on $c$. If $c$ is working, the algorithm is done. Otherwise, the key is mapped to a bucket $d$ in the subset of buckets $\mathcal{W}_c$ that worked when $c$ was removed. Since $c\in\mathcal{W}_b$, $c$ worked when $b$ was removed. Therefore, calling $\mathcal{W}$ the set of working buckets, we have that $\mathcal{W}\subset\mathcal{W}_c\subset\mathcal{W}_b$. Every iteration will work on a smaller set, and a working bucket will eventually be found.

\begin{figure}[H]
\centering
\begin{tikzpicture}[scale=.6]
\tikzstyle{every node}=[font=\tiny]
\node [square4]  (1) at (1,2) {$0$};
\node [square3]  (2) at (1+\ra,2) {$1$};
\node [square4]  (3) at (1+2*\ra,2) {$2$};
\node [square3]  (4) at (1+3*\ra,2) {$3$};
\node [square3]  (5) at (1+4*\ra,2) {$4$};
\node [square4]  (6) at (1+5*\ra,2) {$5$};
\node [square3]  (7) at (1+6*\ra,2) {$\ldots$};
\node [square3]  (8) at (1+7*\ra,2) {$n\!-\!2$};
\node [square3]  (9) at (1+8*\ra,2) {$n\!-\!1$};
\node [square4] (10) at (1+9*\ra,2) {$n$};
\node [square4] (11) at (1+10*\ra,2) {$n\!+\!1$};
\node [square4] (12) at (1+11*\ra,2) {$\ldots$};
\node [square4] (13) at (1+12*\ra,2) {$a\!-\!2$};
\node [square4] (14) at (1+13*\ra,2) {$a\!-\!1$};
\end{tikzpicture}
\caption{Cluster representation for Anchor}
\label{fig: anchor b-array}
\end{figure}
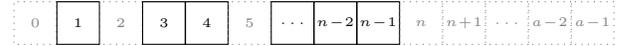

The described solution takes $O(ln(\frac{a}{w}))$ iterations to look up a key, where $a$ is the overall capacity of the cluster (working and nonworking buckets). Still, it uses a considerable amount of memory. To overcome this issue, the authors of \textit{Anchor} suggest an in-place solution that leverages four arrays of integers to store the required information. The in-place solution uses less memory but takes $O([ln(\frac{a}{w})]^2)$ steps to perform a lookup. Both versions rely on the assumption of keeping track of all the available nodes in the cluster.
The size of the internal data structure cannot change to ensure \textit{monotonicity} and \textit{minimal disruption}. Therefore, from start the arrays must be big enough to contain any possible buckets.
Foreseeing from the early beginning the maximum size reachable by the cluster is not a trivial task. Capacity can be ensured by instantiating large arrays, but overestimating will consume resources and slow down performance.

\subsection{The Dx algorithm}
\textit{Dx} \cite{dong2021dxhash} reduces the memory consumption by using a bit-array to mark the availability of the buckets. It is a remarkable improvement compared to the four integer arrays used by \textit{Anchor}. Still, it uses an amount of memory proportional to the overall capacity of the cluster. In order to compute a position in the range $[0,a\!-\!1]$, \textit{Dx} uses a pseudo-random function $R$ initialized with the key as the seed. Accordingly, a sequence of buckets in the form $R(k), R(R(k)), R(R(R(k)))...$ can be generated, and the first working bucket is chosen (see, e.g., Fig.~\ref{fig: dx lookup}).
\textit{Dx} trades the improvement in memory consumption in terms of lookup time. It takes $O(\frac{a}{w})$ steps to perform a lookup while suffering from the same limitations of \textit{Anchor} (i.e., upper bound in the overall capacity of the cluster).

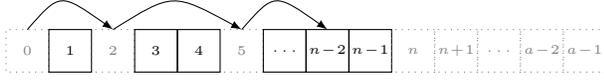
\begin{figure}[H]
\centering
\begin{tikzpicture}[scale=.6]
\tikzstyle{every node}=[font=\tiny]
\node [square4]   (1) at (1,2) {$0$};
\node [square3]   (2) at (1+\ra,2) {$1$};
\node [square4]   (3) at (1+2*\ra,2) {$2$};
\node [square3]   (4) at (1+3*\ra,2) {$3$};
\node [square3]   (5) at (1+4*\ra,2) {$4$};
\node [square4]   (6) at (1+5*\ra,2) {$5$};
\node [square3]   (7) at (1+6*\ra,2) {$\ldots$};
\node [square3]   (8) at (1+7*\ra,2) {$n\!-\!2$};
\node [square3]   (9) at (1+8*\ra,2) {$n\!-\!1$};
\node [square4]  (10) at (1+9*\ra,2) {$n$};
\node [square4]  (11) at (1+10*\ra,2) {$n\!+\!1$};
\node [square4]  (12) at (1+11*\ra,2) {$\ldots$};
\node [square4]  (13) at (1+12*\ra,2) {$a\!-\!2$};
\node [square4]  (14) at (1+13*\ra,2) {$a\!-\!1$};
\draw [black,->,-latex] (1.north) .. controls (0.3+1.5*\ra,3.3) .. (3.north);
\draw [black,->,-latex] (3.north) .. controls (0.9+3.5*\ra,3.3) .. (6.north);
\draw [black,->,-latex] (6.north) .. controls (0.3+6.5*\ra,3.3) .. (8.north);
\end{tikzpicture}
\caption{Lookup process of Dx}
\label{fig: dx lookup}
\end{figure}

\section{MementoHash}\label{sec:MementoHash}
The basic idea behind the \textit{Memento} algorithm is to use memory just to remember the removed buckets. It starts with an empty data structure and relies on \textit{Jump} as its core engine. When all the buckets are working, or when buckets are removed in \textit{LIFO} order, \textit{Memento} works exactly like \textit{Jump}. If a random bucket is removed (i.e., the related node fails) \textit{Memento} stores this information in its internal data structure and redistributes the keys among the remaining buckets.

\subsection{Dense b-arrays}
When we initially set up a new cluster, we associate each node to a sequential bucket. We can represent the cluster as a b-array where each bucket represents a working node and is positioned at the corresponding index. \textit{Jump} assumes the b-array always to be in this configuration (see, e.g., Fig.~\ref{fig: N0}). 

\begin{figure}[H]
\centering
\begin{tikzpicture}[scale=.6]
\tikzstyle{every node}=[font=\tiny]
\node [etichetta] (0) at (-.5,2.8) {Index};
\node [etichetta] (0) at (-.5,2) {Bucket};
\node [etichetta] (0) at (1,2.8) {$0$};
\node [etichetta] (1) at (1+\ra,2.8) {$1$};
\node [etichetta] (2) at (1+2*\ra,2.8) {$2$};
\node [etichetta] (3) at (1+3*\ra,2.8) {$3$};
\node [etichetta] (4) at (1+4*\ra,2.8) {$4$};
\node [etichetta] (5) at (1+5*\ra,2.8) {$5$};
\node [etichetta] (6) at (1+6*\ra,2.8) {$6$};
\node [etichetta] (7) at (1+7*\ra,2.8) {$7$};
\node [etichetta] (8) at (1+8*\ra,2.8) {$8$};
\node [etichetta] (9) at (1+9*\ra,2.8) {$9$};
\node [square3] (0) at (1,2) {$0$};
\node [square3] (1) at (1+\ra,2) {$1$};
\node [square3] (2) at (1+2*\ra,2) {$2$};
\node [square3] (3) at (1+3*\ra,2) {$3$};
\node [square3] (4) at (1+4*\ra,2) {$4$};
\node [square3] (5) at (1+5*\ra,2) {$5$};
\node [square3] (6) at (1+6*\ra,2) {$6$};
\node [square3] (7) at (1+7*\ra,2) {$7$};
\node [square3] (8) at (1+8*\ra,2) {$8$};
\node [square3] (9) at (1+9*\ra,2) {$9$};
\end{tikzpicture}
\caption{Initial state of a b-array of size 10}
\label{fig: N0}
\end{figure}
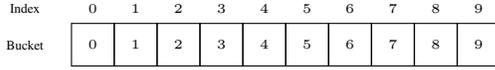

\begin{note}\label{node:accessByIndex}
For clarity in the upcoming explanations, remember that \textit{a b-array} $\mathcal{N}$ is always accessed by index (or position), just like any regular array. When we use the term \textit{bucket} we refer to the identifier of the corresponding resource, while when using \textit{position} or \textit{index} we refer to a location within the \textit{b-array}.
In a \textit{b-array} of size $n$, each bucket $b$ has a value within the range $[0, n\!-\!1]$. As a result, $b$ can also serve as an index for accessing the value $\mathcal{N}[b]$. Initially, each bucket is located at its corresponding index, i.e., $\mathcal{N}[b] = b\  \forall b$ (refer to Fig.~\ref{fig: N0}). When a bucket $b$ is removed, it gets replaced by a bucket $c$. Therefore, $\mathcal{N}[b] = c$.
\end{note}

\begin{definition}[\textbf{dense b-array}]
Given a b-array $\mathcal{N}$ of size $n$ and a value $w \leq n$. We define the b-array as \emph{dense} up to $w$, $D_w(\mathcal{N})$ if every index between $0$ and $w\!-\!1$ contains a working bucket.
\end{definition}

The b-array representing the initial configuration of a cluster of size $n=w=|\mathcal{N}|$ is dense (i.e., $D_n(\mathcal{N})$ holds), and the buckets are sorted from $0$ to $n\!-\!1$. If we remove the last bucket, $w=|\mathcal{N}|-1$, hence $D_{n-1}(\mathcal{N})$ holds. It is the working condition for \textit{Jump}. On the other hand, if we remove a random bucket other than the last (i.e., the related node fails), we create a gap, and neither $D_n(\mathcal{N})$ nor $D_{n-1}(\mathcal{N})$ hold.

\subsection{Maintaining a dense b-array}
In the general sense, random node removals prevent us from using \textit{Jump}.
In the following we thus present the basic idea that enables us to keep the b-array dense.
Let us assume to have a b-array $\mathcal{N}_0$ of size $10$ (Fig.~\ref{fig: N0-N1-N2}), $D_{10}(\mathcal{N}_0)$ holds. If we remove bucket $9$, the condition $D_{9}(\mathcal{N}_1)$ holds. Subsequently, if we remove bucket $5$, we create a gap, and neither $D_{9}(\mathcal{N}_1)$ nor $D_8(\mathcal{N}_1)$ would hold. Accordingly, to fill the gap, we can copy bucket $8$ in position $5$ to obtain $\mathcal{N}_2$, where $D_8(\mathcal{N}_2)$ and $D_{9}(\mathcal{N}_2)$ hold again.

\begin{figure}[H]
\centering
\begin{tikzpicture}[scale=.6]
\tikzstyle{every node}=[font=\tiny]
\node [etichetta] (0) at (-.1,2) {\normalsize $\mathcal{N}_0$};
\node [etichetta] (0) at (-.1,0) {\normalsize $\mathcal{N}_1$};
\node [etichetta] (0) at (-.1,-2) {\normalsize $\mathcal{N}_2$};
\node [etichetta,align=left,text width = 8mm] (0) at (0.4+11*\ra,2) {Initial state};
\node [etichetta,align=left,text width = 8mm] (0) at (0.4+11*\ra,0) {Removal of node 9};
\node [etichetta,align=left,text width = 8mm] (0) at (0.4+11*\ra,-2) {Replace bucket 5 with 8};

\node [etichetta] (0) at (1,2.8) {$0$};
\node [etichetta] (1) at (1+\ra,2.8) {$1$};
\node [etichetta] (2) at (1+2*\ra,2.8) {$2$};
\node [etichetta] (3) at (1+3*\ra,2.8) {$3$};
\node [etichetta] (4) at (1+4*\ra,2.8) {$4$};
\node [etichetta] (5) at (1+5*\ra,2.8) {$5$};
\node [etichetta] (6) at (1+6*\ra,2.8) {$6$};
\node [etichetta] (7) at (1+7*\ra,2.8) {$7$};
\node [etichetta] (8) at (1+8*\ra,2.8) {$8$};
\node [etichetta] (9) at (1+9*\ra,2.8) {$9$};
\node [square3] (0) at (1,2) {$0$};
\node [square3] (1) at (1+\ra,2) {$1$};
\node [square3] (2) at (1+2*\ra,2) {$2$};
\node [square3] (3) at (1+3*\ra,2) {$3$};
\node [square3] (4) at (1+4*\ra,2) {$4$};
\node [square3] (5) at (1+5*\ra,2) {$5$};
\node [square3] (6) at (1+6*\ra,2) {$6$};
\node [square3] (7) at (1+7*\ra,2) {$7$};
\node [square3] (8) at (1+8*\ra,2) {$8$};
\node [square3] (9) at (1+9*\ra,2) {$9$};
\draw [|-|] (1-.5*\ra,1.2) -- (1+9.5*\ra,1.2);
\node [etichetta]  (0) at (4.5*\ra,1.0) {$n=10$, $w=10$};

\node [square3] (0) at (1,0) {$0$};
\node [square3] (1) at (1+\ra,0) {$1$};
\node [square3] (2) at (1+2*\ra,0) {$2$};
\node [square3] (3) at (1+3*\ra,0) {$3$};
\node [square3] (4) at (1+4*\ra,0) {$4$};
\node [square3] (5) at (1+5*\ra,0) {$5$};
\node [square3] (6) at (1+6*\ra,0) {$6$};
\node [square3] (7) at (1+7*\ra,0) {$7$};
\node [square3] (8) at (1+8*\ra,0) {$8$};
\node [square4] (9) at (1+9*\ra,0) {$9$};
\draw [|-|] (1-.5*\ra,-.8) -- (1+8.5*\ra,-.8);
\node [etichetta]  (0) at (4.5*\ra,-1) {$n=9$, $w=9$};
\node [square3] (0) at (1,-2) {$0$};
\node [square3] (1) at (1+\ra,-2) {$1$};
\node [square3] (2) at (1+2*\ra,-2) {$2$};
\node [square3] (3) at (1+3*\ra,-2) {$3$};
\node [square3] (4) at (1+4*\ra,-2) {$4$};
\node [square3] (5c) at (1+5*\ra,-2) {$\mathbf{8}$};
\node [square3] (6) at (1+6*\ra,-2) {$6$};
\node [square3] (7) at (1+7*\ra,-2) {$7$};
\node [square3] (8c) at (1+8*\ra,-2) {$8$};
\node [square4] (9) at (1+9*\ra,-2) {$9$};
\draw [black,->,-latex] (8c.north) .. controls (1+6.5*\ra+1.4,-0.9) and (1+6.5*\ra-1.4,-0.9) .. (5c.north);
\draw [|-|] (1-.5*\ra,-2.8) -- (1+7.5*\ra,-2.8);
\node [etichetta]  (0) at (4.5*\ra,-3) {$n=9$, $w=8$};
\end{tikzpicture}
\caption{Removing a bucket: the last (9) or any other than the last (5)}
\label{fig: N0-N1-N2}
\end{figure}
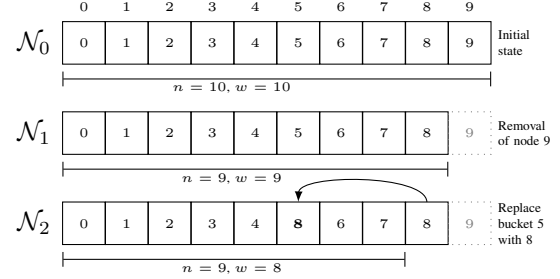

Since we removed bucket $5$, the new size of our cluster is $w=8$. So, we will consider only the buckets in the positions from $0$ to $7$. After removing bucket $5$ the working buckets are $[0,8] \backslash \{5\}$, which are precisely the buckets available from position $0$ to $7$. If we remove another bucket, for example, bucket $1$, we fill the gap by copying bucket $7$ in position $1$ to obtain $\mathcal{N}_3$ (Fig. \ref{fig: N3}). The working buckets are now $[0,8] \backslash \{1,5\}$, and the size of the cluster becomes $7$.

\begin{proposition} \label{prop:lastRepacingIsNewWorkingSize}
The new size of the cluster always corresponds to the index of the bucket we use to fill the gap.
\end{proposition}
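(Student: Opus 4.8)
The plan is to derive the proposition from a single invariant maintained by \textit{Memento} throughout its execution: at every step the b-array $\mathcal{N}$ is dense up to $w$, where $w:=|\mathcal{W}|$ is the current number of working buckets. Equivalently, the $w$ working buckets occupy exactly the positions $0,\dots,w\!-\!1$, and every position of index $\geq w$ is irrelevant, holding a non-working or stale value. Once this invariant is available, the statement is almost immediate: gap-filling is triggered only when a working bucket sitting at some position $p<w\!-\!1$ is removed; since $D_w(\mathcal{N})$ holds, position $w\!-\!1$ contains a working bucket, and that is precisely the bucket copied into position $p$. The removal drops the working-bucket count from $w$ to $w\!-\!1$, so the new cluster size is $w\!-\!1$, which is exactly the index from which the filler bucket was taken.

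I would prove the invariant by induction on the number of removals performed so far. In the base case no bucket has been removed: the initial configuration satisfies $\mathcal{N}[b]=b$ for all $b$ and $w=n$, so $D_n(\mathcal{N})$ holds (Fig.~\ref{fig: N0}). For the inductive step, assume $D_w(\mathcal{N})$ holds and let bucket $b$, located at position $p$, be removed; by the inductive hypothesis the working buckets fill positions $0,\dots,w\!-\!1$, hence $0\leq p\leq w\!-\!1$. If $p=w\!-\!1$, then $b$ is the last working bucket, no gap is created, $w$ decreases to $w\!-\!1$, and positions $0,\dots,w\!-\!2$ are untouched and still hold working buckets, so $D_{w-1}(\mathcal{N})$ holds. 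If $p<w\!-\!1$, the removal opens a gap at position $p$; \textit{Memento} copies the (working) bucket at position $w\!-\!1$ into position $p$ and sets the working count to $w\!-\!1$. After the move, the positions $0,\dots,w\!-\!2$ are exactly the old positions $0,\dots,w\!-\!1$ with index $w\!-\!1$ discarded and the gap at $p$ repaired by a working bucket, so each of them holds a working bucket and $D_{w-1}(\mathcal{N})$ holds.

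With the invariant established, the proposition closes: gap-filling happens only in the second case of the inductive step, where the filler bucket is drawn from index $w\!-\!1$ (the cluster size before the removal) and the cluster size after the removal is also $w\!-\!1$. I expect the only delicate point to be the bookkeeping in that second case --- checking that relocating the bucket from position $w\!-\!1$ to position $p$ genuinely restores density, i.e., that the post-move set of occupied working positions is $\{0,\dots,w\!-\!1\}\backslash\{w\!-\!1\}$ with $p$ refilled, and that the stale value left behind at the old position $w\!-\!1$ causes no harm because it now lies outside the working range. Everything else follows directly from the definitions of \emph{b-array}, \emph{working bucket}, and \emph{dense b-array}.
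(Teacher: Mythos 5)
Your argument is correct, and in fact the paper itself offers no formal proof of this proposition --- it is stated after the illustrative examples of Figures~\ref{fig: N0-N1-N2} and \ref{fig: N3}, so your induction on the density invariant $D_w(\mathcal{N})$ is exactly a rigorous version of the reasoning the paper only sketches. The inductive structure (base case at $w=n$, split on $p=w\!-\!1$ vs.\ $p<w\!-\!1$) is sound, and once density is in hand the proposition is immediate because the filler index and the post-removal working count coincide at $w\!-\!1$.

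Two presentational nuances are worth making explicit. First, your ``$\mathcal{N}$'' is the conceptual \emph{chain-resolved} b-array in which replacements are applied eagerly; in the paper's stored representation, the replacement is keyed on the removed bucket's identifier via $\mathcal{R}$, so a position $p<w$ may literally hold a removed bucket whose chain must be followed at lookup time (witness $\mathcal{N}_4[5]=8$ in Sec.~\ref{sec:removingAReplacingBucket}). Your proof remains valid because the two representations give identical lookup answers, but the invariant you prove is $D_w$ of the resolved view, not of the raw array, and it would be cleaner to say so. Second, your statement that ``gap-filling is triggered only when a working bucket sitting at some position $p<w\!-\!1$ is removed'' does not quite match Algorithm~\ref{algo:remove}: once $\mathcal{R}\neq\emptyset$, a replacement $\langle b\rightarrow w\!-\!1,l\rangle$ is recorded even when the removed bucket sits at position $w\!-\!1$ --- the ``replace a bucket with itself'' edge case of Sec.~5.4. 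This does not break your proof, because in that case the proposition is a self-replacement with filler index and new size both equal to $w\!-\!1$, and your induction already handles $p=w\!-\!1$ separately; it is simply a small misdescription of when the algorithm writes a replacement tuple.
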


\begin{figure}[H]
\centering
\begin{tikzpicture}[scale=.6]
\tikzstyle{every node}=[font=\tiny]
\node [etichetta]  (0) at (-.1,-2) {\normalsize $\mathcal{N}_3$};
\node [etichetta,align=left,text width = 8mm]  (0) at (0.4+11*\ra,-2) {Replace bucket 1 with 7};
\node [square3] (0) at (1,-2) {$0$};
\node [square3] (1c) at (1+\ra,-2) {$\mathbf{7}$};
\node [square3] (2) at (1+2*\ra,-2) {$2$};
\node [square3] (3) at (1+3*\ra,-2) {$3$};
\node [square3] (4) at (1+4*\ra,-2) {$4$};
\node [square3] (5c) at (1+5*\ra,-2) {$\mathbf{8}$};
\node [square3] (6) at (1+6*\ra,-2) {$6$};
\node [square3] (7c) at (1+7*\ra,-2) {$7$};
\node [square3] (8c) at (1+8*\ra,-2) {$8$};
\node [square4] (9) at (1+9*\ra,-2) {$9$};
\draw [black,->,-latex] (7c.north) .. controls (0.2+6.5*\ra,-0.5) and (0.2+3.5*\ra,-0.5) .. (1c.north);
\draw [|-|] (1-.5*\ra,-2.8) -- (1+6.5*\ra,-2.8);
\node [etichetta]  (0) at (4.5*\ra,-3) {$n=9$, $w=7$};
\end{tikzpicture}
\caption{Filling the gap after removal of bucket 1}
\label{fig: N3}
\end{figure}
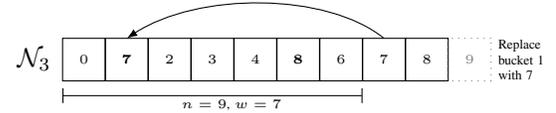

If we compare the b-array versions $\mathcal{N}_1$, $\mathcal{N}_2$, and $\mathcal{N}_3$ we can notice that they are identical except for the positions of the removed buckets. Therefore, instead of representing the whole b-array, we can represent only the differences between the initial and the current state of the b-array (Fig. \ref{fig: N1-N2-N3 Comparison}).

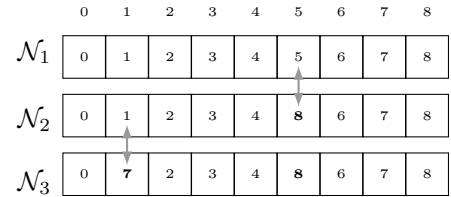
\begin{figure}[H]
\centering
\begin{tikzpicture}[scale=.6]
\tikzstyle{every node}=[font=\tiny]
\node [etichetta] (0) at (-.1,2.1) {\normalsize $\mathcal{N}_1$};
\node [etichetta] (0) at (-.1,0.6) {\normalsize $\mathcal{N}_2$};
\node [etichetta] (0) at (-.1,-.8) {\normalsize $\mathcal{N}_3$};

\node [etichetta] (0) at (1,3) {$0$};
\node [etichetta] (1) at (1+\ra,3) {$1$};
\node [etichetta] (2) at (1+2*\ra,3) {$2$};
\node [etichetta] (3) at (1+3*\ra,3) {$3$};
\node [etichetta] (4) at (1+4*\ra,3) {$4$};
\node [etichetta] (5) at (1+5*\ra,3) {$5$};
\node [etichetta] (6) at (1+6*\ra,3) {$6$};
\node [etichetta] (7) at (1+7*\ra,3) {$7$};
\node [etichetta] (8) at (1+8*\ra,3) {$8$};
\node [square3] (0) at (1,2) {$0$};
\node [square3] (1) at (1+\ra,2) {$1$};
\node [square3] (2) at (1+2*\ra,2) {$2$};
\node [square3] (3) at (1+3*\ra,2) {$3$};
\node [square3] (4) at (1+4*\ra,2) {$4$};
\node [square3] (5) at (1+5*\ra,2) {$5$};
\node [square3] (6) at (1+6*\ra,2) {$6$};
\node [square3] (7) at (1+7*\ra,2) {$7$};
\node [square3] (8) at (1+8*\ra,2) {$8$};
\node [square3] (0) at (1,0.7) {$0$};
\node [square3] (1) at (1+\ra,0.7) {$1$};
\node [square3] (2) at (1+2*\ra,0.7) {$2$};
\node [square3] (3) at (1+3*\ra,0.7) {$3$};
\node [square3] (4) at (1+4*\ra,0.7) {$4$};
\node [square3] (5) at (1+5*\ra,0.7) {$\mathbf{8}$};
\node [square3] (6) at (1+6*\ra,0.7) {$6$};
\node [square3] (7) at (1+7*\ra,0.7) {$7$};
\node [square3] (8) at (1+8*\ra,0.7) {$8$};
\node [square3] (0) at (1,-.6) {$0$};
\node [square3] (1) at (1+\ra,-.6) {$\mathbf{7}$};
\node [square3] (2) at (1+2*\ra,-.6) {$2$};
\node [square3] (3) at (1+3*\ra,-.6) {$3$};
\node [square3] (4) at (1+4*\ra,-.6) {$4$};
\node [square3] (5) at (1+5*\ra,-.6) {$\mathbf{8}$};
\node [square3] (6) at (1+6*\ra,-.6) {$6$};
\node [square3] (7) at (1+7*\ra,-.6) {$7$};
\node [square3] (8) at (1+8*\ra,-.6) {$8$};
\draw [black,thick,<->,latex-latex,black!40] (1+\ra,0.5) -- (1+\ra,-0.4);
\draw [black,thick,<->,latex-latex,black!40] (1+5*\ra,1.8) -- (1+5*\ra,0.9);
\end{tikzpicture}
\caption{Comparing versions $\mathcal{N}_1$, $\mathcal{N}_2$, and $\mathcal{N}_3$}
\label{fig: N1-N2-N3 Comparison}
\end{figure}

Each bucket corresponds to its position in the initial state $\mathcal{N}_0$, so we do not need to represent it; we just need the size of the b-array. Removing the last bucket just reduces the size of the b-array. Also in this case, we just need to update its size ($n\gets n\!-\!1$). For every removed bucket $b$, other than the last one, we remember its replacement in the form $b\rightarrow c$ where $c$ is the replacing bucket.
As we will discuss later, when we restore the buckets, we should do it in reverse order, from the last removed bucket backward. This approach preserves the properties of \textit{balance} and \textit{minimal disruption} of the algorithm.
Therefore, when a bucket $b$ is removed, we store the replacement $b \rightarrow c$ together with the previously removed bucket $p$.

\begin{definition}[\textbf{replacement}]\label{def:replacement}
A replacement is defined by a tuple $\langle b \rightarrow c, p\rangle$ where $b$ is the removed bucket, $c$ is the replacing one, and $p$ is the previously removed bucket. 
\end{definition}

\begin{definition}[\textbf{replacement set}]\label{def:replacementset}
The replacement set $\mathcal{R}$ is the data structure used to store the replacements $\langle b\rightarrow c,p\rangle$ for all the removed buckets. We are only interested in knowing if there exists a replacement for a given bucket $b$. Therefore, we can implement $\mathcal{R}$ using a hash table that takes $b$ as the key and returns $\langle c,p\rangle$ as the value.
This allows us to add a replacement for bucket $b$ into $\mathcal{R}$, remove the replacement for bucket $b$ from $\mathcal{R}$, and check if a replacement for the bucket $b$ exists in constant time.
\end{definition}

Given a cluster with $n$ initial nodes, all the buckets are working and are in the range $[0,n\!-\!1]$. When the first bucket $b$ is removed, we set $p = n$ in the replacement. This way, when the first removed bucket gets restored, the next node added to the cluster will be mapped to the bucket $n$ as expected.

\begin{proposition}
The number of working buckets $w=n-r$ is given by the difference between the size of the b-array $n = |\mathcal{N}|$ and the number of replacements $r = |\mathcal{R}|$ (i.e., the number of removed buckets).
\end{proposition}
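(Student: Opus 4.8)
The plan is to prove the identity by induction on the number of elementary reconfiguration operations (bucket removals and bucket additions) applied to the cluster, showing that each such operation preserves the invariant $w = n - r$ while changing $w$ by exactly $\pm 1$ consistently with the induced changes to $n$ and $r$.

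For the base case I would take the freshly initialized cluster of $n$ nodes: by construction $\mathcal{N} = [0,n\!-\!1]$ and every bucket is working, so $\mathcal{W} = \mathcal{N}$ and $w = n$; no bucket has been removed, so $\mathcal{R} = \emptyset$ and $r = 0$, whence $w = n = n - r$. For the inductive step I would assume $w = n - r$ in the current state and split on the next operation into four cases. (i) \emph{Removal of the last bucket}: this performs only $n \gets n-1$ and leaves $\mathcal{R}$ untouched, so $r$ is unchanged and $w \gets w-1$, and indeed $(w-1) = (n-1) - r$. (ii) \emph{Removal of a bucket $b$ other than the last}: by Definition~\ref{def:replacement} exactly one new tuple $\langle b \rightarrow c, p \rangle$ is inserted into $\mathcal{R}$ while $n$ stays fixed; since $\mathcal{R}$ is keyed by the removed bucket (Definition~\ref{def:replacementset}) and $b$ was a working bucket, it had no prior entry, so $r \gets r+1$ and $w \gets w-1$, giving $(w-1) = n - (r+1)$. (iii) \emph{Addition of a bucket while $\mathcal{R} = \emptyset$}: a fresh bucket is appended, $n \gets n+1$, $r$ stays $0$, $w \gets w+1$, and $(w+1) = (n+1) - r$. (iv) \emph{Addition of a bucket while $\mathcal{R} \neq \emptyset$}: the most recently removed bucket is restored and its replacement deleted from $\mathcal{R}$, so $n$ is unchanged, $r \gets r-1$, $w \gets w+1$, and $(w+1) = n - (r-1)$. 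In every case the invariant survives, closing the induction.

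Equivalently, and perhaps more transparently for the write-up, one can argue set-theoretically: $\mathcal{W} \subseteq \mathcal{N}$, and the construction maintains a bijection between $\mathcal{N}\backslash\mathcal{W}$ (the removed buckets still recorded in the array) and the entries of $\mathcal{R}$ — each recorded removed bucket $b$ contributes exactly the entry with key $b$, and conversely — so that $r = |\mathcal{R}| = |\mathcal{N}\backslash\mathcal{W}| = |\mathcal{N}| - |\mathcal{W}| = n - w$, i.e. $w = n - r$.

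I expect the only (mild) obstacle to be making that correspondence airtight rather than any substantive difficulty: one must justify that it never degrades, namely that removing the last bucket also removes the corresponding index from $\mathcal{N}$ (so no untracked hole is left behind) and that a non-last removal of a working bucket never overwrites an existing entry of $\mathcal{R}$. Both follow from the b-array being dense up to $w$ together with Proposition~\ref{prop:lastRepacingIsNewWorkingSize} — replacements always concern positions below $w$, and a gap is always filled from position $w\!-\!1$ — so the bookkeeping is consistent; converting that informal observation into the formal invariant is the only slightly delicate part.
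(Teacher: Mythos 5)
The paper states this proposition without proof, treating it as an immediate bookkeeping consequence of Algorithms~\ref{algo:remove} and~\ref{algo:add}; your induction on elementary operations makes that implicit reasoning explicit and is essentially the argument the authors are relying on. Your proof is correct in substance, and both the operational induction and the set-theoretic bijection argument you sketch at the end are valid.

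One small imprecision worth tightening: your case split ``removal of the last bucket'' versus ``removal of a bucket other than the last'' does not match the actual branching of Algorithm~\ref{algo:remove}, whose guard is $b = n\!-\!1$ \emph{and} $\mathcal{R} = \emptyset$. If the removed bucket happens to be $n\!-\!1$ but $\mathcal{R} \neq \emptyset$, the algorithm takes the \emph{else} branch and records a replacement rather than decrementing $n$ --- a case your wording nominally leaves uncovered, although it is handled by exactly the same accounting as your case (ii) ($n$ fixed, $r \gets r+1$, $w \gets w-1$). Phrasing the cases as ``the if-branch taken'' versus ``the else-branch taken'' (and analogously for Algorithm~\ref{algo:add}) removes the ambiguity and makes the exhaustiveness of the case analysis manifest. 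The observation that a working bucket $b$ cannot already appear as a key in $\mathcal{R}$ --- so that inserting $\langle b \rightarrow w\!-\!1, l\rangle$ genuinely increments $|\mathcal{R}|$ --- is the right thing to check and you handle it correctly.
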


Let $l$, with $0 \le l \le n$, be the \textit{last removed bucket}.
Accordingly, the aforementioned removals lead to the following updates:
\begin{itemize}
    \item[] \textit{Initialization:} $n=10$, $R_{\mathcal{N}_0} = \{\}$, $l=10$
    \item[] \textit{Removing bucket 9:} $n=9$, $R_{\mathcal{N}_1} = \{\}$, $l=9$
    \item[] \textit{Removing bucket 5}: $n=9$, $R_{\mathcal{N}_2} = \{ \langle 5 \rightarrow 8, 9 \rangle$, $l=5$
    \item[] \textit{Removing bucket 1}: $n=9$, $R_{\mathcal{N}_3} = \{ \langle 5 \rightarrow 8, 9 \rangle, \langle 1 \rightarrow 7, 5 \rangle \}$, $l=1$
\end{itemize}

This replacement strategy works well in all situations. However, in the following sections, we will discuss two edge cases we can get while removing buckets.

\subsection{Removing a replacing bucket}\label{sec:removingAReplacingBucket}
In the previous example, we removed bucket $5$, replacing it with $8$, and then we removed bucket $1$, replacing it with $7$. We will now show what happens if we remove bucket $8$ from $\mathcal{N}_3$.
Following the same logic, we will copy bucket $6$ in position $8$.
The remaining working buckets are $6$ namely: $\mathcal{N}_4 = \{0,2,3,4,6,7\}$ (Fig. \ref{fig:removingAReplacingBucket}).

\begin{figure}[H]
\centering
\begin{tikzpicture}[scale=.6]
\tikzstyle{every node}=[font=\tiny]
\node [etichetta] (0) at (-.1,2) {\normalsize $\mathcal{N}_4$};
\node [etichetta] (0) at (1,2.8) {$0$};
\node [etichetta] (1) at (1+\ra,2.8) {$1$};
\node [etichetta] (2) at (1+2*\ra,2.8) {$2$};
\node [etichetta] (3) at (1+3*\ra,2.8) {$3$};
\node [etichetta] (4) at (1+4*\ra,2.8) {$4$};
\node [etichetta] (5) at (1+5*\ra,2.8) {$5$};
\node [etichetta] (6) at (1+6*\ra,2.8) {$6$};
\node [etichetta] (7) at (1+7*\ra,2.8) {$7$};
\node [etichetta] (8) at (1+8*\ra,2.8) {$8$};
\node [etichetta] (9) at (1+9*\ra,2.8) {$9$};
\node [square3] (0) at (1,2) {$0$};
\node [square3] (1) at (1+\ra,2) {$7$};
\node [square3] (2) at (1+2*\ra,2) {$2$};
\node [square3] (3) at (1+3*\ra,2) {$3$};
\node [square3] (4) at (1+4*\ra,2) {$4$};
\node [square3] (5) at (1+5*\ra,2) {$8$};
\node [square3] (6a) at (1+6*\ra,2) {$6$};
\node [square3] (7) at (1+7*\ra,2) {$7$};
\node [square3] (8a) at (1+8*\ra,2) {$\mathbf{6}$};
\node [square4] (9) at (1+9*\ra,2) {$9$};
\draw [|-|] (1-.5*\ra,1.2) -- (1+5.5*\ra,1.2);
\node [etichetta]  (0) at (3.5*\ra,0.9) {$n=9$, $w=6$};
\draw [black,->,thick,-latex] (6a.north) .. controls (0.2+7.4*\ra,3.3) and (0.2+7.8*\ra,3.3) .. (8a.north);

\end{tikzpicture}
\caption{Removing a replacing bucket}
\label{fig:removingAReplacingBucket}
\end{figure}
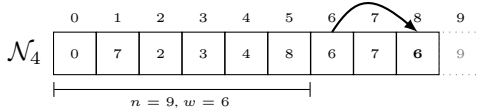

At this point, to find the actual replacement of bucket $5$, we need to apply the substitution two times $5 \rightarrow 8 \rightarrow 6$.
In Fig. \ref{fig:removingAReplacingBucket:lookupSteps}, we can see how substitutions can be chained to obtain a proper replacement.

% \begin{figure}[H]
% \centering
% \begin{tikzpicture}[scale=.6]
% \tikzstyle{every node}=[font=\tiny]
% \node [etichetta]  (0) at (-.1,2) {\normalsize $\mathcal{N}_4$};
% \node [etichetta]  (0) at (1,2.8) {$0$};
% \node [square3]  (0) at (1,2) {$0$};
% \node [etichetta]  (1) at (1+\ra,2.8) {$1$};
% \node [square3]  (1) at (1+\ra,2) {$7$};
% \node [etichetta]  (2) at (1+2*\ra,2.8) {$2$};
% \node [square3]  (2) at (1+2*\ra,2) {$2$};
% \node [etichetta]  (3) at (1+3*\ra,2.8) {$3$};
% \node [square3]  (3) at (1+3*\ra,2) {$3$};
% \node [etichetta]  (4) at (1+4*\ra,2.8) {$4$};
% \node [square3]  (4) at (1+4*\ra,2) {$4$};
% \node [etichetta]  (5) at (1+5*\ra,2.8) {$5$};
% \node [square3]  (5) at (1+5*\ra,2) {$8$};
% \node [etichetta]  (6) at (1+6*\ra,2.8) {$6$};
% \node [square3]  (6a) at (1+6*\ra,2) {$6$};
% \node [etichetta]  (7) at (1+7*\ra,2.8) {$7$};
% \node [square3]  (7) at (1+7*\ra,2) {$7$};
% \node [etichetta]  (8) at (1+8*\ra,2.8) {$8$};
% \node [square3]  (8a) at (1+8*\ra,2) {$6$};
% \node [etichetta]  (9) at (1+9*\ra,2.8) {$9$};
% \node [square4]  (9) at (1+9*\ra,2) {$9$};
% \draw [|-|] (1-.5*\ra,1.2) -- (1+5.5*\ra,1.2);
% \node [etichetta]  (0) at (3.5*\ra,0.9) {$n=9$, $w=6$};
% \draw [black,densely dotted,->,thick,-latex] (1.2+5*\ra,2.3) -- (1+8*\ra,2.3);
% \draw [black,densely dotted,->,thick,-latex] (1.2+5*\ra,3.1) -- (1.2+5*\ra,2.3);
% \draw [black,->,thick,-latex] (6a.north) .. controls (0.2+7.4*\ra,3.3) and (0.2+7.8*\ra,3.3) .. (8a.north);

% \end{tikzpicture}
% \caption{Removing a replacing bucket}
% \label{fig:removingAReplacingBucket}
% \end{figure}

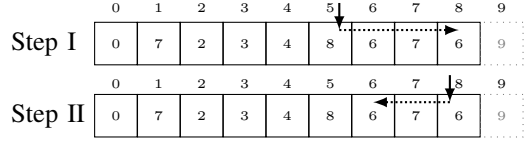
\begin{figure}[H]
\centering
\begin{tikzpicture}[scale=.6]
\tikzstyle{every node}=[font=\tiny]

\node [etichetta]  (0) at (-.6,1.6) {\normalsize Step I};
\node [etichetta]  (0) at (1,2.4) {$0$};
\node [etichetta]  (1) at (1+\ra,2.4) {$1$};
\node [etichetta]  (2) at (1+2*\ra,2.4) {$2$};
\node [etichetta]  (3) at (1+3*\ra,2.4) {$3$};
\node [etichetta]  (4) at (1+4*\ra,2.4) {$4$};
\node [etichetta]  (5) at (1+5*\ra,2.4) {$5$};
\node [etichetta]  (6) at (1+6*\ra,2.4) {$6$};
\node [etichetta]  (7) at (1+7*\ra,2.4) {$7$};
\node [etichetta]  (8) at (1+8*\ra,2.4) {$8$};
\node [etichetta]  (9) at (1+9*\ra,2.4) {$9$};
\node [square3]    (0) at (1,1.6) {$0$};
\node [square3]  (1) at (1+\ra,1.6) {$7$};
\node [square3]  (2) at (1+2*\ra,1.6) {$2$};
\node [square3]  (3) at (1+3*\ra,1.6) {$3$};
\node [square3]  (4) at (1+4*\ra,1.6) {$4$};
\node [square3]  (5) at (1+5*\ra,1.6) {$8$};
\node [square3]  (6a) at (1+6*\ra,1.6) {$6$};
\node [square3]  (7) at (1+7*\ra,1.6) {$7$};
\node [square3]  (8a) at (1+8*\ra,1.6) {$6$};
\node [square4]  (9) at (1+9*\ra,1.6) {$9$};
\draw [black,->,thick,-latex] (1+5*\ra+0.2,2.5) -- (1+5*\ra+0.2,1.9);
\draw [black,densely dotted,->,thick,-latex] (1+5*\ra+0.2,1.9) -- (1+8*\ra,1.9);

\node [etichetta] (0) at (-.5,0) {\normalsize Step II};
\node [etichetta] (0) at (1,0.7) {$0$};
\node [etichetta] (1) at (1+\ra,0.7) {$1$};
\node [etichetta] (2) at (1+2*\ra,0.7) {$2$};
\node [etichetta] (3) at (1+3*\ra,0.7) {$3$};
\node [etichetta] (4) at (1+4*\ra,0.7) {$4$};
\node [etichetta] (5) at (1+5*\ra,0.7) {$5$};
\node [etichetta] (6) at (1+6*\ra,0.7) {$6$};
\node [etichetta] (7) at (1+7*\ra,0.7) {$7$};
\node [etichetta] (8) at (1+8*\ra,0.7) {$8$};
\node [etichetta] (9) at (1+9*\ra,0.7) {$9$};
\node [square3] (0) at (1,0) {$0$};
\node [square3] (1) at (1+\ra,0) {$7$};
\node [square3] (2) at (1+2*\ra,0) {$2$};
\node [square3] (3) at (1+3*\ra,0) {$3$};
\node [square3] (4) at (1+4*\ra,0) {$4$};
\node [square3] (5) at (1+5*\ra,0) {$8$};
\node [square3] (6) at (1+6*\ra,0) {$6$};
\node [square3] (7) at (1+7*\ra,0) {$7$};
\node [square3] (8) at (1+8*\ra,0) {$6$};
\node [square4] (9) at (1+9*\ra,0) {$9$};
\draw [black,->,thick,-latex] (1+8*\ra-0.2,0.9) -- (1+8*\ra-0.2,0.3);
\draw [black,densely dotted,->,thick,-latex] (1+8*\ra-0.2,0.3) -- (1+6*\ra,0.3);

\end{tikzpicture}
\caption{Removing a replacing bucket - lookup steps}
\label{fig:removingAReplacingBucket:lookupSteps}
\end{figure}

The proper bucket positions (containing the working buckets) are in the range $[0,5]$. Therefore, positions $6$, $7$, or $8$ cannot be directly reached. The only way to hit one of the related buckets is through redirection.

The same concept holds in the opposite situation where the replacing bucket was removed in a previous iteration. In this case we fill the gap with an already replaced bucket. Therefore, to find the working bucket we need to follow the chain of replacements.

\subsection{Replacing a bucket with itself}
Suppose we remove bucket $5$ from the b-array $\mathcal{N}_4$. Applying the same logic, bucket $5$ must be copied in position $5$ to obtain $\mathcal{N}_5$ (Fig. \ref{fig:replacingABucketWithItself}). Bucket $5$ is replaced by itself. However, this is not an issue since, after removing bucket $5$, the size of the cluster becomes $5$. Therefore, the proper bucket positions are in the range $[0,4]$, representing the set of working buckets $\{0,2,3,4,7\}$. Position $5$ cannot be hit directly, and bucket $5$ cannot be reached via redirection.

\begin{figure}[H]
\centering
\begin{tikzpicture}[scale=.6]
\tikzstyle{every node}=[font=\tiny]
\node [etichetta] (0) at (-.1,2) {\normalsize $\mathcal{N}_5$};
\node [etichetta] (0) at (1,2.8) {$0$};
\node [etichetta] (1) at (1+\ra,2.8) {$1$};
\node [etichetta] (2) at (1+2*\ra,2.8) {$2$};
\node [etichetta] (3) at (1+3*\ra,2.8) {$3$};
\node [etichetta] (4) at (1+4*\ra,2.8) {$4$};
\node [etichetta] (5) at (1+5*\ra,2.8) {$5$};
\node [etichetta] (6) at (1+6*\ra,2.8) {$6$};
\node [etichetta] (7) at (1+7*\ra,2.8) {$7$};
\node [etichetta] (8d) at (1+8*\ra,2.8) {$8$};
\node [etichetta] (9) at (1+9*\ra,2.8) {$9$};
\node [square3] (0) at (1,2) {$0$};
\node [square3] (1) at (1+\ra,2) {$7$};
\node [square3] (2) at (1+2*\ra,2) {$2$};
\node [square3] (3) at (1+3*\ra,2) {$3$};
\node [square3] (4) at (1+4*\ra,2) {$4$};
\node [square3] (5b) at (1+5*\ra,2) {$\mathbf{5}$};
\node [square3] (6a) at (1+6*\ra,2) {$6$};
\node [square3] (7) at (1+7*\ra,2) {$7$};
\node [square3] (8a) at (1+8*\ra,2) {$6$};
\node [square4] (9) at (1+9*\ra,2) {$9$};
\draw [|-|] (1-.5*\ra,1.2) -- (1+4.5*\ra,1.2);
\node [etichetta]  (0) at (3.5*\ra,0.9) {$n=9$, $w=5$};
\draw [black,-latex,thick] (1.2+4.5*\ra,2.5) .. controls (1+5*\ra-0.3,3.4) and (1+5*\ra+0.3,3.4) .. (0.8+5.5*\ra,2.5);
\end{tikzpicture}
\caption{Replacing a bucket with itself}
\label{fig:replacingABucketWithItself}
\end{figure}
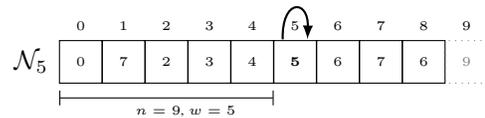

\section{Explanation of the algorithm}
As described in the previous sections, \textit{Memento} relies entirely on \textit{Jump} as its core engine. The latter ensures \textit{balance}, \textit{monotonicity} when adding new buckets, and \textit{minimal disruption} when removing buckets from the tail. However, since \textit{Memento} allows for removing arbitrary buckets, additional state information needs to be stored.

\begin{definition}[\textbf{state}]
The algorithm state information $S$, which comprises all relevant information required by \textit{Memento}, is defined as $\mathcal{S}$ := $\langle n,\mathcal{R},l\rangle$ where:
\begin{itemize}
    \item[] $n$ is the size of the b-array
    \item[] $\mathcal{R}$ is the set of replacements
    \item[] $l$, with $0 \le l \le n$, is the last removed bucket.
\end{itemize}

\end{definition}

In the following, the underlying details of the algorithm will be discussed.

\subsection{Initialization}
When we set up a new cluster of size $n$, all the initial buckets are working ($\mathcal{W} = \mathcal{N}$ and $\mathcal{R} = \emptyset$). \textit{Memento} initializes its state by storing the initial number of buckets and creating an empty set of replacements (Alg.~\ref{algo:init}). The last removed bucket $l$ gets initialized to the value of $n$, since the b-array would grow starting from this index.

\begin{algorithm}[H]
  \caption{Init Memento}
  \begin{algorithmic}
    \Require{$initial\_node\_count > 0$}
    \Function{INIT}{$initial\_node\_count$}
      \State $n \gets initial\_node\_count$
      \State $l \gets n$ \Comment{l = last removed bucket}
      \State $\mathcal{R} \gets \emptyset$
    \EndFunction
  \end{algorithmic}
  \label{algo:init}
\end{algorithm}

\subsection{Removing a bucket}
If all the buckets work and we remove a bucket from the tail, it reduces the b-array's size. Otherwise, if there are already removed buckets or when removing a bucket $b$ other than the last, \textit{Memento} creates a new replacement $\langle b \rightarrow w\!-\!1, l \rangle$ and puts it into $\mathcal{R}$. The bucket $b$ becomes the new last removed $l \leftarrow b$. The removal procedure is described in Alg.~\ref{algo:remove}.

\begin{algorithm}[ht]
  \caption{Remove a bucket}
  \label{algo:remove}
  \begin{algorithmic}
    \Require{$0 \leq b < n$}
    \Function{REM}{$b$}
      \If{$b = n-1$ \textbf{and} $\mathcal{R} = \emptyset$}
        \State $n \gets n-1$ \Comment{Size is reduced by 1}
      \Else
        \State $w \gets n - |\mathcal{R}|$
        \State $\mathcal{R} \gets \mathcal{R} \cup \langle b \rightarrow w\!-\!1, l \rangle$
      \EndIf
      \State $l \gets b$ \Comment{The last removed bucket}
    \EndFunction
  \end{algorithmic}
\end{algorithm}

\subsection{Adding a bucket}
If all the buckets are working, a new bucket will be added to the tail of the array, increasing the number of buckets by one. Otherwise, we will restore the last removed bucket (Alg.~\ref{algo:add}).

\begin{algorithm}[ht]
  \caption{Add a bucket}
  \label{algo:add}
  \begin{algorithmic}
    \Ensure{$0 \leq b \leq n$}
    \Function{ADD()}{}
    \If{$\mathcal{R} = \emptyset$} \Comment{Adds a bucket to the tail}
        \State $b \gets n$
        \State $l \gets n \gets n+1$ 
      \Else
        \State $b \gets l$ \Comment{Get the last removed bucket}
        \State $\mathcal{R} \gets \mathcal{R} \backslash \langle b \rightarrow c,p \rangle$ \Comment{Remove the replacement}
        \State $l \gets p$ \Comment{Updates the last removed bucket}
      \EndIf
      \State \Return $b$
    \EndFunction
  \end{algorithmic}
\end{algorithm}

As described in Sec.~\ref{sec:removingAReplacingBucket}, there are edge cases where removing buckets may create chains of replacements. By always restoring the last removed bucket, we guarantee to properly untie the chains of replacements.

\subsection{Lookup}\label{sec:lookup}
The explanation of the lookup function is the following. We start hitting a position in the b-array. If the related bucket $b$ works, we are done. Otherwise, we hit a new position in the portion of b-array $\mathcal{W}_b$ containing the working buckets when $b$ was removed. We hit a new bucket, $c\in\mathcal{W}_b$. If it works, we are done. Otherwise, we search in the portion of b-array $\mathcal{W}_c$ containing the working buckets when $c$ was removed. This second portion is smaller than $\mathcal{W}_b$ (i.e., $\mathcal{W}\subseteq\mathcal{W}_c\subset\mathcal{W}_b$). Therefore, eventually, we will reach the portion of the b-array $\mathcal{W}$ containing only working buckets.

More specifically, when performing a lookup, \textit{Memento} starts by invoking \textit{Jump} to get a bucket $b$ in the range $[0,n\!-\!1]$ (Alg.~\ref{alg:lookup} line 2). If the bucket is working, the lookup is done. Otherwise, there must be a replacement in the form $\langle b \rightarrow c,p \rangle \in \mathcal{R}$ for some $c,p \in \mathcal{N}$.

When we start a new cluster, all the buckets are working. \textit{Jump} is balanced and evenly maps the keys among the buckets. When we remove a bucket $b$ other than the last, all the keys originally mapped to $b$ need to be evenly distributed among the remaining buckets.

As stated in Prop.~\ref{prop:lastRepacingIsNewWorkingSize}, the value $c$ corresponds also to the remaining number of working buckets after the removal of $b$ (Alg.~\ref{alg:lookup} line 4). Therefore, the working buckets can be found only in range $[0,c\!-\!1]$.

We hash the key again to get a new mapping in the range $[0,c\!-\!1]$ (Alg.~\ref{alg:lookup} lines 5 and 6). This time we do not need the hashing algorithm to be consistent. So, we can reduce complexity by using an uniform hash function as described in Note~\ref{note:uniformHashFunctions} that works in $O(1)$ (assuming keys to be of fixed size).

\begin{algorithm}
  \caption{Lookup}
  \label{alg:lookup}
  \begin{algorithmic}[1]
    \Ensure{$0 \leq b < n$}
    \Function{LOOKUP}{$key$}
        \State $b \gets jump(key,n)$
        \While{$\exists \langle b \rightarrow c,p \rangle \in \mathcal{R}$}
            \State $w_b \gets c$ \Comment{Working buckets after b was removed}
            \State $h \gets hash(key,b)$ \Comment{Traditional hash function}
            \State $d \gets h$ mod $w_b$ \Comment{New bucket in $[0,w_b\!-\!1]$}
            \While{$\exists \langle d \rightarrow u,q \rangle \in \mathcal{R}$ and $u \geq w_b$} 
                \State $d \gets u$ \Comment{Follow substitutions}
            \EndWhile
            \State $b \gets d$
        \EndWhile
      \State \Return $b$
    \EndFunction
  \end{algorithmic}
\end{algorithm}

At this step, if we hit a working bucket, the algorithm is done. Otherwise, we follow the replacements chain until either we find a working bucket, or we find a bucket $d$ removed after $b$ (Alg.~\ref{alg:lookup} lines 7 and 8). It is worth noting that the condition $u \geq w_b$ of the inner loop contributes in guaranteeing the balance. Since, without such a condition, the inner loop would always follow the replacement chain until the end resulting in an unbalanced distribution.
Suppose, for instance, to have a b-array of size 6 and remove buckets 0, 3, and 5 in order (Fig.~\ref{fig:remove 0,3,5 in order}, the three steps denote the removal order and the corresponding replacement). The replacement set is $\mathcal{R} = \{\langle 0 \rightarrow 5,6 \rangle, \langle 3 \rightarrow 4,0 \rangle\, \langle 5 \rightarrow 3, 3 \rangle\}$.

\begin{figure}[H]
\centering
\begin{tikzpicture}[scale=1]
\tikzstyle{every node}=[font=\tiny]

\node [etichetta]  (1l) at (-.5,3.2) {\normalsize $Step\ I$};
\node [etichetta]  (1l) at (-.5,2.7) {\normalsize $\langle 0\rightarrow 5,6\rangle$};
\node [etichetta]  (10) at (1-0.2,3.7) {$0$};
\node [etichetta]  (11) at (1+\ra,3.7) {$1$};
\node [etichetta]  (12) at (1+2*\ra,3.7) {$2$};
\node [etichetta]  (13) at (1+3*\ra,3.7) {$3$};
\node [etichetta]  (14) at (1+4*\ra,3.7) {$4$};
\node [etichetta]  (15) at (1+5*\ra+0.2,3.7) {$5$};
\node [square]  (10) at (1,3) {$\mathbf{5}$};
\node [square]  (11) at (1+\ra,3) {$1$};
\node [square]  (12) at (1+2*\ra,3) {$2$};
\node [square]  (13) at (1+3*\ra,3) {$4$};
\node [square]  (14) at (1+4*\ra,3) {$4$};
\node [square]  (15) at (1+5*\ra,3) {$5$};
\draw [black,->,thick,-latex] (15.north) .. controls (1+5*\ra,4.3) and (1+0*\ra+0.2,4.3) .. (10.north);

\node [etichetta]  (2l) at (-.5,1.7) {\normalsize $Step\ II$};
\node [etichetta]  (2l) at (-.5,1.2) {\normalsize $\langle 3\rightarrow 4,0\rangle$};
\node [etichetta]  (20) at (1,2.2) {$0$};
\node [etichetta]  (21) at (1+\ra,2.2) {$1$};
\node [etichetta]  (22) at (1+2*\ra,2.2) {$2$};
\node [etichetta]  (23) at (1+3*\ra-0.2,2.2) {$3$};
\node [etichetta]  (24) at (1+4*\ra+0.2,2.2) {$4$};
\node [etichetta]  (25) at (1+5*\ra,2.2) {$5$};
\node [square]  (20) at (1,1.5) {$5$};
\node [square]  (21) at (1+\ra,1.5) {$1$};
\node [square]  (22) at (1+2*\ra,1.5) {$2$};
\node [square]  (23) at (1+3*\ra,1.5) {$\mathbf{4}$};
\node [square]  (24) at (1+4*\ra,1.5) {$4$};
\node [square]  (25) at (1+5*\ra,1.5) {$5$};
\draw [black,->,thick,-latex] (24.north) .. controls (1+4*\ra,2.5) and (1+3*\ra+0.2,2.5) .. (23.north);

\node [etichetta]  (3l) at (-.5,0.2) {\normalsize $Step\ III$};
\node [etichetta]  (3l) at (-.5,-0.3) {\normalsize $\langle 5\rightarrow 3,3\rangle$};
\node [etichetta]  (30) at (1,0.7) {$0$};
\node [etichetta]  (31) at (1+\ra,0.7) {$1$};
\node [etichetta]  (32) at (1+2*\ra,0.7) {$2$};
\node [etichetta]  (33) at (1+3*\ra-0.2,0.7) {$3$};
\node [etichetta]  (34) at (1+4*\ra,0.7) {$4$};
\node [etichetta]  (35) at (1+5*\ra+0.2,0.7) {$5$};
\node [square]  (30) at (1,0) {$5$};
\node [square]  (31) at (1+\ra,0) {$1$};
\node [square]  (32) at (1+2*\ra,0) {$2$};
\node [square]  (33) at (1+3*\ra,0) {$4$};
\node [square]  (34) at (1+4*\ra,0) {$4$};
\node [square]  (35) at (1+5*\ra,0) {$\mathbf{3}$};
\draw [black,->,thick,-latex] (33.north) .. controls (1+3*\ra,1) and (1+5*\ra-0.2,1) .. (35.north);

\end{tikzpicture}
\caption{Removing bucket 0, 3, and 5 in order}
\label{fig:remove 0,3,5 in order}
\end{figure}
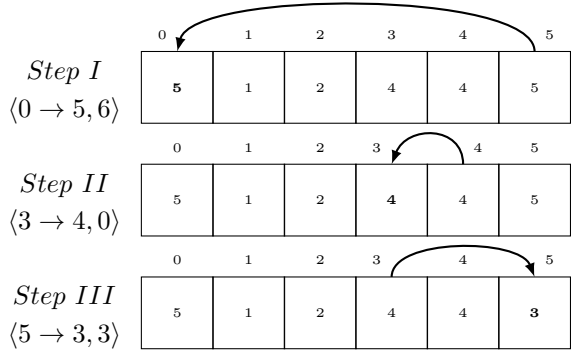

The first operation of the lookup $jump(key,6)$ returns $3$. Bucket $3$ was removed (i.e., $\langle 3 \rightarrow 4,0 \rangle \in \mathcal{R}$, see step II in Fig.~\ref{fig:remove 0,3,5 in order}), so we rehash the key to get an index in the range $[0,4\!-\!1]$ (Alg.~\ref{alg:lookup} lines 3 to 6. Fig.~\ref{fig:lookup-step1}).

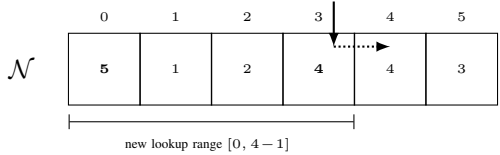
\begin{figure}[H]
\centering
\begin{tikzpicture}[scale=1]
\tikzstyle{every node}=[font=\tiny]
\node [etichetta]  (0) at (-.1,1) {\normalsize $\mathcal{N}$};
\node [etichetta]  (0) at (1,1.7) {$0$};
\node [etichetta]  (1) at (1+\ra,1.7) {$1$};
\node [etichetta]  (2) at (1+2*\ra,1.7) {$2$};
\node [etichetta]  (3) at (1+3*\ra,1.7) {$3$};
\node [etichetta]  (4) at (1+4*\ra,1.7) {$4$};
\node [etichetta]  (5) at (1+5*\ra,1.7) {$5$};
\node [square]  (0) at (1,1.0) {$\mathbf{5}$};
\node [square]  (1) at (1+\ra,1.0) {$1$};
\node [square]  (2) at (1+2*\ra,1.0) {$2$};
\node [square]  (3) at (1+3*\ra,1.0) {$\mathbf{4}$};
\node [square]  (4) at (1+4*\ra,1.0) {$4$};
\node [square]  (5b) at (1+5*\ra,1.0) {$3$};
\draw [black,->,thick,-latex] (1+3*\ra+0.2,1.9) -- (1+3*\ra+0.2,1.3);
\draw [black,densely dotted,->,-latex,thick] (1+3*\ra+0.2,1.3) -- (1+4*\ra,1.3);
\draw [|-|] (1-.5*\ra,0.3) -- (1+3.5*\ra,0.3);
\node [etichetta] (0) at (2.5*\ra,0) {new lookup range $[0,4\!-\!1]$};
\end{tikzpicture}
\caption{First lookup step ($jump(key,6) \rightarrow 3$)}
\label{fig:lookup-step1}
\end{figure}

We assume to use a uniform hash function, so every index has the same probability of being selected. Buckets $0$ and $3$ are removed, and both redirect to bucket $4$ ($3\rightarrow 4$ and $0\rightarrow 5\rightarrow 3\rightarrow 4$). Therefore, buckets $1$ and $2$ have $\frac{1}{4}$ probability of being selected, while the probability of selecting bucket $4$ (which is at the end of the replacement chain) is $\frac{1}{2}$.
Thanks to the condition $u \geq w_b$ of the internal loop, this balance issue is avoided. If the first rehash returns $1$ or $2$, we find a working bucket, and the loop exits. If it returns $3$, we follow the redirect hitting bucket $4$ that is also working, and the loop exits. Finally, if it returns $0$, the internal loop follows the first redirect $\langle 0 \rightarrow 5,3 \rangle \in \mathcal{R}$, hitting $5$. Bucket $5$ was removed after bucket $3$ (i.e., $\langle 5 \rightarrow 3,0 \rangle \in \mathcal{R}$, see step III in Fig.~\ref{fig:remove 0,3,5 in order}), but the condition $u \geq w_b$ does not hold because $u(=3)<w_b(=4)$. Therefore, the internal loop stops, $b$ is set to $5$ (Alg.~\ref{alg:lookup} line 10), $w_b$ is set to $3$ (Alg.~\ref{alg:lookup} line 4), and we rehash the key again to get an index in the interval $[0,3\!-\!1]$ (Fig.~\ref{fig:lookup-step2}).

\begin{figure}[H]
\centering
\begin{tikzpicture}[scale=1]
\tikzstyle{every node}=[font=\tiny]
\node [etichetta]  (0) at (-.1,1) {\normalsize $\mathcal{N}$};
\node [etichetta]  (0) at (1,1.7) {$0$};
\node [etichetta]  (1) at (1+\ra,1.7) {$1$};
\node [etichetta]  (2) at (1+2*\ra,1.7) {$2$};
\node [etichetta]  (3) at (1+3*\ra,1.7) {$3$};
\node [etichetta]  (4) at (1+4*\ra,1.7) {$4$};
\node [etichetta]  (5) at (1+5*\ra,1.7) {$5$};
\node [square]  (0) at (1,1.0) {$\mathbf{5}$};
\node [square]  (1) at (1+\ra,1.0) {$1$};
\node [square]  (2) at (1+2*\ra,1.0) {$2$};
\node [square]  (3) at (1+3*\ra,1.0) {$\mathbf{4}$};
\node [square]  (4) at (1+4*\ra,1.0) {$4$};
\node [square]  (5b) at (1+5*\ra,1.0) {$\mathbf{3}$};
\draw [black,->,thick,-latex] (1+0*\ra+0.2,1.9) -- (1+0*\ra+0.2,1.3);
\draw [black,densely dotted,->,-latex,thick] (1+0*\ra+0.2,1.3) -- (1+5*\ra,1.3);
\draw [|-|] (1-.5*\ra,0.3) -- (1+2.5*\ra,0.3);
\node [etichetta] (0) at (2.0*\ra,0) {new lookup range $[0,3\!-\!1]$};
\end{tikzpicture}
\caption{Second lookup step when $d = 0$}
\label{fig:lookup-step2}
\end{figure}
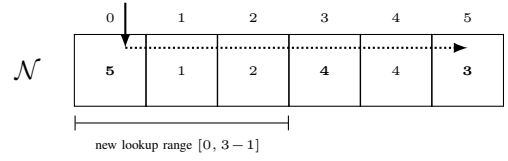

Again, the indexes $0$, $1$, and $2$ have the same probability of being selected. This time, $w_b = 3$ so, if we hit $0$ again, the internal loop keeps following the substitution chain ($0\rightarrow 5\rightarrow 3\rightarrow 4$) until hitting bucket $4$. This allows keys to be distributed evenly among the working buckets. Keys returning the indexes $1$, $2$, or $3$ end the external loop in one iteration, while keys returning the index $0$ do an additional iteration that distributes the keys evenly among the working buckets, as shown in Fig~\ref{fig:lookup-stepTree}. Kindly observe that during each cycle of the external loop, we either identify a working bucket or the lookup range diminishes. 

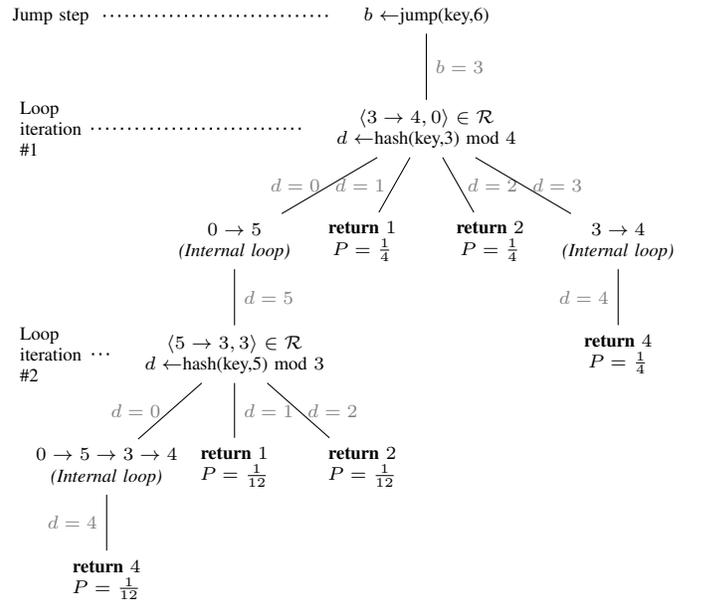
\begin{figure}[H]
\centering
\begin{tikzpicture}[level distance=1.5cm,
  level 1/.style={sibling distance=1cm},
  level 2/.style={sibling distance=1.7cm},
  level 3/.style={sibling distance=1.7cm}]

  \node (l1)[font=\scriptsize]{$b\leftarrow$jump(key,6)}
    child {
      node (l2) [align=center,font=\scriptsize] {
        $\langle 3\rightarrow 4,0\rangle\in\mathcal{R}$\\
        $d\leftarrow$hash(key,3) mod $4$
      }
      child {
        node [align=center,font=\scriptsize] {
          $0\rightarrow 5$\\
          \textit{(Internal loop)}
        }
        child{
          node (l3) [align=center,font=\scriptsize] {
            $\langle 5\rightarrow 3,3\rangle\in\mathcal{R}$\\
            $d\leftarrow$hash(key,5) mod $3$
          }
          child {
            node [align=center,font=\scriptsize] {
              $0\rightarrow 5\rightarrow 3\rightarrow 4$\\
              \textit{(Internal loop)}
            }
            child {
              node [align=center,font=\scriptsize] {
                \textbf{return} $4$\\
                $P=\frac{1}{12}$
              }
              edge from parent node[left,font=\scriptsize,color=gray] {$d=4$}
            }
            edge from parent node[left,font=\scriptsize,color=gray] {$d=0$}  
          }
          child {
            node [align=center,font=\scriptsize] {
              \textbf{return} $1$\\
              $P=\frac{1}{12}$
            }
            edge from parent node[right,font=\scriptsize,color=gray] {$d=1$}
          }
          child {
            node [align=center,font=\scriptsize] {
              \textbf{return} $2$\\
              $P=\frac{1}{12}$
            }
            edge from parent node[right,font=\scriptsize,color=gray] {$d=2$}
          }
          edge from parent node[right,font=\scriptsize,color=gray] {$d=5$}
        }
        edge from parent node[left,font=\scriptsize,color=gray] {$d=0$}
      }
      child {
        node [align=center,font=\scriptsize] {
            \textbf{return} $1$\\
            $P=\frac{1}{4}$
        }
        edge from parent node[left,font=\scriptsize,color=gray] {$d=1$}
      }
      child {
        node [align=center,font=\scriptsize] {
            \textbf{return} $2$\\
            $P=\frac{1}{4}$
        }
        edge from parent node[right,font=\scriptsize,color=gray] {$d=2$}
      }
      child {
        node [align=center,font=\scriptsize]{
          $3\rightarrow 4$\\
          \textit{(Internal loop)}
        }
        child {
          node [align=center,font=\scriptsize] {
            \textbf{return} $4$\\
            $P=\frac{1}{4}$
          }
          edge from parent node[left,font=\scriptsize,color=gray] {$d=4$}
        }
        edge from parent node[right,font=\scriptsize,color=gray] {$d=3$}
      }
    edge from parent node[right,font=\scriptsize,color=gray] {$b=3$}
  };  
  
  \node[left=3.4 of l1]  (ln1)[align=left,font=\scriptsize] {Jump step}
    child {
        node (ln2)[align=left, font=\scriptsize] {Loop \\ iteration \\ \#1}
        edge from parent [draw=none]
        child {
            node (ln3) {}
            edge from parent [draw=none]
            child {
                node (ln4) [align=left,font=\scriptsize]{
                    Loop \\
                    iteration \\ \#2}
                edge from parent [draw=none]
                }}};

    \draw[black,dotted,thick]    
    ($(l1.west)+(-1em,0)$) -- (ln1);
\draw[black,dotted,thick]    
    ($(l2.west)+(-1em,0)$) -- (ln2.east);
\draw[black,dotted,thick]    
    ($(l3.west)+(-1em,0)$) -- (ln4);   

\end{tikzpicture}
\caption{External loop iterations and output probabilities}
\label{fig:lookup-stepTree}
\end{figure}

\begin{proposition}
The lookup function always ends.
\end{proposition}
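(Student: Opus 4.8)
The plan is to prove termination of the two nested \texttt{while} loops of Alg.~\ref{alg:lookup} separately; everything else in the procedure runs in a bounded number of steps (the single call to \textit{jump} is $O(\log n)$ and, by Note~\ref{note:uniformHashFunctions}, the call to \textit{hash} is $O(1)$). So it is enough to exhibit a well-founded ranking for the inner loop and one for the outer loop.

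For the inner loop (lines~7--8) I would take the \emph{removal order} of a bucket as the measure: number the removed buckets $1,2,\dots,r$ according to the order of the \textsc{REM} calls that created their replacements (this is exactly the information threaded through the $p$-components). The structural fact I need, read off Alg.~\ref{algo:remove} together with Prop.~\ref{prop:lastRepacingIsNewWorkingSize}, is that when $\langle b\rightarrow c,p\rangle$ is created, $c$ equals $w\!-\!1$, i.e.\ the last position still holding a working bucket immediately \emph{after} $b$ is removed; hence $c$ is working right after $b$ leaves, so $c$ is either still working at lookup time or was removed strictly later than $b$. Therefore following replacement pointers $d\leftarrow u$ strictly increases the removal order, a chain of replacements can never revisit a bucket, and any such chain has length at most $r=|\mathcal{R}|$. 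The inner loop traverses precisely one of these chains, with the extra guard $u\geq w_b$ that can only stop it earlier, so it performs at most $r$ iterations.

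For the outer loop (lines~3--10) I would formalise the remark already made in the text: during each cycle we either identify a working bucket or the lookup range diminishes. The inner loop can exit in only two ways. Either the bucket $d$ it stopped at has no replacement in $\mathcal{R}$, i.e.\ $d$ is working; then after $b\leftarrow d$ the outer guard $\exists\langle b\rightarrow c,p\rangle\in\mathcal{R}$ fails and \textsc{LOOKUP} returns. Or it stopped because its (unique) replacement $\langle d\rightarrow u,q\rangle\in\mathcal{R}$ satisfies $u<w_b$; then $b\leftarrow d$ and on the next pass line~4 re-assigns $w_b\leftarrow u$, strictly smaller than the current $w_b$. Since $w_b$ ranges over the non-negative integers, only finitely many such strict decreases occur, so the outer loop runs finitely often; packaging the two bounds yields that \textsc{LOOKUP} returns after $O(w_b\cdot r)$ elementary steps, with $w_b$ the range fixed on the first iteration, and a fortiori that it always ends. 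One could also phrase the whole argument through a single lexicographic ranking function on $(w_b, \text{removal order of the current bucket})$.

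The step I expect to be the real obstacle is the structural fact invoked in the inner-loop argument, namely acyclicity of the replacement relation $b\mapsto c$ recorded in $\mathcal{R}$ (equivalently, that $c$ is never a bucket removed before $b$). A clean proof needs an invariant maintained jointly by \textsc{REM} and \textsc{ADD}: that $\mathcal{R}$ always encodes a forest of redirections oriented from older to newer removals, kept consistent with the dense b-array, and that \textsc{ADD} always restores a \emph{leaf} of that forest by untying the last removed bucket. The chained-replacement situation of Sec.~\ref{sec:removingAReplacingBucket}, the ``bucket replaced by itself'' situation that follows it, and the behaviour when $w_b$ shrinks to its smallest values are precisely the cases this invariant must cover, so I would establish it first and then feed it into the two monotonicity arguments above.
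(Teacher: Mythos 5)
Your decomposition (bound the inner loop, bound the outer loop, combine) matches the paper's, and your outer‑loop argument is correct and the same as the paper's: each pass either terminates or strictly decreases $w_b$, and $w_b$ is a non‑negative integer. The inner‑loop argument, however, rests on a false structural claim. You assert that when $\langle b\rightarrow c,p\rangle$ is recorded, ``$c$ is working right after $b$ leaves,'' and you conclude that following replacement pointers strictly increases the removal order, so chains are acyclic of length at most $r$. That premise fails exactly in the ``removing a replacing bucket'' situation of Sec.~\ref{sec:removingAReplacingBucket}: with $n=5$, removing bucket $3$ records $\langle 3\rightarrow 4,\cdot\rangle$, and then removing bucket $4$ records $\langle 4\rightarrow 3,\cdot\rangle$, where $c=3$ was \emph{already removed}. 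Following $4\to 3$ \emph{decreases} the removal order, and $\mathcal{R}$ contains the genuine $2$-cycle $3\to 4\to 3$. So the ``forest oriented from older to newer removals'' invariant you propose as a fix cannot be established — it is simply not an invariant of the data structure, and your removal‑order ranking for the inner loop does not exist.

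What actually bounds the inner loop is not acyclicity of $b\mapsto c$ but its \emph{injectivity} together with the guard $u\geq w_b$. The $c$-values stored in $\mathcal{R}$ at any time are precisely the distinct integers $\{w,\dots,n-1\}$ (each removal records the next smaller value, additions pop the smallest), so $b\mapsto c$ is injective. In the inner loop the initial $d_0$ satisfies $d_0<w_b$, while every subsequent $d_i$ ($i\geq 1$) is a $c$-value $\geq w_b$. If the chain repeated, $d_i=d_j$ with $1\leq i<j$, injectivity would propagate backward to $d_0=d_{j-i}$, contradicting $d_0<w_b\leq d_{j-i}$. Hence the visited $d$-values are distinct and the iteration count is bounded by $n-w_b\leq r$. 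The paper's own phrasing here (``a chain of substitutions always ends with a working bucket'') is also not literally true of raw chains in $\mathcal{R}$, but the paper does not commit to your specific removal‑order claim; your version is the one that is directly falsified by the cycle example. Replacing your ranking argument by the injectivity argument above closes the gap without needing any invariant about orientation of the replacement graph.
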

\begin{proof}
To prove the termination of the lookup function, we must prove that the two nested loops always perform a bounded number of iterations. We will start proving the termination of the internal loop.\\
In section \ref{sec:MementoHash}, it is explained that when a bucket is removed, a substitution in the form $\langle b \rightarrow c,p\rangle$ is added to $\mathcal{R}$, where $b$ is the removed 
bucket, $c$ is the replacing bucket, and $p$ is the bucket previously removed. By construction, either $c$ is working, or there must exist a substitution $\langle c \rightarrow u,q\rangle \in 
\mathcal{R}$. This ensures that a chain of substitutions always ends with a working bucket.
In Algorithm 4, the internal loop (lines 7 to 9) follows this substitution chain until it either reaches a working bucket (the end of the chain) or encounters a bucket $d$ that was removed after $b$
(i.e., a bucket that was working when $b$ was removed). This guarantees the termination of the internal loop. \\
The external loop (lines 3 to 11) begins by selecting a bucket $d$ from the range $[0,n\!-\!1]$. For each iteration of the external loop, the internal loop will return either $d$ if it is working or a bucket $u$ that was working when $d$ was removed (and can still be working). If the internal loop returns $d$ or $u$ as
working, the external loop terminates.
If not, the external loop selects a new $d$ from the range $[0, u\!-\!1]$.  Proposition \ref{prop:lastRepacingIsNewWorkingSize} emphasizes that the value $u$ represents both the replacing bucket and the count of working buckets after the removal of $d$. Hence, $u$ is strictly less than $n$. Considering these factors, it becomes evident that, in each iteration, we either discover a working bucket
or reduce the range $[0, u\!-\!1]$ to $[0, u\!-\!\delta]$, with $\delta > 1$. Ultimately, this range contracts to $[0, w\!-\!1]$, where each position leads to a working bucket. Therefore, the external loop is guaranteed to terminate.
\end{proof}

\begin{proposition} \label{prop:minimalDisruption}
Minimal disruption holds for Memento.
\end{proposition}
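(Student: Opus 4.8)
The plan is to reduce ``minimal disruption'' to a statement about a single bucket removal: removing one bucket $b$ changes the value returned by the lookup function (Alg.~\ref{alg:lookup}) only for the keys that were previously mapped to $b$, and each such key is remapped to a bucket that is still working. The ``same key $\to$ same bucket while the bucket is available'' half of the definition is then immediate, since the lookup is a deterministic function of the state and, in fact, of $\langle n,\mathcal{R}\rangle$ alone (the component $l$ is consulted only by the add and remove procedures, never by the lookup). The argument splits along the branch taken by the removal procedure (Alg.~\ref{algo:remove}), exploiting the invariant that at any moment $\mathcal{R}$ contains exactly the currently non-working buckets, so a working bucket never carries a replacement.

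First, the easy branch: if $b=n-1$ and $\mathcal{R}=\emptyset$, the removal only sets $n\gets n-1$, and both before and after we have $\mathrm{lookup}(key)=jump(key,n)$ for every key. Here minimal disruption is inherited verbatim from \textit{Jump} (Sec.~\ref{sec:jump}): keys with $jump(key,n)\neq n-1$ keep their bucket and the remaining keys move.

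For the general branch, the only change visible to the lookup is the insertion into $\mathcal{R}$ of the single entry $\langle b\rightarrow w\!-\!1,\,l\rangle$, where $w-1$ is the number of working buckets after the removal (Prop.~\ref{prop:lastRepacingIsNewWorkingSize}); call the enlarged set $\mathcal{R}'$. I would run the lookup on a fixed $key$ with $\mathcal{R}$ and with $\mathcal{R}'$ in parallel. Since the two sets differ only in the entry for $b$, and that entry is consulted only when the outer-loop bucket equals $b$ or the inner-loop index $d$ equals $b$, the two runs are identical (same $n$, same $jump$ output, same $w_{b_i}$, same inner substitution chains) up to the first such moment, so everything hinges on that moment. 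If it never occurs, the two runs coincide, and their common result cannot be $b$, because returning $b$ would mean the outer loop terminated with current bucket $b$, i.e.\ the entry for $b$ had been consulted. If it does occur, then in the run with $\mathcal{R}$ there is no entry for $b$: the inner loop (if active) halts at $d=b$, the outer loop then adopts $b$, finds no replacement, and returns $b$ — so the old result is exactly $b$. In the run with $\mathcal{R}'$ the entry $\langle b\rightarrow w\!-\!1,l\rangle$ is present, so the computation proceeds past $b$ and, by the proposition stating that the lookup always terminates, ends at a working bucket; that bucket is not $b$, since $b$ now carries a replacement. Hence ``$b$ is consulted during the old lookup'' $\iff$ ``the old lookup returns $b$'', and on precisely those keys the returned bucket changes (to another working bucket), while all other keys are left untouched — which is the statement of minimal disruption.

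The main obstacle I expect is the bookkeeping in the general branch: making rigorous that the two parallel executions agree up to the first consultation of the $b$-entry — in particular that none of the intermediate quantities ($jump(key,n)$, the values $w_{b_i}$, the rehashed indices, and the inner substitution chains) depend on that entry — together with the short case split on whether $b$ is first met as an outer-loop bucket or as an inner-loop index $d$. Once it is observed that a working bucket never appears as the source of a replacement, these pieces fall into place, and balancing this with the termination proposition closes the argument.
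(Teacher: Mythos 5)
Your proof is correct, but it proceeds by a genuinely different route than the paper's. The paper argues by induction on the number $r$ of removed buckets: the base case $r=1$ is worked out explicitly, and the inductive step appeals somewhat informally to the induction hypothesis to justify that keys hitting a previously-removed bucket do not move. Your proof, by contrast, eschews induction entirely and runs the two lookups (with $\mathcal{R}$ and $\mathcal{R}'=\mathcal{R}\cup\{\langle b\rightarrow w\!-\!1,l\rangle\}$) in lockstep, observing that the two executions can diverge only at the first consultation of the $b$-entry, and that in the old execution such a consultation occurs if and only if the lookup returns $b$. This coupling argument makes explicit what the paper's inductive step leaves implicit (the inductive hypothesis concerns removals of \emph{earlier} buckets, not the addition of one more entry to $\mathcal{R}$, so it does not directly yield the claim as stated). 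Your approach is also tighter in that it isolates the precise invariant needed — that $\mathcal{R}$ contains exactly the non-working buckets among $[0,n\!-\!1]$, so the lookup halts precisely on a working bucket — and it yields as a corollary a clean characterisation of the remapped keys as those whose trajectory touches $b$. One small detail worth spelling out in a full write-up: when the first consultation of $b$ happens as the inner-loop index $d$, the inner condition $u\geq w_b$ fails in the new run as well (since $b$ is the last removal, its replacement $w\!-\!1$ is strictly smaller than every previously stored replacement, hence $<w_b$), so the new run escapes the inner loop at $d=b$ and only then has the outer loop pick up the new $b$-entry; your ``proceeds past $b$'' glosses over this branch but the conclusion is unaffected.
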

\begin{proof}
When we remove a bucket $b$, only the keys previously mapped to $b$ should move to a new location. We will prove this property by induction on $r$ (the number of removed buckets).
\begin{itemize}
\item \textbf{r = 1}: If we remove the bucket from the tail, minimal disruption is guaranteed by \textit{Jump}. Otherwise, let us assume to remove a bucket $b$ other than the last. First, we invoke \textit{Jump}. if it returns an index other than $b$ the algorithm ends. The position of the keys previously mapped to a bucket other than $b$ does not change. If \textit{Jump} returns $b$ the key is rehashed and mapped to an index in the range [0,n-2]. If it hits $b$ again, it will be redirected to $c \neq b$. Only keys previously mapped to $b$ will move to another bucket.
\item \textbf{r}: we assume the property to hold for $r \geq 1$.
\item \textbf{r+1}: by hypothesis all the keys were consistently moved to the $n-r$ working buckets. After we remove a new bucket $b$, the lookup algorithm stops if a key hits a working bucket. Keys on working buckets do not move. Keys hitting a previously removed bucket will not move by hypothesis. Finally, if a key hits $b$ (the last removed bucket), it will be rehashed and mapped to an index in the range $[0,n\!-\!r\!-\!2]$ that contains only working buckets. Therefore, only keys mapped to $b$ will move to a working bucket.
\end{itemize}
\end{proof}

\begin{proposition} \label{prop:balance}
Balance holds for Memento
\end{proposition}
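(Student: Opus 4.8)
The plan is to show that every working bucket receives, in expectation, a $1/w$ fraction of the keys, where $w = n - r$ is the number of working buckets. I would argue by induction on $r$, the number of removed buckets, mirroring the structure of the minimal-disruption proof. The base case $r=0$ is exactly the balance guarantee of \emph{Jump} (invoked in Alg.~\ref{alg:lookup} line~2), which by Note~\ref{note:uniformHashFunctions} distributes keys uniformly over $[0,n\!-\!1]$. For the base case $r=1$: if the removed bucket is the tail, \emph{Jump} with parameter $n\!-\!1$ handles it; otherwise bucket $b$ is removed, and the $1/n$ fraction of keys that \emph{Jump} sends to $b$ is rehashed by a uniform hash into $[0,w\!-\!1] = [0,n\!-\!2]$, with position $b$ (if $b \le n\!-\!2$) redirected to $c = w\!-\!1$. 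Since the rehash is uniform and independent, each of the $w = n\!-\!1$ working positions receives $\tfrac{1}{n}\cdot\tfrac{1}{n-1}$ of the total keys from the redirect, which added to the $\tfrac{1}{n}$ that \emph{Jump} already routes directly to each surviving bucket gives $\tfrac{1}{n} + \tfrac{1}{n(n-1)} = \tfrac{1}{n-1} = \tfrac1w$.

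For the inductive step, assume that after $r$ removals the keys are spread uniformly, i.e.\ each of the $w' = n-r$ working buckets holds a $1/w'$ fraction. Now remove a new bucket $b$, which becomes the last removed bucket $l$, and let $w = w'-1$. The key observation is that the new replacement created is $\langle b \rightarrow w\!-\!1, l\rangle$, and crucially $b$'s replacement target $w\!-\!1$ is the \emph{largest} working index, so in the lookup the inner loop's guard $u \ge w_b$ treats $b$ as the end of a chain relative to the new range, while all \emph{previously} removed buckets $d$ have replacement targets $u' \ge w' - 1 > w - 1$ and hence, when reached inside a rehash to the range $[0,w\!-\!1]$, are followed further down their chains. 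I would then partition the keys into (i) those whose current lookup lands on a working bucket other than $b$ — by the inductive hypothesis these keep their assignment and contribute $1/w'$ of the mass to each of the $w$ surviving buckets; and (ii) those that currently land on $b$, a $1/w'$ fraction, which get rehashed uniformly into $[0,w\!-\!1]$ and, after following redirections, land uniformly on the $w$ working buckets — contributing $\tfrac{1}{w'}\cdot\tfrac1w$ to each. Summing: each working bucket ends with $\tfrac{1}{w'} + \tfrac{1}{w'w} = \tfrac{1}{w' }\cdot\tfrac{w+1}{w} = \tfrac{1}{w}$, as required.

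The two things that need care are: first, arguing that the rehash in case (ii) is genuinely independent of the rehashes that produced the earlier assignment, so that conditioning on "the key currently maps to $b$" does not bias the new draw — this follows because each level of recursion calls $hash(key,b)$ with a distinct second argument, and by Note~\ref{note:uniformHashFunctions} these are independent uniform draws. Second, and this is the main obstacle, one must verify that following the redirection chains inside the rehashed range $[0,w\!-\!1]$ really does yield a uniform distribution over the $w$ working buckets rather than piling mass onto chain-ends — this is exactly the subtlety illustrated by Fig.~\ref{fig:lookup-stepTree}. The resolution is structural: within the range $[0,w\!-\!1]$, the $w$ positions consist of $w$ working buckets plus some removed buckets whose indices are $< w$; by Prop.~\ref{prop:lastRepacingIsNewWorkingSize} every removed bucket sitting at an index $< w$ was removed when there were \emph{fewer} than $w$ working buckets, so its chain is followed all the way to a working bucket, and — because the b-array stays dense (each $D_{w}$ holds) — the positions $0,\dots,w\!-\!1$ together with their redirections form exactly the multiset of $w$ working buckets, each hit by precisely one position. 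Hence a uniform draw over $[0,w\!-\!1]$ composed with redirection is a uniform draw over the $w$ working buckets, closing the induction.
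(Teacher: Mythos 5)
Your proof follows the same inductive structure as the paper's (induction on $r$, same base cases, and the same $\tfrac{1}{w'} + \tfrac{1}{w'\,w} = \tfrac{1}{w}$ accounting in the step). You also correctly identify a step that the paper itself glosses over: that a uniform draw over the rehash range $[0,w\!-\!1]$, composed with the replacement chains, must land uniformly on the $w$ working buckets. However, your justification for that step contains a reversed inequality that would in fact support the opposite conclusion. When the newly removed $b$ is hit and $w_b = w$, any bucket $d$ removed \emph{before} $b$ has a replacement $\langle d \rightarrow u, q\rangle$ with $u = c_d$ equal to the working count left by $d$'s removal, and since each removal decreases that count by one and $b$'s removal brought it down to $w$, we have $c_d > w$, not $c_d < w$. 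Your statement that such a $d$ "was removed when there were fewer than $w$ working buckets" would give $u < w_b$, which causes the inner-loop guard $u \geq w_b$ to \emph{fail} and the loop to stop at a non-working bucket — exactly the behavior you need to exclude. The correct direction, $c_d > w = w_b$, is what forces every chain met inside the fresh rehash range to be followed all the way to a working bucket, after which the density invariant gives the bijection from $[0,w\!-\!1]$ onto $\mathcal{W}$. As a smaller point, Note~\ref{note:uniformHashFunctions} asserts only uniformity of each individual hash, not mutual independence of $hash(key,\cdot)$ across different seeds; you invoke that independence explicitly, but it is an unstated extra assumption (the paper's proof uses it silently as well).
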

\begin{proof}
We prove the balance by induction on the number of removed buckets $r$.
\begin{itemize}
\item \textbf{r = 0}: balance is guaranteed by \textit{Jump}.
\item \textbf{r = 1}: first we invoke \textit{Jump}, evenly distributing keys among all the buckets. Every bucket gets $\frac{k}{n}$ keys. The keys mapped on the only removed bucket $b$ are rehashed and mapped to an index in the range $[0,n\!-\!2]$. The hash function is assumed to be uniform, so every index gets $\frac{\frac{k}{n}}{n-1}$ keys. The keys that hit $b$ again are remapped to $c$. Therefore, all the buckets except $b$ will end up having $\frac{k}{n}+\frac{\frac{k}{n}}{n-1}=\frac{k}{n}+\frac{k}{n(n-1)}=\frac{k(n-1)+k}{n(n-1)}=\frac{kn}{n(n-1)}=\frac{k}{n-1}$ keys.
\item \textbf{r}: we assume the property to hold for $r \geq 0$.
\item \textbf{r+1}: we assume the keys to be evenly distributed after removing $r$ buckets. Every working bucket has $\frac{k}{n-r}$ keys. When we remove a new bucket $b$, all the keys previously mapped to other buckets will not be affected (minimal disruption). The keys previously mapped to $b$ will be rehashed and mapped to an index in the range $[0,n-r-2]$. Since we use a uniform hash function, every index in $[0,n\!-\!r\!-\!2]$ gets $\frac{1}{n\!-\!r\!-\!1}$ of the rehashed keys. After the removal of $b$ every working bucket will have $\frac{k}{n-r}+\frac{\frac{k}{n-r}}{n-r-1}=\frac{k}{n-r}+\frac{k}{(n-r-1)(n-r)}=\frac{k(n-r)}{(n-r)(n-r-1)}=\frac{k}{n-r-1}$ keys.
\end{itemize}
\end{proof}

\begin{proposition}
Monotonicity holds for Memento.
\end{proposition}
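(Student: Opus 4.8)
The plan is to show that \textsc{add} (Alg.~\ref{algo:add}) is nothing more than the exact inverse of the most recent \textsc{rem} (Alg.~\ref{algo:remove}), and then to obtain monotonicity as the ``reverse reading'' of minimal disruption (Prop.~\ref{prop:minimalDisruption}).

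First I would record the structural invariant that the removed buckets behave like a LIFO stack whose top is $l$. This is proved by induction on the sequence of \textsc{add}/\textsc{rem} calls: a non-tail \textsc{rem}$(b)$ pushes the tuple $\langle b\rightarrow w\!-\!1,l\rangle$ onto $\mathcal{R}$ and sets $l\gets b$, while \textsc{add} with $\mathcal{R}\neq\emptyset$ pops exactly the tuple keyed by $l$ and restores $l\gets p$; the tail cases ($\mathcal{R}=\emptyset$) likewise match $n\gets n\!-\!1$ against $n\gets n\!+\!1$. Hence, immediately before any \textsc{add} call, the state $\mathcal{S}=\langle n,\mathcal{R},l\rangle$ is obtained from some earlier state $\mathcal{S}'$ by a single \textsc{rem} of the very bucket $b$ that \textsc{add} is about to return, and executing \textsc{add} restores $\mathcal{S}'$ verbatim. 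In particular the value $w\!-\!1$ stored in the popped tuple is still correct, because $|\mathcal{R}|$ --- and therefore $w=n-|\mathcal{R}|$ --- has the same value at pop time as it had at push time.

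Next I would transfer minimal disruption across this state equivalence. Since the lookup function is a deterministic function of the state alone, the key-to-bucket map induced by $\mathcal{S}'$ (the post-\textsc{add} state) and the one induced by $\mathcal{S}$ (the pre-\textsc{add} state) differ only on the set of keys that \textsc{rem}$(b)$ relocated when passing from $\mathcal{S}'$ to $\mathcal{S}$, which by Prop.~\ref{prop:minimalDisruption} is exactly the set of keys mapped to $b$ in $\mathcal{S}'$. Reading the transition in the forward (\textsc{add}) direction: the only keys that move are those keys, every one of them lands on the freshly restored bucket $b$, and no key is ever moved between two buckets that are working both before and after the \textsc{add}. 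This is precisely monotonicity. For the quantitative part I would invoke balance (Prop.~\ref{prop:balance}): if $w$ buckets are working before the \textsc{add}, each holds $k/w$ keys; after the \textsc{add} there are $w\!+\!1$ working buckets (Prop.~\ref{prop:lastRepacingIsNewWorkingSize} identifies $w\!+\!1$ as the new working size) each holding $k/(w\!+\!1)$ keys by the uniform-rehash argument (Note~\ref{note:uniformHashFunctions}), so exactly $k/(w\!+\!1)$ keys migrate to $b$.

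I expect the main obstacle to be the bookkeeping in the LIFO invariant rather than anything conceptual: one has to check that intervening removals and re-additions --- including the edge cases of Sec.~\ref{sec:removingAReplacingBucket} and the ``replace a bucket with itself'' case --- leave the top-of-stack tuple and the counter $|\mathcal{R}|$ untouched, so that when $l$ finally equals $b$ again the popped tuple is identical to the one that was pushed. Once the state-equivalence is nailed down, monotonicity falls out of Prop.~\ref{prop:minimalDisruption} with no further case analysis on replacement chains, since those chains are simply re-traversed identically in the restored state.
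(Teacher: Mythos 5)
Your proof is correct and takes a genuinely different route from the paper. The paper proves monotonicity by induction on the number of removed buckets $r$: in the inductive step it reuses the quantitative conclusions of Props.~\ref{prop:minimalDisruption} and \ref{prop:balance} to count exactly how many keys ``flow back'' to the restored bucket from each surviving working bucket. Your proof instead isolates a state-equivalence lemma --- that \textsc{add} is the exact inverse of the most recent non-tail \textsc{rem}, so the post-\textsc{add} state $\mathcal{S}'$ is identical to the pre-\textsc{rem} state --- and then exploits the fact that the lookup map is a pure function of $\langle n,\mathcal{R},l\rangle$. Monotonicity then becomes the time-reversed reading of minimal disruption, with no per-$r$ bookkeeping: the set of keys whose image changes under $\mathcal{S}\to\mathcal{S}'$ is the same set that changed under $\mathcal{S}'\to\mathcal{S}$, which Prop.~\ref{prop:minimalDisruption} pins down as precisely the keys mapped to $b$ in $\mathcal{S}'$. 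This is more modular and, in a sense, more rigorous, because it makes explicit the ``restoring undoes the removal'' idea that the paper's inductive step uses only informally (``When $b$ gets restored, keys are not remapped anymore''). What the paper's approach buys is that it bypasses the LIFO/state-equivalence bookkeeping entirely, at the cost of redoing the fraction-counting. One small imprecision in your write-up: the claim that ``immediately before any \textsc{add} call, the state is obtained from some earlier state by a single \textsc{rem}'' fails literally when $\mathcal{R}=\emptyset$ and $n$ already equals or exceeds the initial size --- that \textsc{add} appends a bucket that was never removed. You should split that tail case off and dispatch it directly via \textit{Jump}'s monotonicity (exactly the paper's $r=0$ base case), reserving the state-equivalence argument for $\mathcal{R}\neq\emptyset$. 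With that split, and the LIFO invariant proved carefully as you sketch, the argument is sound.
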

\begin{proof}
When a new bucket is added, keys only move from an existing bucket to the new one, but not between existing ones. We will prove this property by induction on the number of removed buckets $r$.
\begin{itemize}
\item \textbf{r = 0}: there are no removed buckets. The new bucket is added to the tail and monotonicity is guaranteed by \textit{Jump}.
\item \textbf{r = 1}: we have only one removed bucket $b$. Keys that are sent to $b$ by \textit{Jump} in the first place are evenly remapped to the remaining buckets. As proven in \ref{prop:balance}, every bucket other than $b$ gets $\frac{1}{n-1}$ of the keys previously mapped to $b$. When we add a new node to the cluster, \textit{Memento} will assign to such a node the bucket $b$. Keys previously mapped to $b$ will move to $b$ again, so from every node other than $b$, $\frac{k}{n(n-1)}$ keys will move to $b$. A total of $\frac{k}{n}$ keys will move to $b$. Keys mapped by \textit{Jump} to buckets other than $b$ are not affected by this operation and, therefore, will not move.
\item \textbf{r}: we assume the property to hold for $r \geq 1$.
\item \textbf{r+1}: by hypothesis monotonicity holds when up to $r$ buckets are removed. We assume to have $r+1$ removed buckets and add a new node to the cluster. \textit{Memento} restores the last removed bucket. As discussed in \ref{prop:minimalDisruption}, keys were remapped to the working buckets when the last bucket $b$ was removed. Each working bucket got $\frac{1}{n-r-1}$ of the keys previously mapped to $b$. When $b$ gets restored, keys are not remapped anymore. Therefore, any working bucket other than $b$ will send $\frac{k}{(n-r)(n-r-1)}$ keys back to $b$. A total of $\frac{k}{n-r}$ keys will move to $b$, and no keys will move to any other node.
\end{itemize}
\end{proof}

\section{Computational complexity}
In this section, we explore the computational complexity of the functions described earlier. The complexity related to initializing the data structure, adding and removing buckets is considered negligible, thus requiring only a brief explanation. For the lookup function we provide instead a formal characterisation to elaborate on its intricate nature.

\subsection{Initialization}
During the initialization phase, we assume all nodes to be working and each bucket to match its own position. The set of replacements is initialized as an empty set (or hash table) and the size of the b-array is stored in the algorithm state. The initialization takes therefore $O(1)$ and uses minimal memory.

\subsection{Removing a bucket}
If $b=n-1$ and $\mathcal{R}=\emptyset$ the algorithm just updates the size of the b-array which takes $O(1)$.
Otherwise, a new replacement is created and added to the replacement set. The replacement is a tuple and can be created in constant time. The replacement can be added in constant time by using a hash table to represent $\mathcal{R}$. So, the overall complexity of removing a bucket is $O(1)$.

\subsection{Adding a bucket}
If $\mathcal{R}=\emptyset$ (no previously removed buckets), we assume the bucket to be added to the tail of the b-array. In this case, the algorithm stores such information by increasing the size of the b-array, which takes $O(1)$.
Otherwise, the algorithm restores the last removed bucket. A replacement can be removed in constant time by using a hash table to represent $\mathcal{R}$. So, the overall complexity of adding a bucket is $O(1)$.

\subsection{Lookup}
In this section, we derive an upper bound to the average number of operations needed to lookup a key. The algorithm starts invoking $b\gets jump(key,n)$ which takes $O(\ln(n))$ \cite{lamping2014fast}. If the first step hits a working bucket the algorithm is done. Otherwise there are two nested loops. We will start by giving an upper bound to the number of iterations of the external loop (Alg.~\ref{alg:lookup} lines 3-11).

\begin{proposition}[\textbf{External loop computational complexity}]\label{prop:outer}
Let $\tau$ denote the number of iterations of the external loop in Alg.~\ref{alg:lookup} given a b-array of size $n$ with $w \leq n$ working buckets and a given key. Due to the stochastic nature of the underlying process, $\tau$ is a random variable and we have that: 
\begin{itemize}
\item[(i)] the expectation of $\tau$ is upper-bounded by $ln(\frac{n}{w})$; and 
\item[(ii)] the standard deviation of $\tau$ is upper-bounded by $\sqrt{ln(\frac{n}{w})}$.
\end{itemize}
\end{proposition}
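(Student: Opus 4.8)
The plan is to model the external loop as a stochastic process on the ``working size'' variable. Write $w_0 := n$ and let $w_1, w_2, \dots$ be the successive values of $w_b$ (equivalently, the right endpoint of the shrinking lookup range) at the start of each iteration of the external loop. At iteration $i$, the key is rehashed uniformly into $[0, w_{i-1}-1]$; with probability $w/w_{i-1}$ the hit position (after following the internal-loop redirections, which never increase the range below $w$ and always land on a working bucket or on a bucket removed after $b$) corresponds to a working bucket and the loop halts. Otherwise, the hit bucket $d$ is one removed \emph{before} the current pivot, so by Prop.~\ref{prop:lastRepacingIsNewWorkingSize} the new working size $w_i = u < w_{i-1}$, and moreover $w_i \geq w$ always. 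The crucial observation — which I would state and justify carefully — is that conditioned on not halting, the new value $w_i$ is uniformly distributed among the possible ``removed-bucket'' positions in $[w, w_{i-1}-1]$, so that $\mathbb{E}[w_i \mid w_{i-1}, \text{no halt}]$ can be controlled; in fact the key inequality I want is $\mathbb{E}[\,\ln(w_i/w) \mid w_{i-1}\,] \leq \ln(w_{i-1}/w) - 1$ on the event of not halting, weighted appropriately, or more cleanly that the sequence $X_i := \ln(w_i / w)$ satisfies $\mathbb{E}[X_i \mid X_{i-1}] \leq X_{i-1} - \mathbb{P}(\text{survive to step } i \mid \ldots)$.

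Concretely, I would define $X_i = \ln(w_i/w)$ with $X_0 = \ln(n/w)$, and show the process behaves like a random walk that decreases by a $\mathrm{Geometric}$-type amount. The cleanest route: observe that $\tau$ is the first time $w_i = w$, and that at each surviving step the range strictly shrinks, so $w$ plays the role of an absorbing barrier. To get part (i), I would argue that $\{X_i\}$ is a nonnegative supermartingale-like sequence with $\mathbb{E}[X_i - X_{i-1} \mid \mathcal{F}_{i-1}] \leq -\mathbb{P}(\tau \geq i \mid \mathcal{F}_{i-1})$, or equivalently set up a one-step analysis showing that the probability of surviving one more iteration times the expected log-shrinkage is bounded below by $1$ in the relevant sense; summing/telescoping the drift from $X_0 = \ln(n/w)$ down to $0$ then yields $\mathbb{E}[\tau] \leq \ln(n/w)$. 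An alternative and perhaps more transparent approach is to show $\tau$ is stochastically dominated by a sum of independent indicator variables or by a Poisson-type count arising from the classical ``records'' / coupon argument: each ``level'' $\ell \in \{w, w+1, \dots, n-1\}$ is visited by the shrinking range with probability at most $1/(\ell - w + 1)$ or similar, and $\tau$ is bounded by the number of such levels actually landed on, giving $\mathbb{E}[\tau] \leq \sum_{j=1}^{n-w} 1/j \approx \ln(n/w)$ — but care is needed because the hitting probabilities are conditional, not independent. For part (ii), once I have the stochastic-domination picture by a sum of independent (or negatively associated) Bernoulli variables $B_j$ with $\mathbb{E}[B_j] = p_j$ and $\sum p_j \leq \ln(n/w)$, then $\mathrm{Var}(\tau) \leq \sum p_j(1-p_j) \leq \sum p_j \leq \ln(n/w)$, so the standard deviation is at most $\sqrt{\ln(n/w)}$.

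The key steps in order: (1) formalize the state sequence $w_0 = n \geq w_1 \geq w_2 \geq \dots \geq w$ and the halting rule $\tau = \min\{i : w_i = w\}$, using Prop.~\ref{prop:lastRepacingIsNewWorkingSize} to identify $w_i$ with the redirect target and the internal-loop termination argument to ensure each step lands consistently; (2) establish the one-step transition bound — conditioned on surviving, the range shrinks and in expectation $\ln(w_i/w)$ drops by at least $1$, or equivalently exhibit the domination of $\tau$ by a sum of independent Bernoullis with mean-sum $\le \ln(n/w)$; (3) telescope/sum to get $\mathbb{E}[\tau] \le \ln(n/w)$; (4) bound the variance by the same mean-sum to get the standard-deviation claim.

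The hard part will be step (2): pinning down the exact conditional distribution of the next working-size value $w_i$ given the current one, and proving the logarithmic drift (or the Bernoulli domination) rigorously. The subtlety is that the positions of removed buckets in $[w, w_{i-1}-1]$ are not arbitrary — they are constrained by the LIFO restore order and the chaining rules described in Sections~\ref{sec:removingAReplacingBucket} and the replacement construction — so I must verify that, under the uniform-rehash assumption (Note~\ref{note:uniformHashFunctions}), the relevant conditional landing distribution is exactly uniform over the current range, which is what makes the harmonic-sum / log bound go through. I would also need to handle the boundary bookkeeping ($w_i$ could equal $w$ exactly, triggering absorption, versus $w_i > w$) and confirm the internal loop's $u \geq w_b$ guard does not distort the distribution of where the external loop resumes. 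Everything after that is a routine comparison with the coupon-collector-style harmonic sum and Bernoulli variance bound.
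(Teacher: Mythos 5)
Your proposal is essentially correct, and your second route — stochastic domination by a sum of independent Bernoullis via a ``records'' argument — is not merely a bound but is in fact exactly what is going on, so the hedging about negative association is unnecessary. Here is the comparison with the paper's argument. The paper re-indexes the removed buckets as $r_{n-w-1}\!\to\!\dots\!\to\!r_0\!\to\!r_{-1}\!\to\!\dots\!\to\!r_{-w}$ so that every bucket reachable at any step is $r_j$ for some $j$, observes (after the same uniformity argument you flag) that one external-loop iteration maps an index $i\ge 0$ to a uniform $j\in\{-w,\dots,i-1\}$, and then derives a recursion for the moment generating function $\phi_i(s)$ of the iteration count $\tau_i$; solving it gives the closed form $\phi_i(s)=e^s\prod_{k=1}^{i}\frac{w+k-1+e^s}{w+k}$, from which mean and variance follow by differentiation and are bounded by harmonic sums. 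Your records argument obtains exactly the same distribution more transparently: couple the descent $i>j_1>j_2>\dots$ with the left-to-right minima of a uniform random permutation $\sigma$ of $\{-w,\dots,i-1\}$, and let $I_k$ (for $0\le k<i$) be the indicator that value $k$ is a left-to-right minimum, i.e.\ that $k$ precedes all of $\{-w,\dots,k-1\}$ in $\sigma$. Inserting values $-w,\dots,i-1$ one at a time into a random position shows the $I_k$ are \emph{independent} with $I_k\sim\mathrm{Bernoulli}\bigl(\tfrac{1}{w+k+1}\bigr)$, and $\tau_i=1+\sum_{k=0}^{i-1}I_k$, which reproduces the paper's $\mathbb{E}[\tau_i]=1+\sum_{k=1}^{i}\tfrac{1}{w+k}$ and $\mathbb{V}(\tau_i)=\sum_{k=1}^{i}\tfrac{w+k-1}{(w+k)^2}$ immediately. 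Indeed, each factor $\tfrac{w+k-1+e^s}{w+k}$ in the paper's product is precisely the MGF of $\mathrm{Bernoulli}(\tfrac{1}{w+k})$, so the paper's closed form \emph{is} the Bernoulli-sum representation; your decomposition just names the probabilistic structure the recursion was silently encoding. Two caveats to nail down: (a) the crucial step both you and the paper must justify is that rehashing into $[0,w_b\!-\!1]$ and following internal-loop redirections yields a bucket whose index is uniform over $\{-w,\dots,i-1\}$; this rests on the density invariant (Prop.~\ref{prop:lastRepacingIsNewWorkingSize}) making that map a bijection, and should be stated explicitly rather than waved at; (b) your first route, the supermartingale/log-drift sketch, is considerably murkier — the one-step log-drop is not exactly $1$ and controlling the variance via optional-stopping would need work — so it is better to commit to the records argument outright, which gives both moments exactly and matches the paper's conclusion.
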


\begin{proof}
We prove the proposition by deriving a closed-form expression for the \emph{moment generating function} (MGF, \cite{cressie1986moment}) of $\tau$, that is defined as:
\begin{equation}
\phi_\tau(s) := \mathbb{E}[e^{s\tau}]\,,
\end{equation}
with $\mathbb{E}$ denoting the expectation of its argument computed w.r.t. the distribution $\mathbb{P}(\tau)$. It is a straightforward exercise to check that the expectation and the variance of $\tau$ can be expressed by means of the first two derivatives of the MGF computed for $s=0$:
\begin{eqnarray}\label{eq: first moment}
\mathbb{E}[\tau] &=& \phi'_\tau(0)\,,\\
\mathbb{V}(\tau) &=& \phi_\tau''(0)-(\phi_\tau'(0))^2\,.
\label{eq: standard deviation}
\end{eqnarray}

Remember that $\mathcal{W}_b$ denotes the set of working buckets after $b$ was removed. When the first step (i.e., $jump(key,n)$) hits a removed bucket $b$, there must exist a replacement tuple $\langle b\rightarrow c,p \rangle\in\mathcal{R}$ such that $c\in \mathcal{W}_b$ represents both the replacing bucket (see Prop.~\ref{prop:lastRepacingIsNewWorkingSize}) and the number of working buckets after $b$ was removed (i.e., $c=|\mathcal{W}_b|$). The existence of such a tuple is the entering condition for the external loop. The key is consequently rehashed to an index in the interval $[0,c\!-\!1]$ with uniform probability, and $b$ is replaced by either a working bucket or a bucket removed after $b$. The interval $[0,c\!-\!1]$ becomes smaller after each iteration, and the loop terminates when we reach the interval containing only working buckets.

Consider a fixed sequence of removed buckets $r_{n-w-1} \rightarrow \dots \rightarrow r_0$ where $r_{n-w-1}$ is the first removed bucket and $r_0$ the last one. Denote as $\mathcal{W}_{r_i}$ the set of working buckets after the removal of $r_i$ (hence, $\mathcal{W}_{r_0}=\mathcal{W}$). We might regard the working buckets as buckets that have not been already removed, and extend the sequence by adding working buckets with negative indexes to the tail, i.e., $r_{n-w-1} \rightarrow \dots \rightarrow r_0 \rightarrow r_{-1} \rightarrow \dots \rightarrow r_{-w}$. In this way, every bucket $b$ is equal to $r_i$ for some $i$, being working iff $i<0$. In particular, if $b=jump(key,n)$ is the bucket returned by the first instruction of the algorithm (line 2) there is a $i$ such that $b=r_i$. If $b\in\mathcal{W}$ (i.e., $i<0$) the loop exits and $\tau=1$. If this is not the case, we call $\tau_i$ the number of iterations of the loop when $i\ge 0$. 

When the algorithm enters the loop, a new bucket $d\in\mathcal{W}_b$ is picked with uniform probability. Let us consider a discrete random variable $j$ defined as the index (smaller than $i$ by construction) such that $d=r_j$. If $j<0$, we have that $d$ is working, the loop ends and $\tau_i=1$. Otherwise, if $0\leq j<i$, by the uniform hashing assumption (see Note~\ref{note:uniformHashFunctions}), $\tau_i=1+\tau_j$. Finally, notice that, since $\mathcal{W}_{r_0}=\mathcal{W}$, $\tau_0=1$.

Consider the MGF of $\tau_i$, which, with a small abuse of notation, is denoted as:
\begin{equation} \label{eq:phii}
\phi_i(s):=\mathbb{E}[e^{s\tau_i}]\,.
\end{equation}
This allows us to apply the law of total expectation to $\phi_i(s)$ by conditioning w.r.t. $j$, i.e.,
\begin{equation}\label{eq: phi of i}
\begin{split}
\phi_i(s) = & \mathbb{P}(j<0) \cdot\mathbb{E}[e^{s\tau_i}|j<0]\\
&+\sum_{k=0}^{i-1} \mathbb{P}(j=k) \cdot \mathbb{E}[e^{s\tau_i}|j=k]\,.
\end{split}
\end{equation}
Distribution $\mathbb{P}(j)$ is uniform over its $w+i$ integer values. We can therefore rewrite Eq.~\eqref{eq: phi of i} as:
\begin{equation}\label{eq: phi of i2}
\begin{split}
\phi_i(s)=&\frac{w}{w+i} \cdot \mathbb{E}[e^{s\tau_i}|j<0]\\
&+\frac{1}{w+i} \sum_{k=0}^{i-1} \mathbb{E}[e^{s\tau_i}|j=k]\,.
\end{split}    
\end{equation}
For $j<0$ (i.e., $d\in\mathcal{W}$), as the loop ends after one iteration and a single hash calculation is done, we have $\tau_i=1$ and hence:
\begin{equation}\label{eq: case d in W}
    \mathbb{E}[e^{s\tau_i}|j<0]=e^s\,.
\end{equation}
For each $j=k$ with $0\leq k<i$, we already proved that $\tau_i=1+\tau_k$ and hence:
\begin{equation} \label{eq: case d removed}
\mathbb{E}[e^{s\tau_i}|j=k]=\mathbb{E}[e^{s(1+\tau_k)}]%=\mathbb{E}%[e^s\cdot e^{s\tau_j}]\\
%&
=e^s\mathbb{E}[e^{s\tau_k}]\,.
\end{equation}
Overall, by Eqs.~(\ref{eq: case d in W}) and (\ref{eq: case d removed})
and the definition in Eq.~\eqref{eq:phii}, we can rewrite Eq.~\eqref{eq: phi of i2} as follows:
\begin{equation}\label{eq: recursive phi}
\phi_i(s)=\frac{w e^s}{w+i}+\frac{e^s}{w+i}\sum_{k=0}^{i-1}\phi_k(s)\,,
\end{equation}
and hence, by simple algebra:
\begin{equation} \label{eq: closed form 1}
\frac{w+i}{e^s}\phi_i(s)=w+\sum_{k=0}^{i-1}\phi_k(s)\,,
\end{equation}
and finally, by putting in evidence $\phi_{i-1}(s)$:
\begin{equation} \label{eq: closed form 1b}
\frac{w+i}{e^s}\phi_i(s)=w+\sum_{k=0}^{i-2}\phi_k(s)+\phi_{i-1}(s)\,.
\end{equation}
If we rewrite Eq.~(\ref{eq: closed form 1}) with $i-1$ instead of $i$, we obtain on the r.h.s. the first two terms of the r.h.s. of Eq.~(\ref{eq: closed form 1b}). The latter equation might therefore rewrite as:
\begin{equation} \label{eq: closed form 2}
 \frac{w+i}{e^s}\phi_i(s)=\frac{w+i-1}{e^s}\phi_{i-1}(s)+\phi_{i-1}(s)\,,%\\
\end{equation} 
and hence, by simple algebra, as the following recursion:
\begin{equation} \label{eq: closed form 3}
\phi_i(s)=\frac{w+i-1+e^s}{w+i}\phi_{i-1}(s)\,.
\end{equation}

As $\phi_0(s)=e^s$, for each $i>0$ we obtain the following closed-form expression for Eq.~(\ref{eq: closed form 3}):
\begin{equation} \label{eq: closed form final}
\phi_i(s)=e^s\prod_{k=1}^i\frac{w+k-1+e^s}{w+k}\,,
\end{equation} 
and hence, by taking the logarithm on both sides:
\begin{equation} \label{eq: closed form log}
\ln(\phi_i(s))=s+\sum_{j=1}^i\ln \left(\frac{w+j-1+e^s}{w+j}\right)\,,
\end{equation}
and finally, by taking the derivative on both sides:
\begin{equation} \label{eq: first derivative}
\frac{\phi_i'(s)}{\phi_i(s)}=1+\sum_{k=1}^i\frac{e^s}{w+k-1+e^s}\,.
\end{equation}
By the MGF definition, we trivially have $\phi_i(0)=1$. Thus, from Eqs.~\eqref{eq: first moment} and \eqref{eq: first derivative}:
\begin{equation}\label{eq:taui}
\mathbb{E}[\tau_i]=\phi_i'(0)=1+\sum_{k=1}^i\frac{1}{w+k}\,.
\end{equation}
%$$\sum_{k=1}^n \frac{1}{k} - \sum_{k=1}^w \frac{1}{k} = \sum_{k=w+1}^{n} \frac{1}{k}=\sum_{k'=1}^{n-w} \frac{1}{w+k'}\,.$$
Note that the number $\tau$ of iterations for the external loop is stochastically larger when $b=r_{n-w}$ is the first removed bucket. Therefore, we can upper-bound the expectation of $\tau$ and, by Eq.~\eqref{eq:taui}, write:
\begin{equation}
\mathbb{E}[\tau]\leq\mathbb{E}[\tau_{n-w}]=\phi'_{n-w}(0)
=1+\sum_{k=1}^{n-w}\frac{1}{w+k}\,.
\end{equation}
The proof of (i) finally follows from the fact that the sum in the r.h.s. of Eq.~\eqref{eq:taui} is equal to the difference between the $n$-th and th $w$-th \emph{harmonic numbers} being therefore dominated by the difference of the corresponding natural logarithms, i.e.,\footnote{If $H_k$ is the $k$-th harmonic number, then $H_k=\ln k + \gamma_k$, where $\gamma_k$ is a monotonic function of $k$ \cite{sun2012arithmetic}.}
\begin{equation}\label{eq:bound}
1+\sum_{k=1}^{n-w}\frac{1}{w+k} \le 1 + \ln\left(\frac{n}{w}\right)\,.
\end{equation}

We similarly upper-bound the variance (and then the standard deviation), by first taking the derivative of Eq.~(\ref{eq: first derivative}):
\begin{equation} \label{eq: second derivative}
\frac{\phi''_i(s)\phi_i(s)-(\phi'_i(s))^2}{(\phi_i(s))^2}
=\sum_{k=1}^i \frac{e^s(w+k-1+e^s)-e^{2s}}{(w+k-1+e^s)^2}\,.
\end{equation}

Evaluating Eq.~\eqref{eq: second derivative} for $s=0$ and noting that $\phi_i(0)=1$ we obtain on the l.h.s. the variance of $\tau_i$ as in Eq.~\eqref{eq: standard deviation} and hence:
\begin{equation}
\mathbb{V}(\tau_i)=\sum_{k=1}^i \frac{w+k-1}{(w+k)^2}\,.
\end{equation}
As for the expected value the variance of $\tau$ is stochastically dominated by the worst-case value in $b=r_{n-w}$, thus:
\begin{equation}
\mathbb{V}(\tau) \leq \mathbb{V}(\tau_{n-w})=\sum_{k=1}^{n-w}\frac{w+k-1}{(w+k)^2}\,,
\end{equation}
and hence:
\begin{equation}
\sum_{k=1}^{n-w}\frac{w+k-1}{(w+k)^2}
\leq\sum_{k=1}^{n-w}\frac{w+k}{(w+k)^2}\leq\ln{(\frac{n}{w})}\,,
\end{equation}
where the first inequality follows by simple algebra and the second is because of Eq.~\eqref{eq:bound}. As the standard deviation is the square root of the variance, it is upper-bounded by $\sqrt{\ln(\frac{n}{w})}$. Note that the bounds do not depend on the removal sequence.
\end{proof}

Let us similarly characterise the worst-case complexity of the internal loop of the algorithm by the following result.
\begin{proposition}[\textbf{Internal loop computational complexity}]
Let $\sigma$ denote the number of iterations of the internal loop in Alg.~\ref{alg:lookup} given a b-array of size $n$ with $w \leq n$ working buckets and a given key. Due to the stochastic nature of the underlying process, $\sigma$ is a random variable and we have that:
\begin{itemize}
\item the expectation of $\sigma$ is upper-bounded by $ln(\frac{n}{w})$; and
\item the standard deviation of $\sigma$ is upper-bounded by $\sqrt{ln(\frac{n}{w})}$.
\end{itemize}
\end{proposition}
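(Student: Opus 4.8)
The plan is to follow, almost step by step, the argument used for Prop.~\ref{prop:outer}. I would fix the removal history and re-index the buckets by removal time exactly as there: $r_0$ is the last removed bucket, $r_{n-w-1}$ the first, and the $w$ working buckets are appended as a virtual tail with negative indices $r_{-1},\dots,r_{-w}$, so that every bucket equals $r_i$ for a unique $i$ and is working iff $i<0$. The internal loop (Alg.~\ref{alg:lookup}, lines 7--9) is run only from within an iteration of the external loop that currently sits on a removed bucket $b=r_i$; the entry fixes the threshold $w_b=c$, where $\langle r_i\rightarrow c,\cdot\rangle$ is the stored replacement, and by Prop.~\ref{prop:lastRepacingIsNewWorkingSize} we have $c=|\mathcal{W}_{r_i}|=w+i$. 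I would then let $\sigma_i$ denote the number of follows performed by this execution of the internal loop and study it through its MGF $\psi_i(s):=\mathbb{E}[e^{s\sigma_i}]$; as for $\tau$, the worst case is $i=n-w$, so it suffices to bound $\sigma_{n-w}$.

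The heart of the proof is to obtain for $\psi_i$ the same recursion satisfied by $\phi_i$ in Prop.~\ref{prop:outer}. The external loop rehashes the key to a position $d$ uniform in $[0,w_b-1]$, and the internal loop resolves $d$ — by chasing replacements whose target value is $\ge w_b$ — to the unique bucket of $\mathcal{W}_{r_i}$ that occupied position $d$ when the b-array had size $w_b$; in particular the resolved bucket is uniform over the $w+i$ elements of $\mathcal{W}_{r_i}$, which is the fact already used implicitly in the proof of Prop.~\ref{prop:outer}. I would then decompose the chase: if the picked position already holds a valid bucket (working, or removed after $r_i$) the chase stops immediately; otherwise it follows one redirect into a strictly ``deeper'' position and continues, and the continuation from that position is distributed exactly like the chase one gets when the external loop later sits on the corresponding $r_j$ with $0\le j<i$. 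Carrying out this conditioning against the uniform law over the $w+i$ values, in complete analogy with Eqs.~\eqref{eq: phi of i}--\eqref{eq: closed form 3}, should collapse to the recursion $\psi_i(s)=\frac{w+i-1+e^s}{w+i}\,\psi_{i-1}(s)$ with $\psi_0(s)=e^s$, hence to the same closed form as Eq.~\eqref{eq: closed form final}. From there the two assertions follow by the identical computation: differentiating $\ln\psi_i$ once and twice at $s=0$ gives $\mathbb{E}[\sigma_i]=1+\sum_{k=1}^i\frac{1}{w+k}$ and $\mathbb{V}(\sigma_i)=\sum_{k=1}^i\frac{w+k-1}{(w+k)^2}$, and the worst case $i=n-w$ together with the harmonic-number bound of Eq.~\eqref{eq:bound} yields $\mathbb{E}[\sigma]\le 1+\ln(\tfrac{n}{w})=O(\ln(\tfrac{n}{w}))$ and $\mathbb{V}(\sigma)\le\ln(\tfrac{n}{w})$, whence the standard-deviation bound $\sqrt{\ln(\tfrac{n}{w})}$.

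The step I expect to be the real obstacle is the structural correspondence invoked in the middle paragraph: one must argue rigorously that the internal loop's ``chase the replacement chain while the target value is $\ge w_b$'' dynamics — including the two edge cases discussed in Sec.~\ref{sec:removingAReplacingBucket} (a replacing bucket being itself removed) and in the following subsection (a bucket replaced by itself) — realises exactly the map sending a position $d$ to the bucket of $\mathcal{W}_{r_i}$ sitting there when the b-array had size $w_b$, and that the length of this chase, as $d$ ranges uniformly, admits the same recursive decomposition as the external iteration count, so that the conditioning identities of Prop.~\ref{prop:outer} transfer verbatim. This forces one to be careful about the distinction between b-array positions and bucket values (Note~\ref{node:accessByIndex}) and to exploit the bijection between $[0,w_b-1]$ and $\mathcal{W}_{r_i}$ that holds at the moment $r_i$ was removed. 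Once that correspondence is in place, everything downstream is a transcription of the computations already carried out for the external loop.
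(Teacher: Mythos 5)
Your setup and notation match the paper, and the structural observation that the chase from $d$ realises the map $d\mapsto\mathcal{N}_{t(r_i)}[d]$ (a bijection from $[0,w_b\!-\!1]$ onto $\mathcal{W}_{r_i}$, hence a uniformly resolved bucket) is a correct and useful way to read the inner loop. However, the central analytic step is indexed the wrong way, and this flips the worst case to the wrong end of the removal sequence.

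You define $\sigma_i$ as the number of follows \emph{when the external loop sits on} $b=r_i$, and then assert ``as for $\tau$, the worst case is $i=n-w$,'' i.e., $b$ the \emph{first}-removed bucket. This is backwards. The inner-loop exit test is $u<w_b$ with $w_b=w+i$; the smaller $w_b$, the harder it is to escape, so the chase is stochastically \emph{longest} when $b$ is the \emph{last} removed bucket ($i=0$, $w_b=w$), exactly opposite to $\tau$. A tiny counterexample: take $n=5$, $w=1$, remove $1,0,2,3$ in that order. With $b=r_0=3$ the inner loop follows $0\to3\to1\to4$ (three follows), while with $b=r_3=1$ the expected number of follows is $1/4$. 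Your claimed recursion $\psi_i(s)=\tfrac{w+i-1+e^s}{w+i}\psi_{i-1}(s)$ with $\psi_0(s)=e^s$ would force $\mathbb{E}[\sigma_i]$ to be \emph{increasing} in $i$ starting from $\mathbb{E}[\sigma_0]=1$, which the example already falsifies; so the recursion cannot hold for $\sigma_i$ indexed by $b$.

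The paper's proof avoids this by indexing its $\sigma_i$ by the chase position $d=r_i$ (not by $b$), conditioning on $b=r_x$, using the exit condition $j<x$ in the total-expectation decomposition, and then upper-bounding by setting $x=0$ — precisely the case where escape is hardest and $\sigma$ is stochastically largest. Only \emph{after} that substitution does the recursion coincide with Eq.~\eqref{eq: phi of i2}, and only then is the maximum over the index of $d$ taken. Your proposal has the two indices ($b$'s removal rank, which fixes the threshold, versus $d$'s removal rank, which you actually recurse on) collapsed into one, and the ``continuation is distributed like $\sigma_j$ for the external loop sitting on $r_j$'' step is not correct as stated: after one follow the threshold remains $w_b=w+i$, it does not drop to $w+j$. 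The numerical endpoint you reach coincides with the paper's, but that is a coincidence of two compensating errors, not a valid derivation.
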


\begin{proof}
The proof is very similar to that of Prop.~\ref{prop:outer}. As in the other proof, let us extend the sequence of removals by adding the working buckets to the tail with negative indexes: $r_{n-w-1} \rightarrow \dots \rightarrow r_0 \rightarrow r_{-1} \rightarrow \dots \rightarrow r_{-w}$.

The first step of the algorithm returns a random bucket $b\gets jump(key,n)$ with uniform probability.
But, if we enter the external loop, then $b$ is a removed bucket. Let us consider a discrete random variable $x$ defined as the index such that $b=r_x$. As $b$ is a removed bucket, we have $x \ge 0$.
In this case, the algorithm generates a new bucket $d=r_i$ for some $i$. The condition to enter also the internal loop is $\exists \langle d \rightarrow u,q \rangle \in \mathcal{R}$ and $u \geq w_b$ can be translated as follows:
\begin{quote}
{\emph{The loop ends if $d$ is a working bucket or if $d$ was removed after $b$, i.e., if $b=r_x$, $d=r_i$, and $i<x$. The loop executes instead a new iteration if $i\ge x$.}}
\end{quote}
Let $\sigma_i$ denote the number of iterations of the internal loop when $d=r_i$. If $d$ is working or was removed after $b$ (i.e., $i<x$), the loop ends immediately and $\sigma_i=1$. Otherwise, $d$ is replaced by $u$ (Alg.~\ref{alg:lookup} line 8) with $u=r_j$ for some $j<i$. Since the sequence of removed buckets can be any, also the substitutions $d\rightarrow u$ are randomly defined. Therefore, when $i > x$ the distribution of $\sigma_i$ follows the same distribution as $1+\sigma_j$.
We can therefore proceed as in Eq.~\eqref{eq: phi of i} and write the MGF of $\sigma_i$ as:
\begin{equation} \label{eq: internal loop random x}
\begin{split}
    \mathbb{E}[e^{t\sigma_i}]=&\mathbb{P}(j<x)\cdot\mathbb{E}[e^{t\sigma_i}|j<x]\\
    +&\sum_{k=x}^{i-1}\mathbb{P}(j=k)\cdot\mathbb{E}[e^{t\sigma_{i}}|j=k]\,.
\end{split}
\end{equation}
The number of iterations of the internal loop is stochastically larger when $b=r_0$ is the last removed bucket (i.e., $x=0$). In this case, all the other buckets were removed before $b$. Thus, replacing $x$ with $0$ in Eq.~(\ref{eq: internal loop random x}) holds:
\begin{equation} \label{eq: internal loop with x=0}
\begin{split}
    \mathbb{E}[e^{t\sigma_i}]\le& \mathbb{P}(j<0) \cdot \mathbb{E}[e^{t\sigma_i}|j<0]\\
    &+\sum_{k=0}^{i-1}\mathbb{P}(j=k) \cdot \mathbb{E}[e^{t\sigma_{i}}|j=k]\,,
\end{split}
\end{equation}
and finally:
\begin{equation} \label{eq: internal loop with x=02}
\begin{split}
\mathbb{E}[e^{t\sigma_i}] &\le 
\frac{w}{w+i}\cdot \mathbb{E}[e^{t\sigma_i}|j<0]\\&+\frac{1}{w+i}\sum_{k=0}^{i-1}\mathbb{E}[e^{t\sigma_i}|j=k]\,.
\end{split}
\end{equation}

Since the r.h.s. of Eq.~(\ref{eq: internal loop with x=02}) is the same as in Eq.~(\ref{eq: phi of i2}) and the same conditions hold, we can apply the same steps as for the external loop to obtain the proof of the proposition.
\end{proof}

Finally, we can provide an upper bound to the overall complexity of the two nested loops.

\begin{proposition}[\textbf{Nested loops computational complexity}]
Let $\omega$ denote the overall number of iterations executed by the nested loops in Alg.~\ref{alg:lookup}. Due to the stochastic nature of the underlying process, $\omega$ is a random variable and we have that:
\begin{itemize}
\item[(i)] the expectation of $\omega$ is upper-bounded by $\left[ln(\frac{n}{w})\right]^2$; and
\item[(ii)] the standard deviation of $\omega$ is upper-bounded by $\left[ln(\frac{n}{w})\right]^{\frac{3}{2}}$.
\end{itemize}
\end{proposition}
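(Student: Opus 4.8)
The plan is to write $\omega$ as a random sum and then reduce everything to the two preceding propositions via Wald-type moment identities. Set $L:=\ln(\frac{n}{w})$ and decompose $\omega=\sum_{t=1}^{\tau}\sigma^{(t)}$, where $\tau$ is the number of passes of the external loop (Alg.~\ref{alg:lookup}, lines 3--11) and $\sigma^{(t)}$ is the number of internal-loop iterations performed during the $t$-th such pass. From Prop.~\ref{prop:outer} I have $\mathbb{E}[\tau]\le L$ and $\mathbb{V}(\tau)\le L$, and from its internal-loop analogue $\mathbb{E}[\sigma]\le L$ and $\mathbb{V}(\sigma)\le L$, where $\sigma$ denotes the stochastically worst per-pass internal count (the one obtained by replacing the current sub-range with the full $[0,n\!-\!1]$), which dominates every $\sigma^{(t)}$.

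First I would establish the probabilistic structure that makes the random-sum identities applicable. Along one lookup the external loop visits buckets $b_1=jump(key,n),b_2,\dots$ whose removal times strictly increase, hence are pairwise distinct; the only fresh randomness consumed at pass $t$ is $hash(key,b_t)\bmod w_{b_t}$, so by the uniform-hashing assumption (Note~\ref{note:uniformHashFunctions}) and the fact that distinct seeds are used, these draws are independent across $t$. Consequently $\{\tau\ge t\}$ depends only on the first $t\!-\!1$ draws, and conditionally on the visited sequence the $\sigma^{(t)}$ are independent and each dominated by $\sigma$. For part~(i) the tower rule (Wald's first identity) then gives $\mathbb{E}[\omega]\le\mathbb{E}[\tau]\cdot L\le L^{2}$.

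For part~(ii) I would invoke the law of total variance in the Blackwell--Girshick form for compound sums,
\begin{equation}
\mathbb{V}(\omega)=\mathbb{E}[\tau]\,\mathbb{V}(\sigma)+\mathbb{V}(\tau)\,\big(\mathbb{E}[\sigma]\big)^{2}\le L^{2}+L^{3},
\end{equation}
and then absorb the lower-order $L^{2}$ term into $L^{3}$ --- the same rounding already used in the two previous propositions, where a $1+L$ bound is stated as $L$ --- so that $\mathbb{V}(\omega)\le L^{3}$ and the standard deviation is at most $L^{3/2}$. A fully self-contained alternative would be to replay the recursive moment-generating-function argument of Prop.~\ref{prop:outer}, this time tracking the joint law of the per-pass internal cost and of the index of the bucket landed on, and reading off $\phi_\omega'(0)$ and $\phi_\omega''(0)-(\phi_\omega'(0))^{2}$.

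The hard part will be the middle step: legitimising the Wald/compound-sum machinery, i.e.\ decoupling $\tau$ from the per-pass costs $\sigma^{(t)}$ even though both are driven by the same hash evaluations. The key observation I would lean on is that each external pass re-seeds the hash with a bucket label never previously used on this key, so that the randomness feeding distinct passes is genuinely independent under Note~\ref{note:uniformHashFunctions}; granting that, plus stochastic domination of every $\sigma^{(t)}$ by the worst-case internal count, both bounds follow at once from the two preceding propositions, modulo the harmless absorption of the $L^{2}$ term.
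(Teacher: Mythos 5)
Your route is genuinely different from the paper's. The paper models the total iteration count as a \emph{product}, writing $\omega=\tau\cdot\sigma$, asserts independence of $\tau$ and $\sigma$, and then applies $\mathbb{E}[\tau\sigma]=\mathbb{E}[\tau]\mathbb{E}[\sigma]$ for (i) and the classical product-variance identity $\mathbb{V}(\tau\sigma)=\mathbb{V}(\tau)\mathbb{V}(\sigma)+\mathbb{V}(\tau)\mathbb{E}[\sigma]^{2}+\mathbb{E}[\tau]^{2}\mathbb{V}(\sigma)$ for (ii), absorbing the resulting $L^{2}$ and the factor $2$ on the $L^{3}$ term just as you do. You instead use the \emph{exact} decomposition $\omega=\sum_{t=1}^{\tau}\sigma^{(t)}$ and invoke Wald / Blackwell--Girshick. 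Your decomposition is the more honest one --- the number of iterations of a nested loop is a sum of per-pass costs, not a product of the pass count with a single internal count --- so your starting identity is correct where the paper's is a modeling shortcut that implicitly redefines $\sigma$ as a per-pass average whose distribution is not the one analysed in the internal-loop proposition. In exchange, the compound-sum machinery you lean on demands exactly the hypotheses you flag as delicate: the $\sigma^{(t)}$ are neither identically distributed nor independent of $\tau$, and stochastic domination $\sigma^{(t)}\le_{st}\sigma$ does not by itself transfer to a clean upper bound on $\mathbb{V}(\omega)$ via Blackwell--Girshick (domination controls means and raw moments for nonnegative variables, not variances). You would need to bound $\mathbb{E}[\omega^{2}]$ directly, or prove a conditional version of the variance identity along the filtration generated by the per-pass hashes, to make this step airtight; that is precisely the work the paper avoids by postulating independence of $\tau$ and $\sigma$ outright. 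Both arguments therefore rest on an independence-type assumption that is plausible under Note~\ref{note:uniformHashFunctions} but is not proved, and both need the same lower-order absorption to land on $L^{2}$ and $L^{3/2}$, so the two proofs end up at the same level of rigor by different paths: yours buys a correct starting identity at the cost of heavier probabilistic scaffolding, the paper's buys a short proof at the cost of an imprecise model of $\omega$.
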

\begin{proof}
Since the two loops are nested, the total number of iterations is given by the product of the iterations of the individual loops.
Hence, $\omega = \tau \cdot \sigma$. As those two random variables are independent the expected values factorise, i.e.,
\begin{equation}
\mathbb{E}[\tau\cdot\sigma]=\mathbb{E}[\tau]\cdot\mathbb{E}[\sigma]\,,%\le \left[ \ln\left(\frac{n}{w}\right)\right]^2\,.
\end{equation}
and (i) easily follow from the results about the expectations in the two previous propositions.

Finally, to characterise the standard deviation, we first consider a classical characterisation of the variance of the product of two independent variables such as $\tau$ and $\sigma$:
\begin{equation}
\mathbb{V}(\tau \cdot \sigma) = \mathbb{V}(\tau) \cdot \mathbb{V}(\sigma) + \mathbb{V}(\tau) \cdot \mathbb{E}[\sigma]^2 + \mathbb{E}[\tau]^2 \cdot \mathbb{V}(\sigma)\,,
\end{equation}
and hence, by considerations analogous to those provided for the expectations:
\begin{equation}
\mathbb{V}(\tau \cdot \sigma) \leq \left[ \ln \left(\frac{n}{w}\right) \right]^3\,,
\end{equation}
and thence the upper bound in (ii) for the standard deviation.
\end{proof}

The overall complexity of the \emph{lookup} function is given by summing the complexity of \textit{Jump}, invoked as the first instruction, and the complexity of the nested loops, resulting in $O(ln(n)+[ln(\frac{n}{w})]^2)$.
The following table summarizes the asymptotic complexity in time and space of \textit{Memento} and the other algorithms considered in the paper.
\begin{table}[H]
  \centering
  \caption{Asymptotic complexity}
  \label{table:complexity}
  \begin{threeparttable}
    \begin{tabular}{|l|c|c|c|c|}
      \hline
      & Memento & Jump \\
      \hline
      Memory usage & $\Theta(r)$ & $\Theta(1)$ \\
      Lookup time & $O(ln(n)+[ln(\frac{n}{w})]^2)$ & $O(ln(w))$ \\
      Init time & $\Theta(1)$ & $\Theta(1)$ \\
      Resize time & $\Theta(1)$ & $\Theta(1)$ \\
      \hline
      & Anchor & Dx \\
      \hline
      Memory usage & $\Theta(a)$ & $\Theta(a)$ \\
      Lookup time & $O([ln(\frac{a}{w})]^2)$ & $O(\frac{a}{w})$ \\
      Init time & $\Theta(a)$ & $\Theta(a)$ \\
      Resize time & $\Theta(1)$ & $\Theta(1)$ \\
      \hline
    \end{tabular}
    \begin{tablenotes}
      \footnotesize
      \item[] w = number of working buckets
      \item[] r = number of removed buckets
      \item[] a = overall capacity of the cluster
      \item[] n = size of the b-array
    \end{tablenotes}
  \end{threeparttable}
\end{table}

\section{Benchmarks}
\label{sec:benchmarks}
We conducted a benchmark comparing the performance of \textit{Memento} with \textit{Jump}, \textit{Dx}, and the in-place version of \textit{Anchor} in terms of lookup time and memory usage. The implementations of these algorithms, together with the benchmarking tool, can be found on GitHub \cite{isinGitHub}.
Our findings indicate that \textit{Memento} performs similarly to \textit{Jump} in the best-case scenario and better than \textit{Anchor} and \textit{Dx} in the worst-case scenario (up to $65\%$ of removed nodes). As mentioned in Sec.~\ref{preliminaries}, \textit{Anchor} and \textit{Dx} require setting an upper bound for the cluster's overall capacity (referred to as the value $a$ in Tab.~\ref{table:complexity}). This value impacts both time and memory complexity. Determining the appropriate overall capacity is not a straightforward task. Setting the overall capacity  to the initial cluster size (i.e., $a=w$), or merely doubling the initial cluster size (i.e., $a=2*w$) is unrealistic because real-world scenarios often involve scaling clusters up to tens or hundreds of times the initial size, such as in distributed storage systems. Therefore, we believe that the overall capacity of the cluster should be much bigger than the initial cluster size (i.e., $a >> w$). However, using a significantly larger overall capacity would negatively affect the performance of these algorithms. Hence, we decided to settle for a reasonable compromise that reflects a real-world setup without excessively penalizing performance, setting the overall capacity to be ten times the initial capacity (i.e., $a = 10*w$). In Sec.~\ref{sec:sensitivity} we provide a sensitivity analysis with respect to ratio $\frac{a}{w}$.

\subsection{Evaluation scenarios}
We propose a comprehensive set of empiric evaluation scenarios to determine the behavior of our algorithm. The considered metrics are \textit{lookup time} (i.e., the time required for resolving the bucket which stores a particular key), and \textit{memory usage}. We experimented with three main scenarios:
\begin{itemize}
    \item \textit{Stable}: in this scenarios the nodes (buckets) of the network are stable and no additions or removals are performed (the size of the cluster remains $w$ throughout the experiment);
    \item \textit{One-shot removals}: given an initial size (number of nodes), we remove $90\%$ of the initial nodes ($w$);
    \item \textit{Incremental removals}: starting from an initial size of one million nodes ($w=10^6$), a growing number of nodes (buckets) is removed from the network. The amount of nodes removed will be expressed as a percentage of the initial size ($w$).
\end{itemize}

For every benchmark, we always initialize \textit{Anchor} and \textit{Dx} with $a = 10*w$.
When the network is constructed we assume that nodes are inserted, one at a time, into a corresponding data structure in a known order. This ordering influences the performance of the algorithms, therefore concerning removals, we evaluated two different strategies: for the \textit{best case}, removals follow a Last-In-First-Out (LIFO) order (i.e. the first node to be removed is the last node added to the network), whereas for the \textit{worst case} we remove the nodes in random order. Since \textit{Jump} only supports removals in \textit{LIFO} order, the \textit{worst case} results for that algorithm will also refer to a \textit{LIFO} removal order.

\begin{figure}[ht!]
    \includegraphics[width=0.48\textwidth]{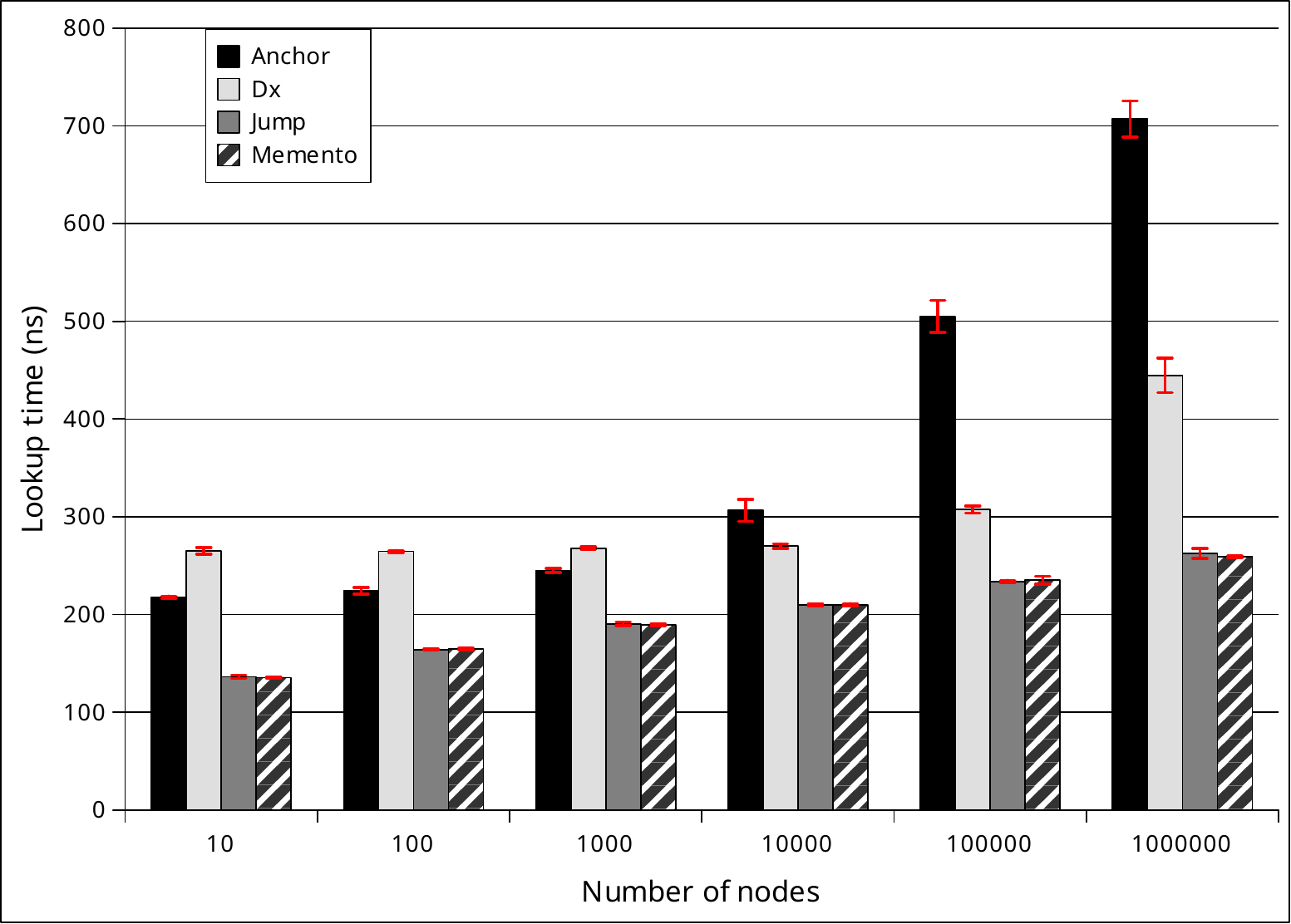}
    \caption{Stable scenario - Lookup time}
    \label{fig:lookup_time_no_removals}
\end{figure}

\begin{figure}[ht!]
    \includegraphics[width=0.48\textwidth]{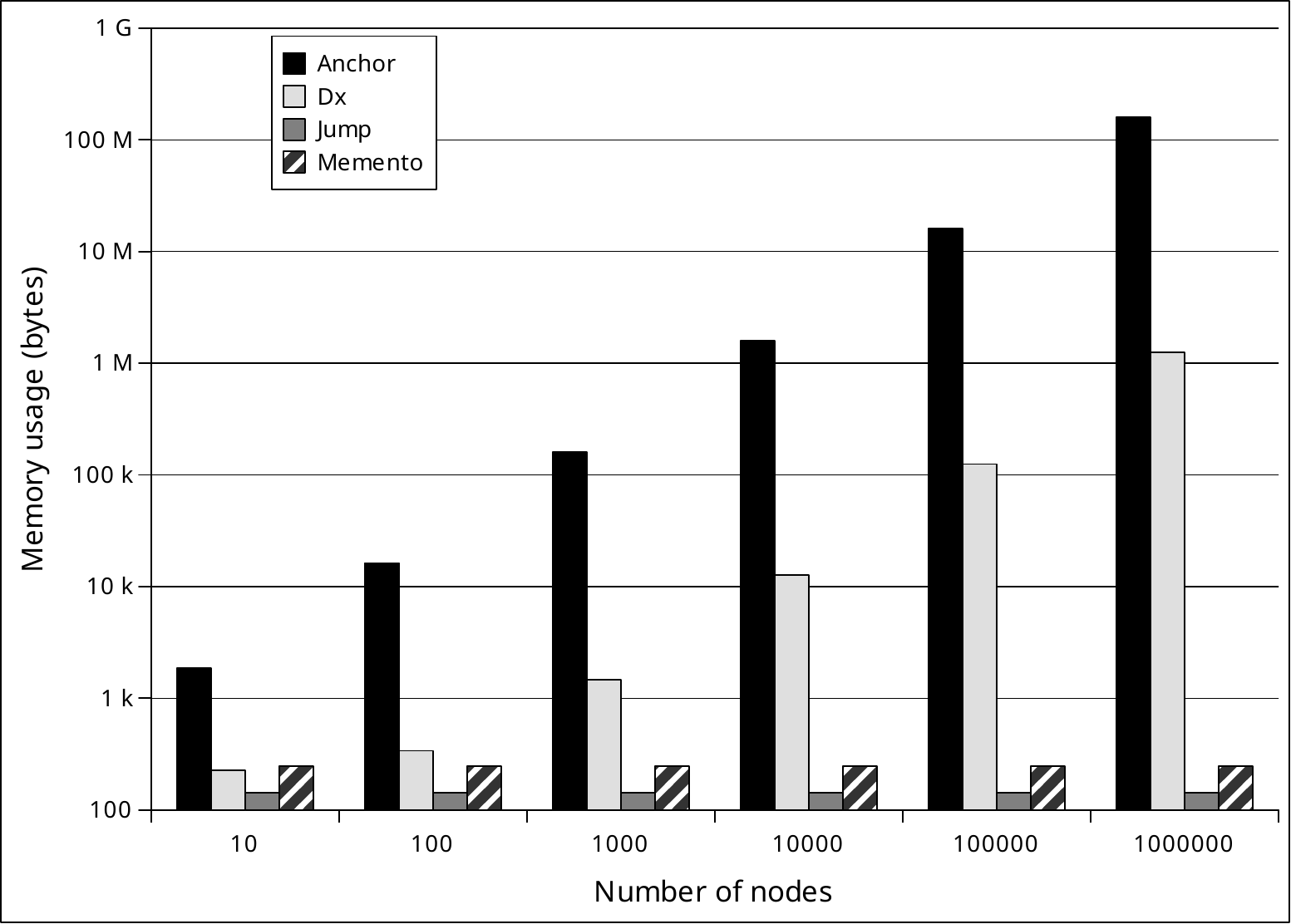}
    \caption{Stable scenario - Memory usage}
    \label{fig:memory_usage_no_removals}
\end{figure}

\subsection{Stable scenario}
In the first evaluation scenario we consider stable networks of different sizes (from ten to one million nodes). Concerning the lookup time, as shown in Fig. \ref{fig:lookup_time_no_removals}, \textit{Memento} performs similarly to \textit{Jump}, and noticeably better than both \textit{Anchor} and {Dx}. Regarding memory usage, \textit{Jump} does clearly benefit from the absence of an internal data structure, which results in the lowest requirement (as shown in Fig. \ref{fig:memory_usage_no_removals} and \ref{fig:memory_usage_wc}). With a constant memory usage, \textit{Memento} is on par with \textit{Jump}, whereas \textit{Anchor} (which has to keep track of the working buckets) and \textit{Dx} (which optimizes memory usage by using a bit-array to mark available buckets) are the worst performers.

\begin{figure}[ht!]
    \includegraphics[width=0.48\textwidth]{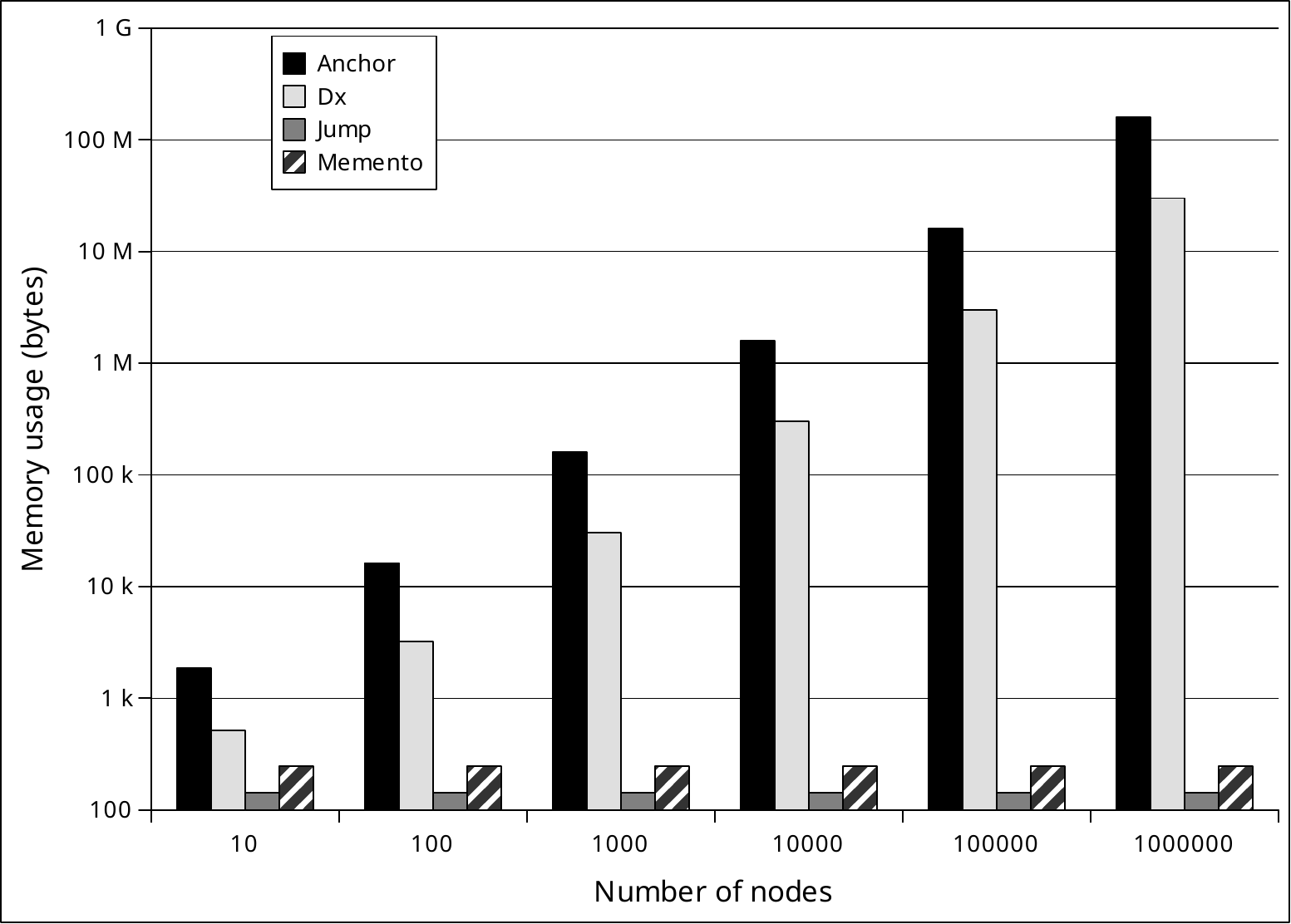}
    \caption{One-shot removals - Memory usage (best case)}
     \label{fig:memory_usage_bc}
\end{figure}

\begin{figure}[ht!]
    \includegraphics[width=0.48\textwidth]{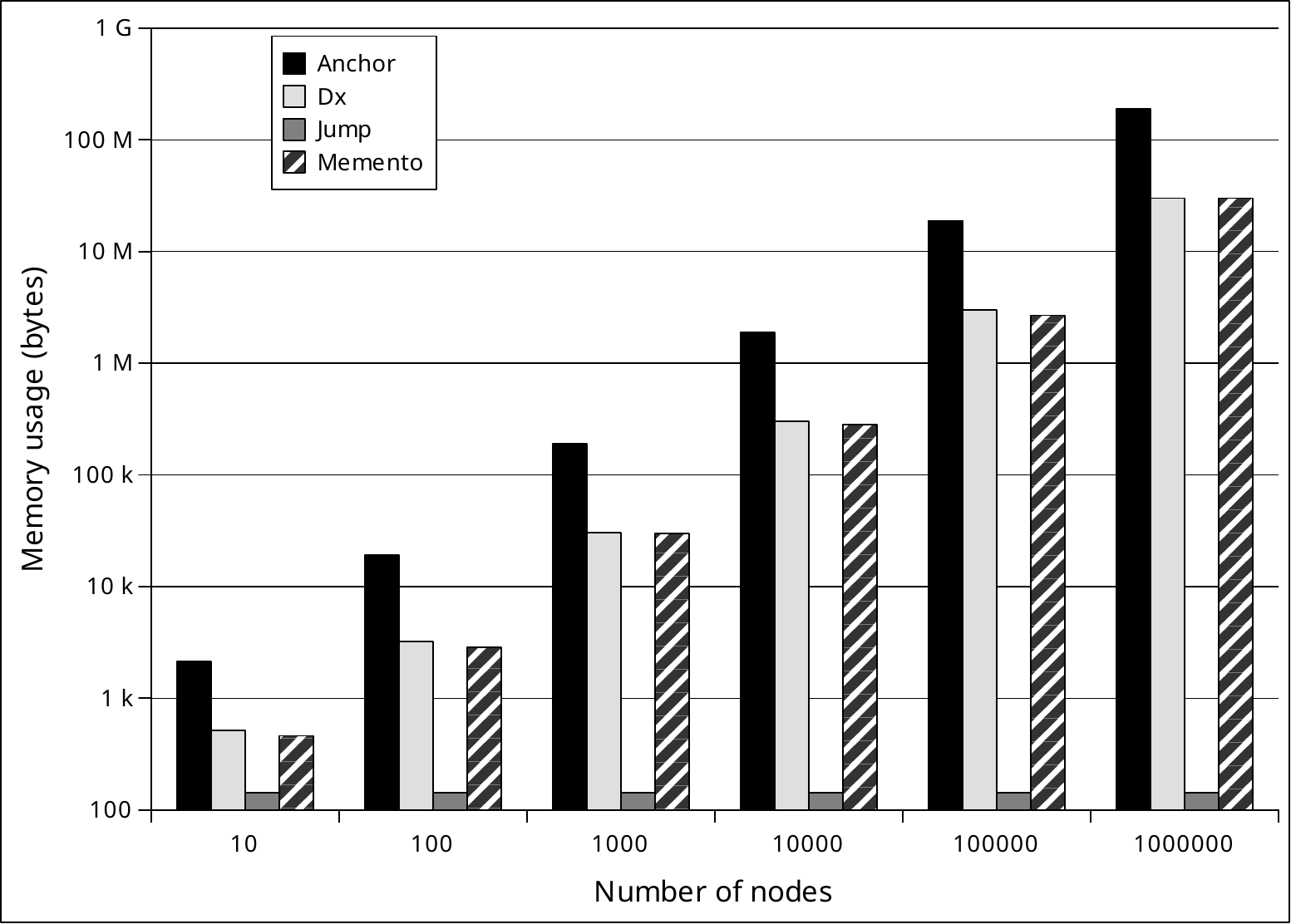}
    \caption{One-shot removals - Memory usage (worst case)}
    \label{fig:memory_usage_wc}
\end{figure}

\subsection{One-shot removals}
When nodes are removed, all algorithms apart from \textit{Jump} need to store additional information into their data structure. This behavior is reflected in an increased memory usage. As shown in Fig.~\ref{fig:memory_usage_bc}, in the best case (namely when nodes are removed in a \textit{LIFO} order), both \textit{Memento} and \textit{Jump} use very little memory and exhibit constant requirements regardless of the size of the network. On the contrary, \textit{Dx} and \textit{Anchor} actively keep track of removed nodes, therefore increasing their memory usage. In the worst-case scenario (nodes removed randomly), \textit{Memento} also needs to update its state and consumed memory increases (Fig.~\ref{fig:memory_usage_wc}). To notice that, even in its worst case scenario, \textit{Memento} needs less memory than \textit{Anchor} and \textit{Dx}.

\begin{figure}[ht!]
    \includegraphics[width=0.48\textwidth]{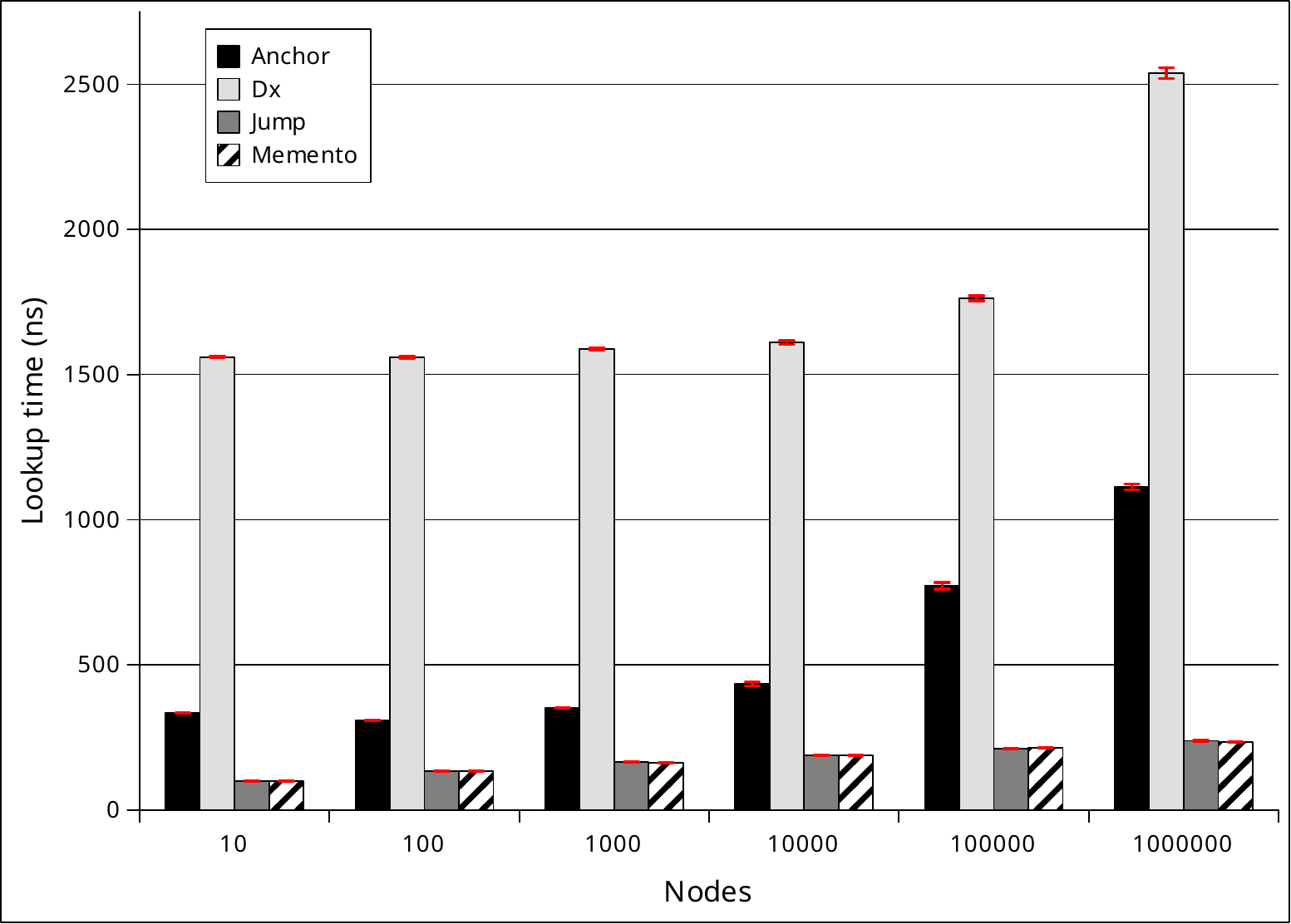}
    \caption{One-shot Removals - Lookup time (best case)}
    \label{fig:lookup_time_bc}
\end{figure}

\begin{figure}[ht!]
    \includegraphics[width=0.48\textwidth]{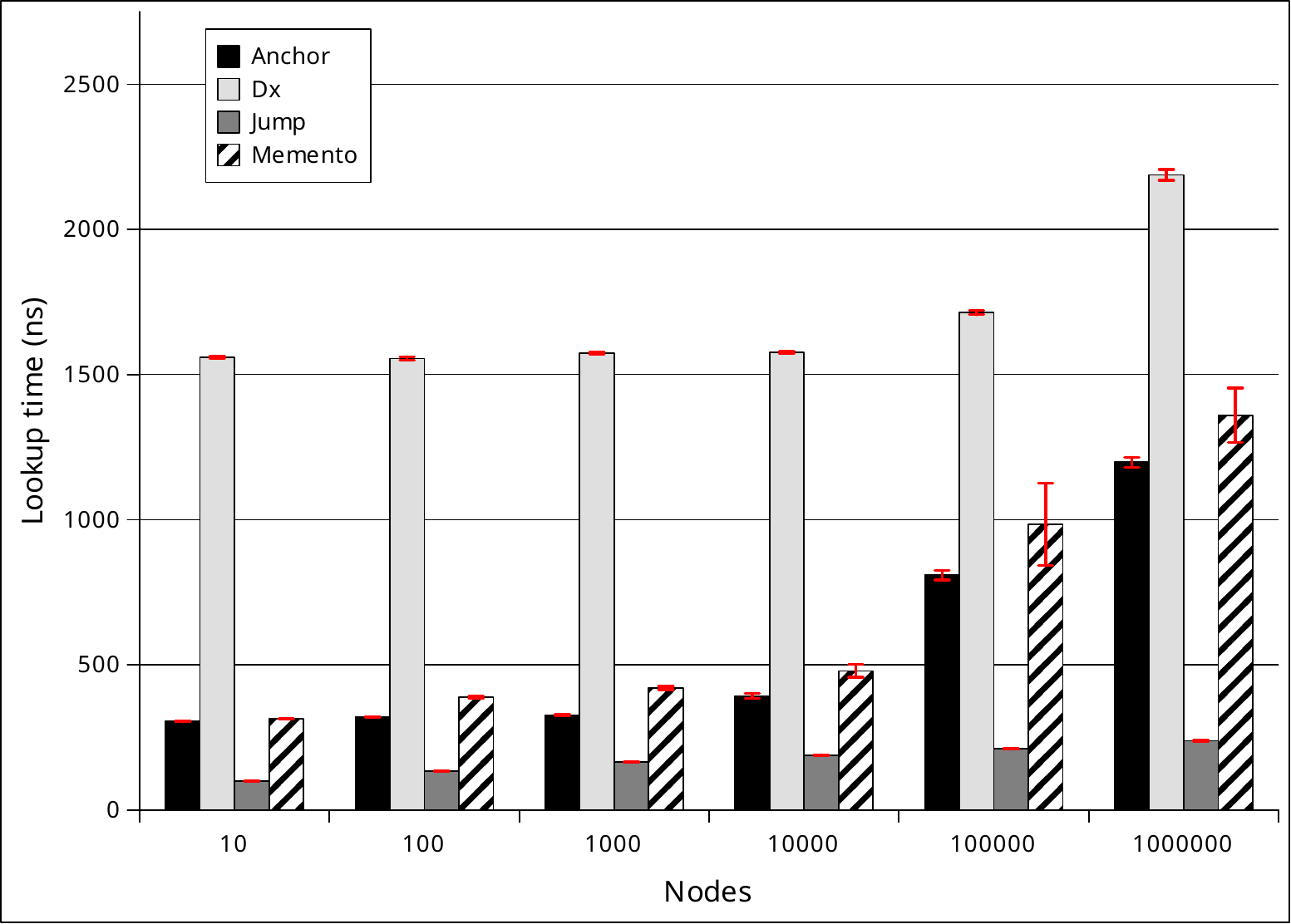}
    \caption{One-shot Removals - Lookup time (worst case)}
    \label{fig:lookup_time_wc}
\end{figure}

Concerning lookup-time, both \textit{Memento} and \textit{Jump} perform significantly better than \textit{Dx} and display better results also compared to \textit{Anchor} in the best-case scenario (Fig.~\ref{fig:lookup_time_bc}). The advantage of \textit{Memento} over \textit{Anchor} disappears in the worst-case scenario, where the latter algorithm performs slightly better on average (Fig.~\ref{fig:lookup_time_wc}). While \textit{Dx} remains the slowest of the pack. As mentioned before, \textit{Jump} does not support random removals. Therefore the reported measurement still concerns the \textit{LIFO} ordering.

\subsection{Incremental removals}
\label{sec:incremental}
In the last scenario, we simulate the incremental removal up to $90\%$ of the nodes. Fig. \ref{fig:lookup_time_incremental_bc} and Fig. \ref{fig:lookup_time_incremental_wc} report the lookup time in the best case and in the worst case, respectively. In the best case, and in contrast to the previous scenario, \textit{Dx} is by far the worst performer: this result can be explained by the fact that several additional resolution steps are introduced by that algorithm when removing nodes. \textit{Memento} and \textit{Jump} perform similarly, as the logic for both of them is similar. In the worst case, \textit{Anchor} is the worst performer, up until the removal of $65\%$ of the nodes. After that threshold, \textit{Memento} and \textit{Dx} become the slowest ones.
Losing more than $65\%$ of nodes is not a common situation. During the regular life cycle of a cluster, the number of failing nodes is likely never to exceed the threshold of $20\%$, which is the optimal performance range for \textit{Memento}.

\begin{figure}[ht!]
    \includegraphics[width=0.48\textwidth]{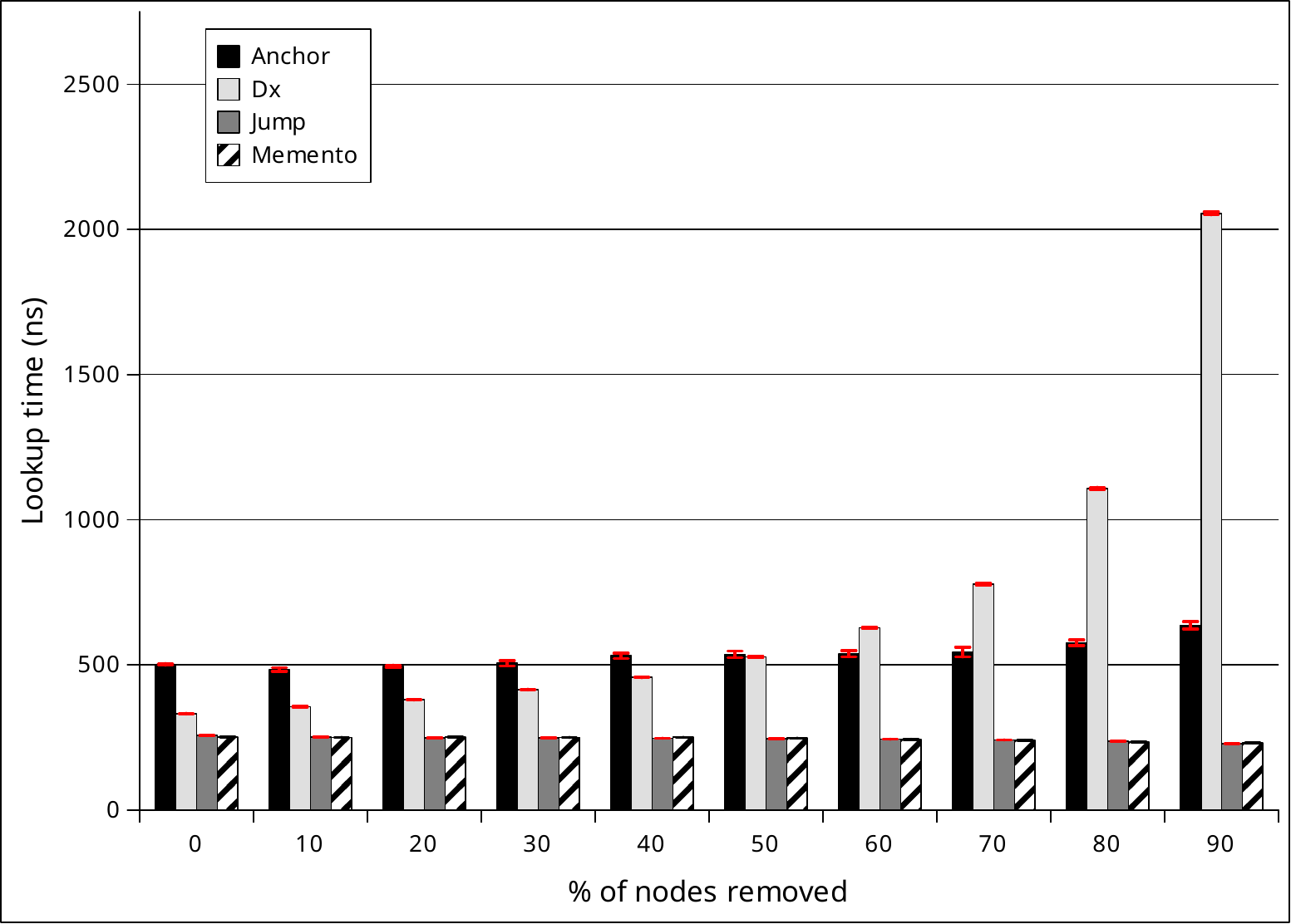}
    \caption{Incremental removals - Lookup time (best case)}
    \label{fig:lookup_time_incremental_bc}
\end{figure}

\begin{figure}[ht!]
    \includegraphics[width=0.48\textwidth]{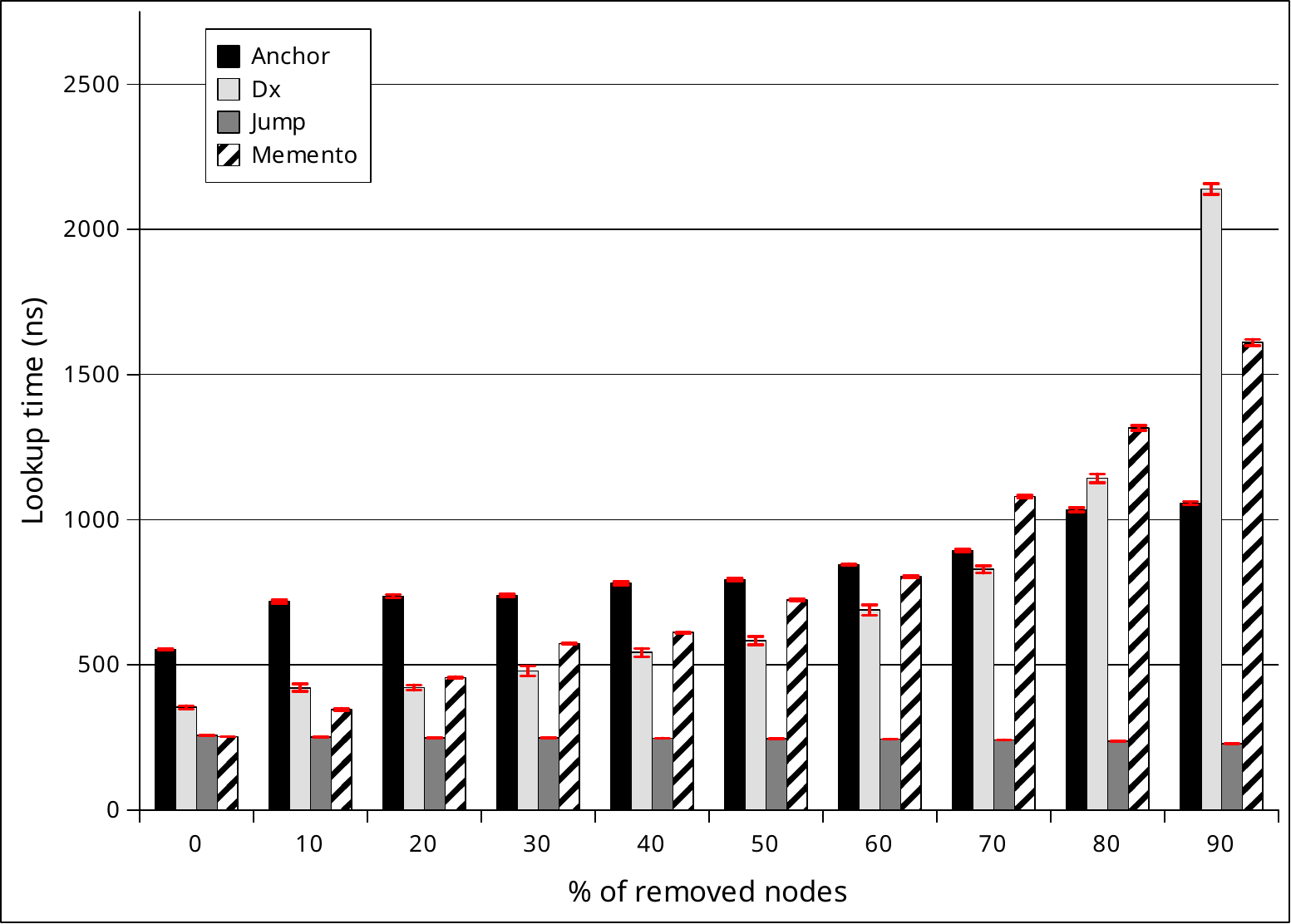}
    \caption{Incremental removals - Lookup time (worst case)}
    \label{fig:lookup_time_incremental_wc}
\end{figure}

Regarding memory usage, \textit{LIFO} removals do not affect neither \textit{Jump} nor \textit{Memento}, as no additional information is required to keep track of removed nodes (Fig. \ref{fig:memory_usage_incremental_bc}). In the worst case, analogously to the \textit{One-shot removals} scenario, \textit{Memento} exhibits similar results as \textit{Dx}, whereas \textit{Anchor} has the highest memory consumption.

\begin{figure}[ht!]
    \includegraphics[width=0.48\textwidth]{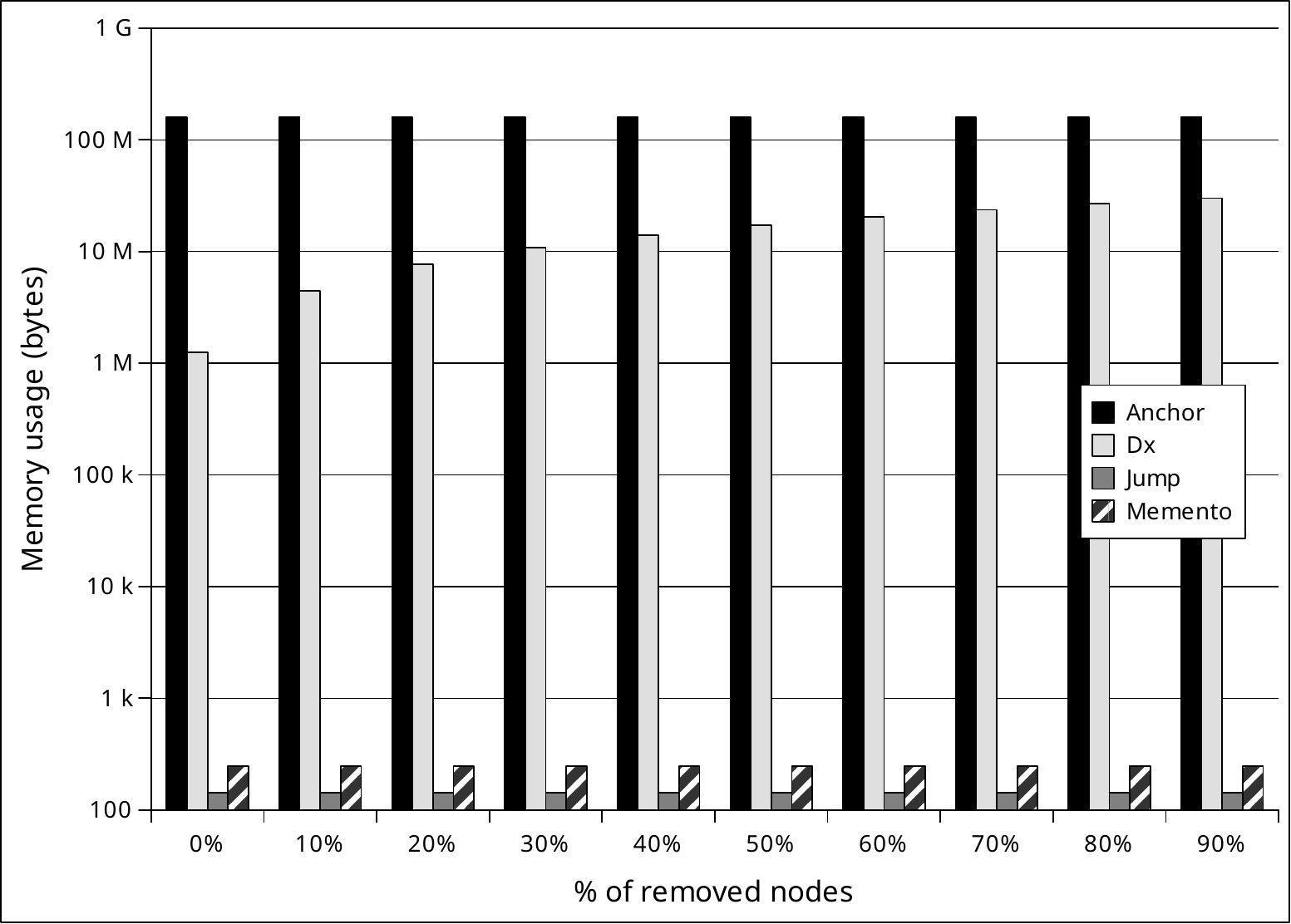}
    \caption{Incremental removals - Memory usage (best case)}
    \label{fig:memory_usage_incremental_bc}
\end{figure}

\begin{figure}[ht!]
    \includegraphics[width=0.48\textwidth]{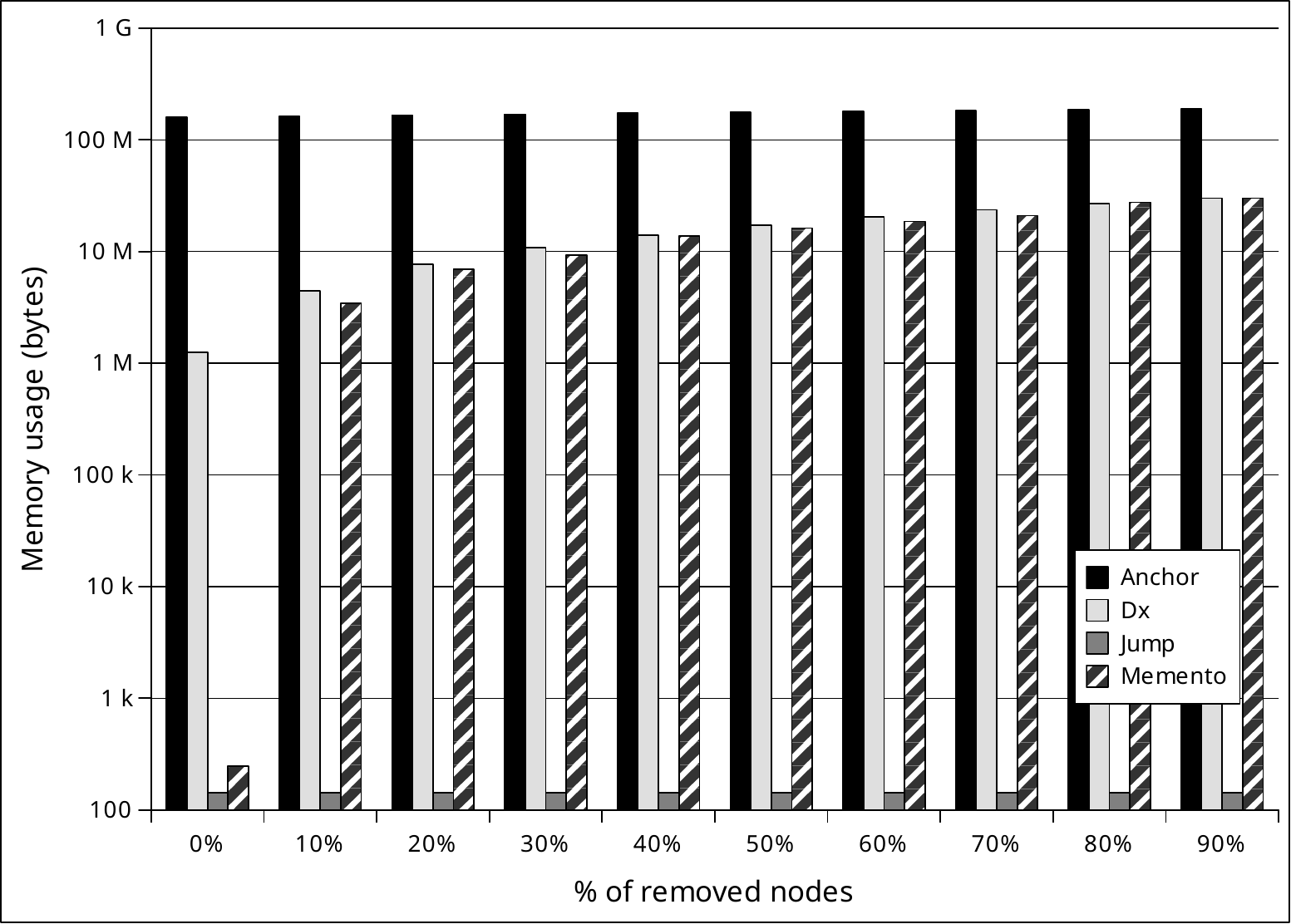}
    \caption{Incremental removals - Memory usage (worst case)}
    \label{fig:memory_usage_incremental_wc}
\end{figure}

\subsection{Sensitivity to the $\frac{a}{w}$ ratio in \textit{Anchor} and \textit{Dx}}
\label{sec:sensitivity} 
As stated at the beginning of Sec.~\ref{sec:benchmarks}, both \textit{Anchor} and \textit{Dx} require setting an upper bound for the cluster’s overall capacity. We argued that this value affects both time and memory complexity, and decided to perform our benchmarks using $\frac{a}{w} = 10$, which we consider a sensible option. To corroborate our choice, we present here a sensitivity analysis to determine how different ratios affect the performance of those algorithms.

A sensitivity analysis was conducted by keeping the number of working nodes $w$ fixed and by varying the overall capacity $a$. More specifically, we conducted experiments on the following scenarios:
\begin{itemize}
\item \textit{Stable}: we consider a cluster of one million working nodes ($w=10^6$);
\item \textit{One-shot removal of 20\% of the nodes}: from an initial size of one million working nodes ($w=10^6$), we randomly remove $20\%$ of the nodes (effectively resulting in a scenario where only $800\ 000$ nodes are operative during lookup);
\item \textit{One-shot removal of 65\% of the nodes}: from an initial size of one million working nodes ($w=10^6$), we randomly remove $65\%$ of the nodes (effectively resulting in a scenario where only $350\ 000$ nodes are operative during lookup).
\end{itemize} 

The considered $\frac{a}{w}$ ratios are $5$, $10$, $20$, $50$, and $100$ (where $w$ is always the initial size of $10^6$ nodes). Both lookup time and memory usage where measured.
Beside the \textit{stable} scenario, we choose to study one-shot removals of $20\%$ and $65\%$ of the nodes. The former can be considered as a realistic worst-case threshold for the number of failing nodes during the regular life cycle of a cluster, whereas the latter was chosen because it exhibits similar lookup performance between \textit{Memento}, \textit{Anchor}, and \textit{Dx} (as shown in Sec.~\ref{sec:incremental}). The results also report the values for \textit{Memento} as a baseline.

\begin{figure}[ht!]
    \includegraphics[width=0.48\textwidth]{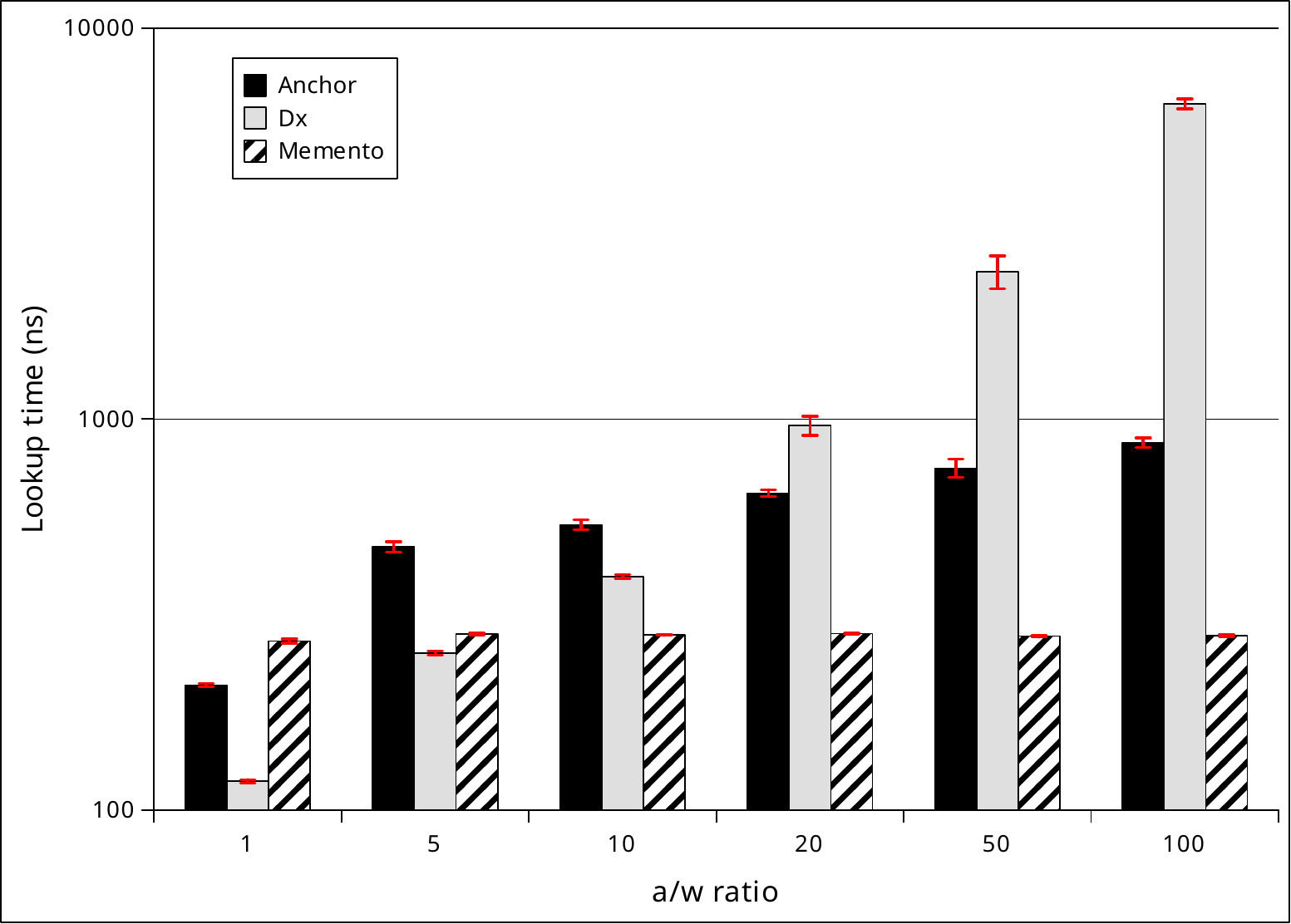}
    \caption{Sensitivity to $\frac{a}{w}$ - Lookup time (\textit{Stable})}
    \label{fig:sensitivity_lookuptime_stable}
\end{figure}

\begin{figure}[ht!]
    \includegraphics[width=0.48\textwidth]{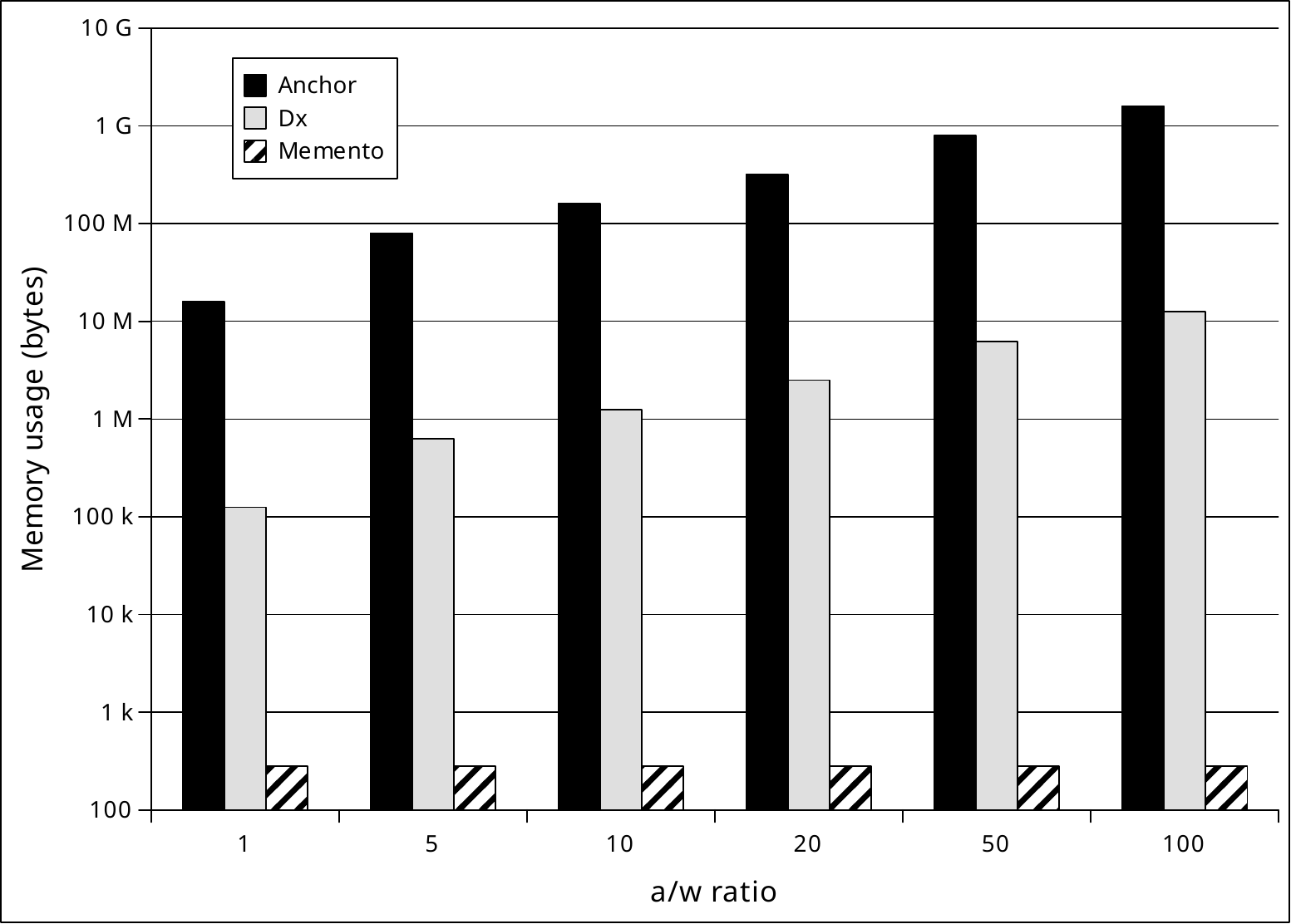}
    \caption{Sensitivity to $\frac{a}{w}$ - Memory usage (\textit{Stable})}
    \label{fig:sensitivity_memory_stable}
\end{figure}

In the \textit{Stable} scenario, the performance of \textit{Dx} is heavily affected by large over-provisioning ratios, with the lookup time growing linearly (as shown in Fig.~\ref{fig:sensitivity_lookuptime_stable}). On the contrary, \textit{Anchor} continues to perform well, although showing a logarithmic growth as the ratio increases (as expected from the asymptotic complexity). With respect to memory usage, both \textit{Anchor} and \textit{Dx} suffer considerably from large ratios, as shown in Fig.~\ref{fig:sensitivity_memory_stable}. In practice, the need for preallocating data structures based on the expected overall capacity, and the penalties incurred when overestimating the maximum scale of the system could represent a significant hindrance when deploying solutions based on \textit{Anchor} or \textit{Dx}. On the contrary, \textit{Memento} does not limit the overall capacity of the cluster and always maintains minimal memory usage and optimal time performance.

\begin{figure}[ht!]
    \includegraphics[width=0.48\textwidth]{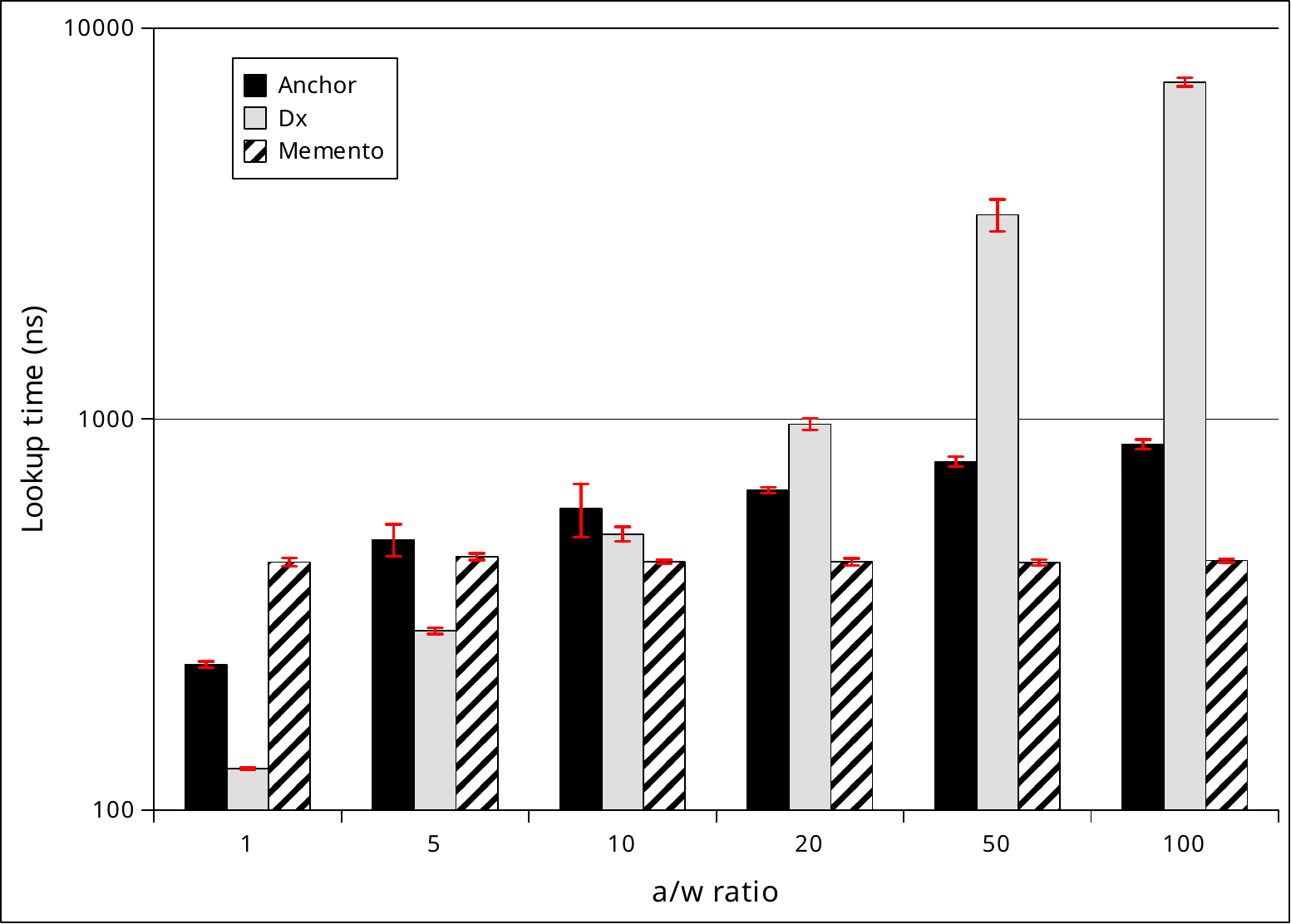}
    \caption{Sensitivity to $\frac{a}{w}$ - Lookup time (\textit{20\% removals})}
    \label{fig:sensitivity_lookuptime_20}
\end{figure}

\begin{figure}[ht!]
    \includegraphics[width=0.48\textwidth]{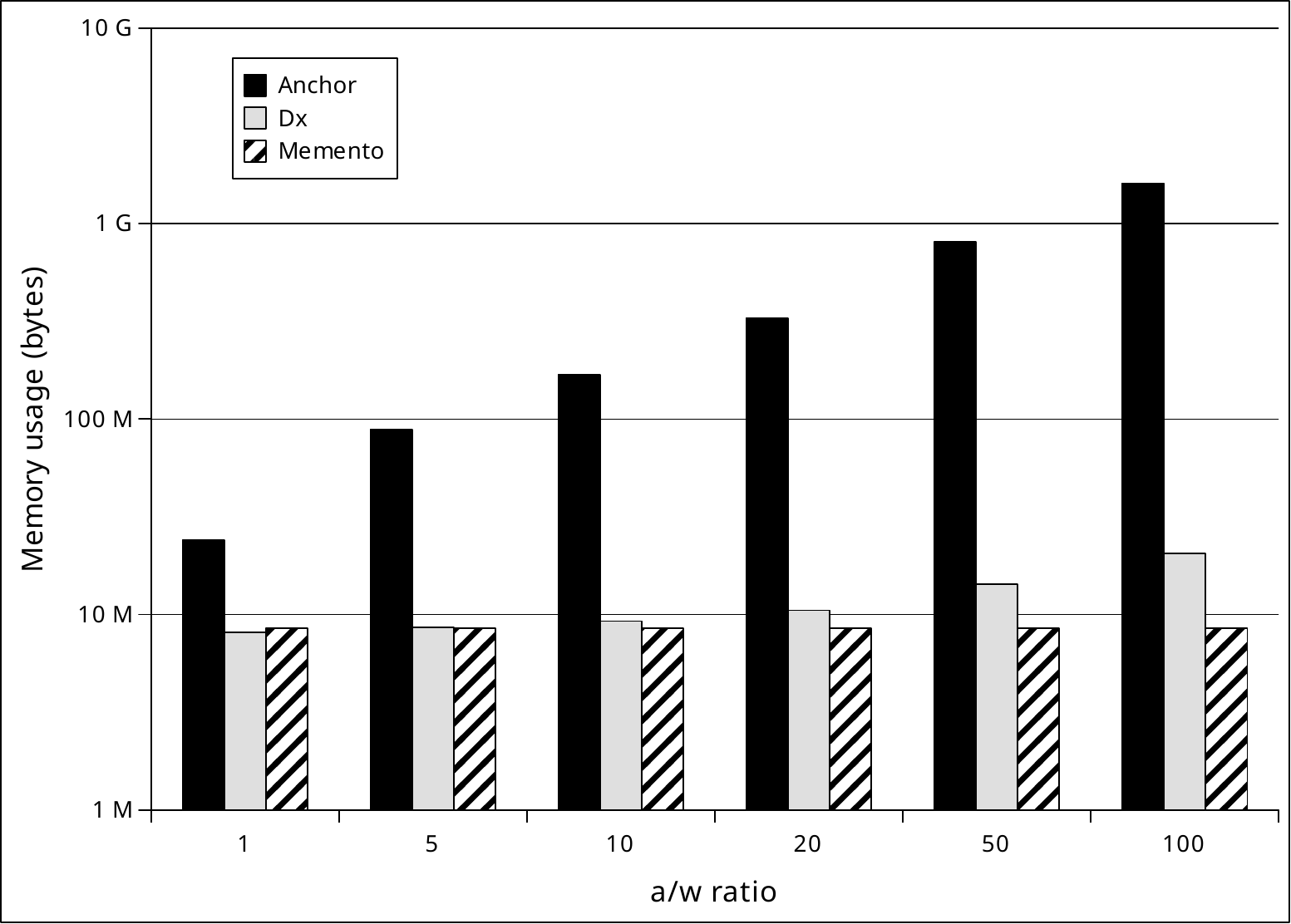}
    \caption{Sensitivity to $\frac{a}{w}$ - Memory usage (\textit{20\% removals})}
    \label{fig:sensitivity_memory_20}
\end{figure}

As expected, the performance of all the considered algorithms worsens as nodes are removed from the system. In both evaluation scenarios with one-shot removal of a percentage of the nodes, it is possible to observe a linear growth of the lookup time for \textit{Dx}, and a logarithmic growth for \textit{Anchor} as the ratio $\frac{a}{w}$ increases (Figs.~\ref{fig:sensitivity_lookuptime_20} and \ref{fig:sensitivity_lookuptime_65}). It is also worth noting that with $\frac{a}{w} = 10$ and $65\%$ of the nodes removed, the lookup performance of all three considered algorithms is almost equal. Meanwhile, with only $20\%$ of the nodes removed \textit{Anchors} is the worst performer.

\begin{figure}[ht!]
    \includegraphics[width=0.48\textwidth]{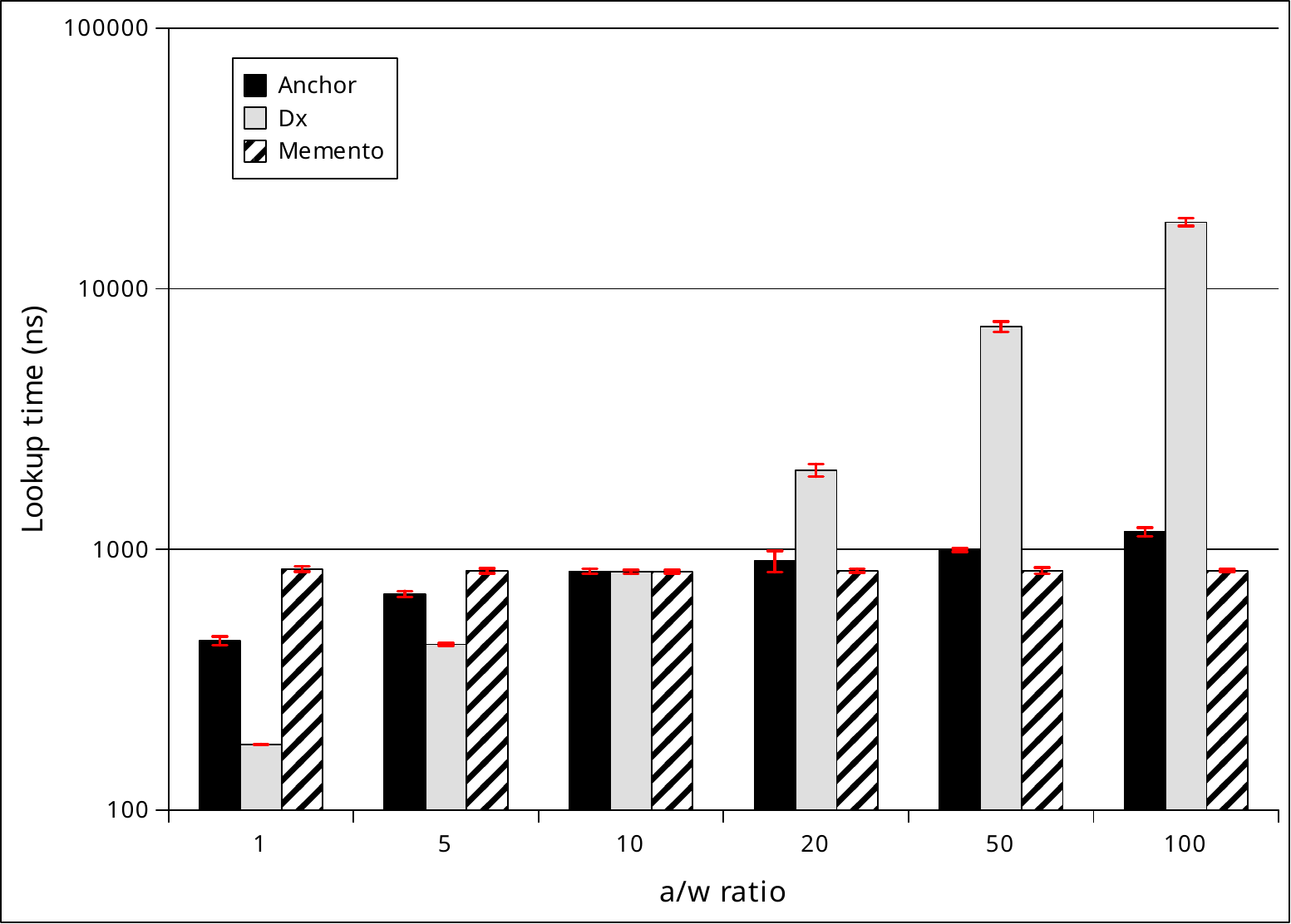}
    \caption{Sensitivity to $\frac{a}{w}$ - Lookup time (\textit{65\% removals})}
    \label{fig:sensitivity_lookuptime_65}
\end{figure}

\begin{figure}[ht!]
    \includegraphics[width=0.48\textwidth]{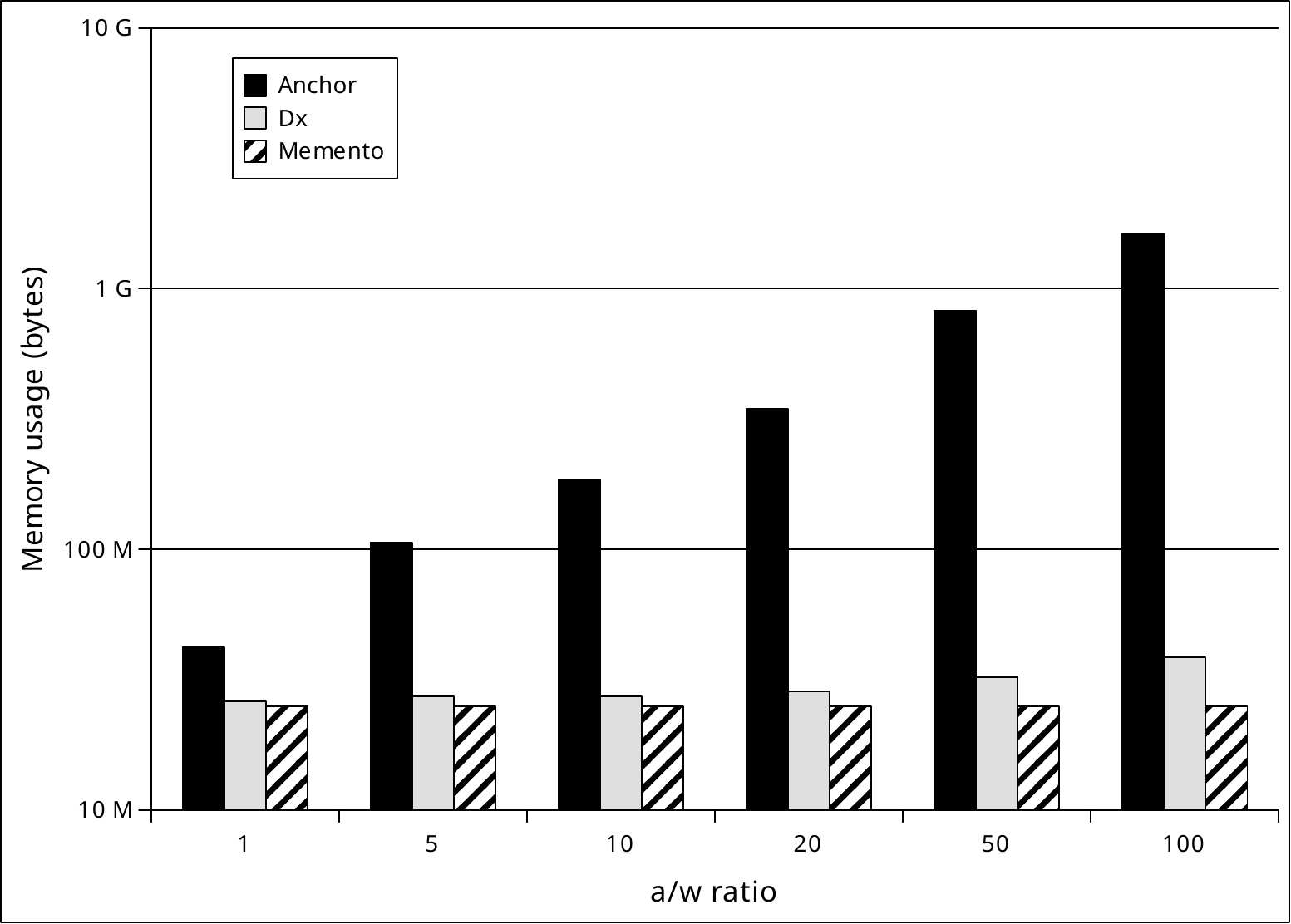}
    \caption{Sensitivity to $\frac{a}{w}$ - Memory usage (\textit{65\% removals})}
    \label{fig:sensitivity_memory_65}
\end{figure}

Concerning memory usage, it comes at no surprise that both \textit{Anchor} and \textit{Dx} show similar results in all scenarios (Figs.~\ref{fig:sensitivity_memory_20} and \ref{fig:sensitivity_memory_65}). The slight differences which can be observed as the number of removals increases (from no removals, up to 65\%) can be explained by the need for storing the order of the removals.

\subsection{Discussion}
\textit{Memento} shows to be asymptotically slower than the other considered algorithms. However, as shown in this section, it is faster in practice. The asymptotic complexity is given by two parts:
\begin{enumerate}
    \item $O(ln(n))$ which is given by the initial execution of \textit{Jump} (Alg.~\ref{alg:lookup} line $2$), and
    \item $O([ln(\frac{n}{w})]^2)$ which is given by the two nested loops (Alg.~\ref{alg:lookup} lines $3$ to $11$).
\end{enumerate} 
As discussed in Sec.~\ref{sec:jump}, \textit{Jump} achieves high speed by avoiding memory accesses and operating at CPU speed. This is evident from the comparison shown in Figs.~\ref{fig:lookup_time_incremental_bc} and \ref{fig:lookup_time_incremental_wc}, where \textit{Jump} consistently outperforms all other algorithms. The \textit{Memento} algorithm incorporates \textit{Jump} as its initial instruction and enters the nested loops only if the bucket returned by \textit{Jump} is not working. When all buckets are working, \textit{Memento} performs similarly to \textit{Jump}. However, as buckets are removed, the latter part of the complexity becomes increasingly significant. Nonetheless, \textit{Memento} exhibits better performance than \textit{Anchor} and \textit{Dx} until approximately $65\%$ of the buckets are removed. In real-world production environments, significant efforts are made to prevent failures, resulting in a low probability of concurrent failures. Additionally, if more than $20\%$ of the nodes fail simultaneously, the performance or memory consumption of the consistent hashing algorithm becomes a secondary concern. On the whole, we argue that \textit{Memento} has an edge over \textit{Anchor} and \textit{Dx} by not requiring an upper limit on the cluster size: as shown in our sensitivity analysis, it is evident how larger upper bounds have a significant impact on the memory requirements of \textit{Anchor} and on the lookup performance of \textit{Dx}.
The recommended usage pattern for \textit{Memento} involves scaling the cluster by adding and removing buckets in a \textit{LIFO} order, utilizing replacements exclusively for failures. This approach ensures that the internal structure remains empty, resulting in optimal performance and minimal memory usage. 

\section{Conclusion}

In this paper, we presented a novel consistent hashing algorithm named \textit{MementoHash}, which improves upon state-of-the-art algorithms such as \textit{JumpHash}, \textit{AnchorHash}, and \textit{DxHash}.
We provided implementation details and theoretical guarantees. \textit{MementoHash} is an advancement of \textit{JumpHash}, capable of handling random failures while maintaining minimal memory usage and optimal time performance. Unlike \textit{AnchorHash} and \textit{DxHash}, it does not require fixing the overall capacity of the cluster, thus allowing for indefinite scalability. The recommended usage pattern for \textit{Memento} involves scaling the cluster by adding and removing buckets in a Last-In-First-Out order, utilizing replacements exclusively for failures. This approach ensures that the internal structure remains empty, resulting in optimal performance and minimal memory usage. By means of an experimental evaluation, we measured and compared the behavior of those algorithms concerning lookup time and memory usage. Results show that the lookup time of \textit{MementoHash} is on par with \textit{JumpHash} and considerably better than both \textit{AnchorHash} and \textit{DxHash}. In terms of memory usage, \textit{MementoHash} ranks second only to \textit{JumpHash}. However, \textit{JumpHash} does not allow random removals, and only the last added node can be removed. To address this limitation, \textit{AnchorHash} and \textit{DxHash} keep track of all the nodes of the cluster (working and not working). To ensure balance and minimal disruption, those algorithms force an upper bound to the overall capacity of the cluster, leading to reduced scalability.
\textit{MementoHash} overcomes the limitations of \textit{JumpHash} without limiting the overall capacity of the cluster. Compared to \textit{AnchorHash} and \textit{DxHash}, our approach exhibits similar memory usage in the worst-case scenario and performs significantly better in the best-case scenario.

\section{Future work}
In order to further expand the scope of application of \textit{MementoHash} and facilitate its implementation in existing or novel distributed solutions, several avenues of future work can be explored. First and foremost, it is worth investigating the feasibility of extending \textit{MementoHash} to maintain full consistency when nodes may not unanimously agree on the removal order. Furthermore, we aim at investigating the applicability of our solution to a scenario with bounded loads \cite{https://doi.org/10.48550/arxiv.1608.01350}.

\bibliographystyle{IEEEtran}
\bibliography{bibliography}
\balance
\end{document}